\documentclass[11pt]{article}

\newcommand{\Mint}{M_{\mathsf{int}}}

\newcommand{\OPT}{\mathit{OPT}}

\newcommand{\rk}{r}

\usepackage{mdframed}
\usepackage{bbm}
\usepackage{times}
\usepackage{amssymb}
\usepackage{amsmath,amsfonts,amsthm,amscd,upref,amstext,mathtools,comment}
\usepackage{thmtools}
\allowdisplaybreaks 
\usepackage[margin=1in]{geometry}
\usepackage{tikz}
\usetikzlibrary{trees}
\usepackage{graphicx}
\usetikzlibrary{positioning, quotes}
\usepackage{caption}
\usepackage{subcaption}
\usepackage{cancel}
\usepackage{tikz}
\usepackage{comment}
\usepackage{paralist}
\usepackage[utf8]{inputenc}
\usepackage{newunicodechar}
\usepackage{algpseudocode}
\usepackage{algorithm}
\newunicodechar{ȩ}{\c{e}}

\usepackage[colorlinks=true,urlcolor=blue, citecolor=red,linkcolor=blue,linktocpage,pdfpagelabels, bookmarksnumbered,bookmarksopen]{hyperref}
\usepackage[hyperpageref]{backref}
\usepackage[noabbrev]{cleveref}
\usepackage[inkscapelatex=false]{svg}
\allowdisplaybreaks

\newcommand{\eps}{\varepsilon}

\newcommand{\chimax}{\chi_{\max}}

\newcommand{\mnotein}[1]{\todo[linecolor=teal,backgroundcolor=yellow!25,bordercolor=teal,inline]{\textbf{MZ:~}#1}}

\newcommand{\snotein}[1]

\newtheorem{theorem}{Theorem}
\newtheorem{lemma}[theorem]{Lemma}
\newtheorem{corollary}[theorem]{Corollary}

\newtheorem{claim}[theorem]{Claim}

\theoremstyle{definition}
\newtheorem{remark}[theorem]{Remark}
\newtheorem{definition}[theorem]{Definition}

\usepackage{xcolor}  

\usepackage[colorinlistoftodos,prependcaption,textsize=tiny]{todonotes}

\usepackage[colorlinks=true,urlcolor=blue, citecolor=red,linkcolor=blue,linktocpage,pdfpagelabels, bookmarksnumbered,bookmarksopen]{hyperref}
\usepackage[hyperpageref]{backref}
\usepackage{cleveref}

\title{Matroid Contention Resolution with Concentration}

\author{Stephen Arndt\thanks{Tepper School of Business, Carnegie Mellon University. } \and Benjamin Moseley\thanks{Tepper School of Business, Carnegie Mellon University. }\and Kirk Pruhs\thanks{Department of Computer Science, University of Pittsburgh. Supported in part by NSF grant CCF-2209654.} \and Michael Zlatin\thanks{Department of Computer Science, Pomona College. }}

\begin{document}

\date{}
\maketitle

\begin{abstract}
Contention resolution schemes (CRS) are a fundamental and widely applied tool for rounding fractional solutions subject to combinatorial constraints. However, the known analyses of CRS generally only guarantee lower bounds on the expected value
and concentration on the upper tail, but no concentration on the lower tail.  
Thus, CRS are generally not applicable to problems that contain covering constraints, since certifying a covering constraint holds requires  a lower tail bound.

Our main contribution is to derive lower tail bounds for the output of a particular contention 
resolution scheme,  the random-order CRS of Adamczyk and W{\l}odarczyk, which we call AW.
We   show that every linear function of the 
rounded solution attains a constant fraction of its expectation with a failure probability that is dimension-free, depending 
only on the expected value and on the number of matroids, but not on the size of the ground set.

Our analysis is driven by a new property we call \emph{strong 
$\lambda$-boundedness}, which strengthens the known $\lambda$-boundedness of AW
by providing two-sided control on how rounding propagates between elements. 
We then  introduce a random process capturing AW, a \emph{sequential selection process}, that 
 may be of independent interest. 
We prove lower tail bounds for 
any  strongly $\lambda$-bounded sequential selection process. 

To demonstrate the applicability of our new tail bounds, we apply them to two problems involving covering constraints.
The first result is an 
$O(k \log k)$-approximation for $k$-matroid intersection coloring (improving the prior 
$O(k^2)$) when the chromatic number of at least one matroid is $\Omega(k^3 \log n)$, where $n$ is the number of elements. 
The second is the first bicriteria approximation algorithm for monotone submodular maximization under $k$
 matroid constraints together with packing and covering constraints. 
\end{abstract}

\section{Introduction}
Randomized rounding of fractional solutions is one of the most powerful and 
broadly applicable techniques in the design of approximation algorithms. The 
foundational work of Raghavan and Thompson~\cite{RaghavanThompson87} showed 
that independent randomized rounding of a fractional solution to a packing or 
covering LP, combined with Chernoff-Hoeffding concentration bounds, yields 
near-optimal integral solutions with high probability for a diverse set of problems. This insight sparked a 
long line of work on \emph{dependent} randomized rounding, where the rounding 
procedure introduces carefully designed correlations among the rounded variables 
in order to enforce additional structural constraints on the output, such as 
matroid independence,  matching feasibility, or knapsack capacity,  that 
independent rounding cannot guarantee. Prominent examples include pipage 
rounding~\cite{AgeevS04}, swap rounding~\cite{swap_rounding_one_matroid}, dependent rounding for 
bipartite graphs~\cite{GandhiKPS06}, and the rounding schemes implicit in iterated 
rounding algorithms~\cite{Lau_Ravi_Singh_2011}.

A particularly significant development in this line of work was the shift from designing 
\emph{problem-specific} rounding algorithms to designing \emph{generic} rounding 
frameworks that apply to a family of problems. 
The \emph{contention resolution scheme} (CRS) framework introduced by
Chekuri, 
Vondr\'{a}k and Zenklusen~\cite{ChekuriVondrakZenklusen2014}, applies to rounding a polytope $P \subseteq [0, 1]^N$ which is the independence polytope of a down-closed family of subsets $\mathcal{I} \subseteq 2^N$.


\begin{definition}
The input to a {\bf Contention Resolution Scheme (CRS)} $\pi$ is a fractional point $\mathbf{x} \in P$ and a (not necessarily feasible) subset $ A \subseteq U$ of elements, and the output is 
 a feasible subset $S$ of  both $A$  and the support of $x$, that is
$S \subseteq A \cap 
\mathrm{support}(\mathbf{x})$ and $S \in \mathcal{I}$. 
\end{definition}

CRS underlie 
state-of-the-art approximation results for many ``packing'' problems involving submodular maximization under matroid and knapsack 
constraints~\cite{ChekuriVondrakZenklusen2014}, prophet inequalities and Bayesian mechanism 
design~\cite{FSZ16,KW12}, constrained posted pricing mechanisms~\cite{FSZ16}, 
and stochastic probing~\cite{GN13,AdamczykWlodarczyk2018}. The CRS framework has also been extended 
to other settings, in particular to online~\cite{FSZ16} and random-order~\cite{AdamczykWlodarczyk2018} settings.
The quality of a CRS is generally measured by its \emph{balance}.

\begin{definition} Let $R(\mathbf{x})$ be a set of elements formed 
 by including each element $e$ 
independently with probability $x_e$.
A CRS $\pi$ is {\bf $(b,c)$-balanced} if 
for 
every $x \in bP$ and every $e \in \mathrm{support}(x)$, it is the case that
\[
    \Pr\bigl[e \in S \mid e \in R(\mathbf{x})\bigr] \;\geq\; c
\]
when $A=R(\mathbf{x})$. 
Here the probability is taken over the random events in the formation of $R(\mathbf{x})$, and potentially random events
internal to the CRS $\pi$. 
\end{definition}
Intuitively, a CRS being balanced means that
if $x$ is not near the upper boundary of $P$ and $A$ is randomly selected using the probabilities from $x$, 
then each element of $A$ survives $\pi$'s pruning with constant probability.


\subsection{Current Limits to the Applicability of CRS}

One  generally cannot use
the algorithmic techniques in the CRS literature to obtain approximation results for problems that
contain covering constraints, which of course are  ubiquitous. 
The reason for this is that essentially all of the analyses of   CRS  in the literature only provides 
lower bounds on the expectation. 
So for example, \cite{ChekuriVondrakZenklusen2014} shows  that 
for any non-negative linear objective function $f(y) = \sum_{e} a_e y_e$, a
$(b,c)$-balanced CRS 
outputs a set $S$ where $\mathbb{E}[f(S)] \geq c \cdot f(x)$;  as a consequence  $\mathbb{E}[|S|] \geq c \cdot \sum_{i=1}^n x_i$.

The algorithm analyses in the CRS literature that provide concentration guarantees are quite limited.  
Bounding the upper tail,  the probability that $f(S)$ is greater than some value,  can be done by a Chernoff-type bound by appealing to the fact that
$R(\mathbf{x})$ is formed by independent sampling, and the fact that the CRS prunes $R(\mathbf{x})$.  Lower tail bounds, the probability that $f(S)$ is less than some value,  are more problematic, and there is essentially one such result in the literature.  When  there is a single matroid, and the linear function $f$ is just a count of the total number of  elements in the output set $S$, 
the lower tail can be bounded by a Chernoff-type bound by noting that the CRS scheme of \cite{ChekuriVZ11}  always outputs a set of size equal to the rank of $R(\mathbf{x})$, and using the concentration theorem of \cite{swap_rounding_one_matroid} that applies to submodular functions over the (unpruned) random set $R(\mathbf{x})$.

As a simple illustrative example of the need for lower tail bounds in problems that arise when trying to 
apply a CRS to problems that involve covering constraints, let us consider the following problem.~\footnote{Note that CRS are actually not the best algorithmic tool for this problem.  One can achieve better results using more elementary methods (independent rounding of the natural linear program, analyzed using standard Chernoff and union bounds). Our goal is  just to have a simple concrete problem to use to explain the issues.}

\paragraph{Representative Subhypergraph  Problem:} The input is a $k$-uniform $k$-partite hypergraph $H = (V, E)$, 
and    parameters $\alpha \le \beta  < 1$. A feasible solution is a subhypergraph $H' = (V, E')$ of $H$ with the property 
that for all vertices $v$ the degree of $v$ in $H'$ lies between  $\alpha d(v) $ and $ \beta d(v) $.

Note that the natural feasibility constraints for this problem can be viewed as consisting of $k$ collections of 
partition matroid constraints (one for each part) that in aggregate enforce the upper bounds on the vertex degrees,
and a collection of covering constraints (one for each vertex) that in aggregate enforce the lower bounds
on the vertex degrees.  

The natural way to obtain a result for this problem using a CRS would be to obtain a concentration bound on the number of hyperedges selected from the 
collection $E(v)$ of hyperedges containing each vertex $v$. As noted earlier, obtaining upper tail bounds is straightforward; the 
lower tail bounds are the issue. 
The first difficulty that one runs into when trying to prove a lower tail bound is that 
the  known CRS
introduce seemingly subtle probabilistic dependencies on the selection of various elements. 
Further these correlations may not be negative, meaning that standard Chernoff-type bounds are apparently not applicable. In fact, negative correlations are not known for any CRS in the literature, and there is no apparent structural reason to expect them.
This is further complicated by the fact that we need concentration bounds
for all subsets of the form $E(v)$, and the elements selected
in $E(v)$ may be largely determined by random events associated with hyperedges not in $E(v)$.

Thus the natural research question that we consider is:
\begin{center}
\emph{Can we extend contention resolution schemes to problems with covering
constraints?}
\end{center}
By the discussion above, this reduces to a concrete technical question about the
output of a CRS: is there a contention resolution scheme for the intersection of
$k$ matroids whose output $S$ concentrates from below on every target set $T$
(in our example, the sets $E(v)$ of hyperedges incident to each vertex $v$)?
That is, for which $|S \cap T|$ attains a constant fraction of
$\sum_{e \in T} x_e$ with high probability, for every $T$?

\subsection{Our Results}
 Our main contribution is the first lower tail concentration bound for a CRS. In particular, we prove a lower tail concentration for the random-order CRS of Adamczyk and W{\l}odarczyk~\cite{AdamczykWlodarczyk2018}, which we call AW.  In fact, we prove the stronger result that concentration holds for any linear function with coefficients in $[0,1]$. Note that taking $a_e = 1$ for $e \in T$ for some subset $T$, and $a_e=0$ otherwise, yields a lower tail bound concentration result for the
subset of elements $T$. 

\begin{theorem}[$k$ Matroid Concentration]\label{thm:main-intro}
Let $M_1, \ldots, M_k$ be matroids on a common ground set $U$, and let 
$\mathbf{x}  \in \bigcap_{i=1}^k P(M_i)$ be a fractional point in the intersection 
of their matroid polytopes. Let $a_e \in [0,1]$ for $e \in U$, and let $S$ be the output of AW, the $k$-matroid random-order CRS of~\cite{AdamczykWlodarczyk2018}. Let $Q = \sum_{e \in U} 
a_e x_e$ and $a(S) := \sum_{e\in S} a_e$. Then for all $\delta \in (0, 1/5)$,
\[
    \Pr\!\left[a(S) \leq \left(\frac{1}{5} - \delta\right) \frac{Q}{k+1} \right] 
    \leq 4\exp\!\left(-\frac{\delta^2 Q}{3(k+1)^3}\right).
\]
\end{theorem}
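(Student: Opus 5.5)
The plan is a martingale argument over the random order used by AW, driven by the two-sided control that strong $\lambda$-boundedness gives. First I recall the algorithm. Each element $e$ gets an independent uniform arrival time $t_e \in [0,1]$. It is \emph{active} with probability $x_e$ (possibly scaled down by a factor $b$, e.g.\ $b = 1/(k+1)$ or similar). Elements are processed in order of arrival, and an active $e$ is added to $S$ if $S+e$ stays independent in every $M_i$. I would model this as a \emph{sequential selection process}: reveal elements one at a time in arrival order, with filtration $\mathcal{F}_j$ generated by the first $j$ revelations. Consider the Doob martingale $Z_j = \mathbb{E}[a(S)\mid \mathcal{F}_j]$. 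Then $Z_0 = \mathbb{E}[a(S)]$ and $Z_{|U|} = a(S)$.

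The first step is the expectation lower bound $\mathbb{E}[a(S)] \geq \frac{1}{5}\cdot\frac{Q}{k+1}$ (up to the constant I settle on). This follows from the per-element balance of AW. The standard random-order CRS analysis gives $\Pr[e\in S\mid e \text{ active}] \geq c$ for some constant $c$ depending on $k$. Combining this with the activation probability $b x_e$, where $b \approx 1/(k+1)$, yields $\Pr[e \in S] \geq \frac{1}{5}\cdot\frac{x_e}{k+1}$ after optimizing the scaling, for instance with $\int_0^1 e^{-kbt}\,dt$-type bounds. Linearity then gives the bound on $\mathbb{E}[a(S)]$.

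The second step, which is the main obstacle, is bounding the martingale differences and the predictable quadratic variation. Here I would invoke strong $\lambda$-boundedness. When one element's outcome is revealed (active and added, or not), the conditional probability that each later element is selected shifts. The total weighted shift $\sum_f a_f|\Delta\Pr[f\in S]|$ is at most $\lambda \le k$ or $k+1$, because in each matroid adding or removing one element from $S$ changes the spans so that at most one other element per matroid is affected in an exchange-type way. Moreover, the two-sided control means the shift is bounded both when the element is taken and when it is not. This gives $|Z_j - Z_{j-1}| \le (k+1)$ for the $a$-weighted sum. The conditional variance of step $j$ is at most $(k+1)^2$ times the probability that element $j$ is active and selectable, so the total variance is bounded by $(k+1)^2 \sum_e x_e a_e$ up to factors, i.e.\ $O((k+1)^2 Q)$. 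The hard part is making this rigorous. The shift is not a pure function of one element, since it propagates through later random choices, so I would couple the two continuations (with and without $e$) and show that the symmetric difference of the two outputs has $a$-weight at most $k+1$. I expect this coupling to be where the strong version of $\lambda$-boundedness is essential.

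The third step applies Freedman's inequality (a Bernstein-type martingale bound) with deviation $\delta Q/(k+1)$, variance $V \lesssim (k+1)Q$ and increment bound $k+1$:
\[
\Pr\bigl[Z_{|U|} \le Z_0 - \tfrac{\delta Q}{k+1}\bigr] \le \exp\!\left(-\frac{(\delta Q/(k+1))^2}{2V + \tfrac{2}{3}(k+1)\delta Q/(k+1)}\right),
\]
which yields $\exp(-\delta^2 Q/(3(k+1)^3))$ for $\delta<1/5$. The leading factor $4$ would come from a union over a few bad events, for example splitting into positive and negative propagation parts, or separately controlling the number of active elements via Chernoff. Combined with the first step, this gives the theorem.
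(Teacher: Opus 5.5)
There is a genuine gap, and it sits in the step you yourself flag as the main obstacle: the bounded-differences claim for the Doob martingale $\mathbb{E}[a(S)\mid\mathcal{F}_j]$.

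First, a smaller point. The scheme you describe is not AW: you use scaled activation plus a greedy rule ``add $e$ if $S+e$ stays independent''. AW instead uses $\mathbf{x}$ unscaled, a convex decomposition $\mathbf{x}=\sum_i\beta_i\mathbf{1}_{I_i}$, a random controller for each element, and exchange maps $\phi_{ij}$. Strong $k$-boundedness is proved for exactly this structure, through the residual weights $w_e(t)=\sum_{i: e\in I_i^{t-1}}\beta_i$.

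More importantly, you attribute to strong $\lambda$-boundedness a property it does not have: that the weighted shift $\sum_f a_f|\Delta\Pr[f\in S]|$ of the \emph{final} selection probabilities is at most $\lambda$. Your mechanism for it (``at most one other element per matroid is affected'') is false, because decisions cascade. Take $k=2$ partition matroids (bipartite matching) and a path $e_1,\dots,e_L$ arriving in order. Whether $e_1$ is taken flips the status of every edge on the path, so your coupled outputs differ in $L$ elements, not $k+1$. The same propagation occurs in AW: accepting $e$ removes $\phi_{ij}(e)$ from $I_j$, which can block a later $f$, which changes what $f$ would have removed, and so on. A Doob increment averages such cascades over the unrevealed future, and nothing in your proposal bounds that average by $O(k)$. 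What the paper's $\lambda$-limitedness actually says is $\|\mathbf{p}(t)-\mathbf{p}(t+1)\|_1\le\lambda$ for the \emph{current} residual vector. That vector is only a lower bound on an element's acceptance probability at the moment it is processed ($q_e\ge p_e(\sigma^{-1}(e))$), and it gives no control over how future selection probabilities move.

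The missing idea is never to condition on the future. The paper applies Freedman twice and a sampling bound once:
\begin{itemize}
\item Freedman to $\overline{\mathbf{a}}(m)\cdot(\mathbf{q}-\mathbf{X})$, whose increments are genuinely bounded coin-flip deviations.
\item Freedman to the compensated increments $Z_t$ of the potential $\overline{\mathbf{a}}(m)\cdot\mathbf{p}(t)/(n-t+1)$. Here $\lambda$-boundedness gives $\mathbb{E}[Z_t\mid\mathcal{F}_{t-1}]\le 0$, $\lambda$-limitedness gives $Z_t\le\lambda$, and since $\mathbf{p}$ is nonincreasing the sum telescopes to at least $\overline{\mathbf{a}}(m)\cdot((1-\frac{(\lambda+1)m}{n-m})\mathbf{x}-\mathbf{q})$.
\item A without-replacement Bernstein bound relating $\overline{\mathbf{a}}(m)\cdot\mathbf{x}$ to $\frac{m}{n}\mathbf{a}\cdot\mathbf{x}$.
\end{itemize}
The drift of the potential accumulates at rate about $\lambda/n$ per step, which forces restriction to the prefix $m=n/(2\lambda+4)$. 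That restriction, not the expectation of $a(S)$, is the source of the constant $\frac{1}{5}$, via $\frac{m}{n}(1-\frac{(\lambda+1)m}{n-m})=\frac{1}{4\lambda+6}$. The paper explicitly does not obtain concentration around $\mathbb{E}[a(S)]$, which is what your route would deliver; that it would prove more than the paper is a sign the missing step is the crux.

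Your variance accounting also fails on its own terms. Charging each step $(k+1)^2$ times the probability that element $j$ is active sums to $(k+1)^2\sum_j x_j$, not $O(Q)$, because elements with $a_j=0$ can still move the $a$-weighted sum. Even $(k+1)^2Q$ would only give $(k+1)^4$ in the exponent, and your third step silently switches to $V\lesssim(k+1)Q$. The paper gets a $Q$-dependent variance from Bhatia--Davis applied to $Z_t\in[-\frac{(\lambda+1)\,\overline{\mathbf{a}}(m)\cdot\mathbf{x}}{n-m},\,\lambda]$ with conditional mean at most $0$. Summed over the prefix this gives $\lambda^2\,\overline{\mathbf{a}}(m)\cdot\mathbf{x}$, roughly $\lambda^2Q/(2\lambda+4)$. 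This is exactly where the ``killed by little mass'' side of strong $\lambda$-boundedness enters, and your argument never uses it.
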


Several features of this bound deserve emphasis. The bound is 
\emph{dimension-free}: the failure probability depends only on $Q$, $k$, 
and $\delta$, but not on $n = |U|$, the size of the ground set. This 
mirrors the dimension-free concentration bounds of swap 
rounding~\cite{swap_rounding_one_matroid} and is essential for applications where $Q$ 
may be much smaller than $n$. The bound is meaningful and strong when 
$Q \gg k^3$: the failure probability decays exponentially in $\delta^2 
Q/k^3$, so for any fixed $k$ the bound becomes exponentially strong as 
$Q$ grows.

Note that Theorem~\ref{thm:main-intro} only bounds the probability of falling
below $\frac{Q}{5(k+1)}$, a factor of $5$ smaller than the expectation bound
of $\frac{Q}{k+1}$ proven in~\cite{AdamczykWlodarczyk2018}; that is, the
output attains a constant fraction, rather than all, of its
expectation with high probability.
This was necessitated by the fact that our analysis only applies to a prefix of
the elements in the random ordering used by the AW algorithm, as the dependencies 
affecting the later elements become too large. 
As we shall see, for many \emph{covering} applications this is sufficient.

\subsection{Applications of Theorem \ref{thm:main-intro} }

Given the ubiquity of covering constraints, we expect that extending CRS to problems with both packing and covering constraints should significantly broaden their reach. 
To support this expectation, we apply  Theorem~\ref{thm:main-intro} to several natural problems involving both packing and covering constraints. To make this more accessible on a first read, 
we largely omit background definitions, and only state the results informally. Background definitions and formal statements of the results
can be found in the later technical sections.

\paragraph{Matroid Intersection Coloring.}
An instance of this problem consists of $k$ matroids $M_1, \ldots, M_k$ on a common ground set $U$.
A feasible solution is a partition of $U$ into parts that are each independent in all matroids. 
The objective is to minimize the number of parts. 
 The problem is NP-Hard, even for two matroids~\cite{BercziSchwarcz2021, gmpm_hard}. There are non-constructive existential results using topological fixed-point arguments~\cite{ab06, AharoniBergerGuoKotlar2025, BergerGuo2025}, and there are approximation  algorithms, some for special types of matroids~\cite{part_decomp_1, part_decomp_2, part_decomp_gammoid, arndt2025}. The best previously known approximation ratio for $k$ general matroids was $O(k^2)$~\cite{ArndtMPS26}. Theorem~\ref{thm:main-intro} yields the following improvement to the best known achievable approximation ratio when the optimal objective value is sufficiently large.

\begin{theorem}\label{thm:coloring}
There is a randomized $O(k \log k)$-approximation algorithm for matroid intersection coloring for instances 
where at least one of the matroids has chromatic number 
 $\Omega(k^3\log n)$.
\end{theorem}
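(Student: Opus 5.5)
We will follow the standard ``peel off one large common independent set at a time'' approach, using Theorem~\ref{thm:main-intro} as the engine for each round. Let $\chi_i$ denote the chromatic number of $M_i$ and $\chimax = \max_i \chi_i$. Any feasible coloring uses at least $\chimax$ colors, so $\OPT \ge \chimax$. By Edmonds' covering theorem, $\chi_i = \max_{A \subseteq U} \lceil |A|/\rk_i(A) \rceil$. Hence for every set $W$ of still-uncolored elements, the uniform point $\mathbf{x} = \frac{1}{\chimax}\mathbf{1}_W$ satisfies $x(A) \le |A|/\chimax \le \rk_i(A)$ for all $A \subseteq W$ and all $i$, so $\mathbf{x} \in \bigcap_i P(M_i)$. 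This restricted-point trick works because deleting elements never increases any $\chi_i$.

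\textbf{Phase 1 (bulk).} While $|W| \ge C k^3 \chimax \log n$, we take $\mathbf{x} = \mathbf{1}_W/\chimax$ and $a_e = 1$ on $W$, and run AW to obtain a common independent set $S$, which becomes a new color class. Here $Q = |W|/\chimax \ge C k^3 \log n$. Theorem~\ref{thm:main-intro} with $\delta = 1/10$ then gives
\[
|S| \ge \frac{Q}{10(k+1)} = \frac{|W|}{10(k+1)\chimax}
\]
with failure probability at most $4\exp(-Q/(300(k+1)^3)) \le n^{-3}$ for a large enough constant $C$. Thus each round removes at least a $1/(10(k+1)\chimax)$ fraction of $W$ with high probability. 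The number of rounds needed to shrink $W$ from $n$ down to the threshold $Ck^3\chimax\log n$ is therefore
\[
O\!\left(k\chimax \log \frac{n}{k^3 \chimax \log n}\right).
\]
This alone only gives $O(k\chimax\log n)$ colors, which is not good enough. We fix this by also exploiting $|U| \le \sum_i$-ranks: every element lies in the ground set, so $n \le \chimax \cdot \rk_i(U)$, and hence the multiplicative shrinking starts from $|W| \le \chimax \rk(U)$. Instead of counting elements, the right potential is to stop once $|W|/\chimax$ drops below $Ck^3\log n$, since this is exactly the point at which Theorem~\ref{thm:main-intro} stops giving high-probability guarantees.

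\textbf{Main obstacle.} The main difficulty is keeping the color count at $O(k\log k)\cdot\chimax$ rather than $O(k \log n)\chimax$. We plan to handle this with a doubling/epoch argument. Each epoch halves $Q = |W|/\chimax$ using $O(k\chimax)$... but summed over $\log(n/\chimax)$ epochs this still costs a $\log$ factor. So instead we use the hypothesis $\chimax = \Omega(k^3\log n)$ as follows. Split $W$ into epochs indexed by $j$, where $Q \in [2^{j}, 2^{j+1})$ in terms of $\chimax$-normalized mass. In epoch $j$ we recompute $\chi$ of the current residual instance: $\chi(W) \le \lceil |W|/\min\text{rank}\rceil$ decreases as $W$ shrinks. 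We re-normalize by the current $\chi_{\max}(W)$ rather than the original one, so that the number of rounds per epoch is $O(k)$ times the current chromatic number. This makes the total a geometric-type sum.

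\textbf{Phase 2 (tail).} Once $Q < Ck^3\log n$, i.e.\ the residual chromatic number is $O(k^3\log n) = O(\chimax)$ by hypothesis, we color the remainder with the prior $O(k^2)$-approximation of~\cite{ArndtMPS26}. We would then argue that the residual chromatic number has dropped by a $\mathrm{poly}(k)$ factor, so that this step costs only $O(k^2)\cdot\chimax/k = O(k\chimax)$. The $\log k$ factor comes from the $O(\log k)$ halving epochs needed to reduce the residual from $\chimax$ to $\chimax/k$. Each of those epochs uses $O(k\chimax)$ rounds. The final step is a union bound over the at most $n$ rounds, each failing with probability at most $n^{-3}$. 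I expect the delicate part to be justifying that the residual chromatic number (not just the residual size) drops geometrically, since Theorem~\ref{thm:main-intro} only controls cardinality.
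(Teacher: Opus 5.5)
There is a genuine gap. Every application of Theorem~\ref{thm:main-intro} in your proposal uses $a=\mathbf 1_W$, so all it certifies is that the residual \emph{cardinality} $|W|$ shrinks geometrically. By Edmonds' formula, $\chi(M_j|_W)=\max_{A\subseteq W}\lceil |A|/r_j(A)\rceil$ is governed by the densest subset. A lower bound on $|S\cap W|$ says nothing about $|S\cap F|$ for a particular dense flat $F$. So nothing in the proposal establishes $\chi(M_j|_W)\le\chimax/k$, and that is exactly what the final call to the $O(k^2)$-approximation needs in order to cost $O(k\chimax)$. Your statement that the residual chromatic number is $O(\chimax)$ is vacuous, since it never exceeds $\chimax$. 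The quantity $\lceil|W|/\min\text{rank}\rceil$ also does not capture the densest subset.

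The cardinality-driven phase is too costly on its own as well. Shrinking $|W|$ from $n$ to $\Theta(k^3\chimax\log n)$ at rate $1/(10(k+1)\chimax)$ per round uses $\Theta(k\chimax\log(n/\chimax))$ colors, which is an $O(k\log n)$ factor when $\chimax\approx k^3\log n$. The epoch/renormalization sketch does not remove this. The issue you flag in your last sentence is the heart of the proof, not a detail, and the hypothesis $\chimax=\Omega(k^3\log n)$ is needed for something other than bounding the number of rounds.

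The missing ideas are these. Theorem~\ref{thm:main-intro} holds for \emph{every} coefficient vector in $[0,1]^U$, so the paper applies it sample by sample with $a=\mathbf 1_{F\setminus(I_1\cup\dots\cup I_{p-1})}$, for each flat $F$ of each $M_j$. This suffices because any dense $A\subseteq W$ lies in the flat $\operatorname{span}_{M_j}(A)$ of the same rank. For a flat of rank $m$, while more than $m\chimax/k$ of its elements are uncovered, the per-sample failure probability is $4\exp(-\Omega(m/k^4))$. This is below $1/100$ only once $m\ge R=C_Rk^4$. An adaptive Chernoff bound on the number of bad samples among $L=20\chimax\lceil(k+1)\ln k\rceil$ independent AW runs then gives $|F\setminus T|\le m\chimax/k$, with $T=I_1\cup\dots\cup I_L$, except with probability $\exp(-\Omega(m\chimax/k^3))\le n^{-20m}$. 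That is where $\chimax\ge Ck^3\log n$ is used: to beat the at most $n^m$ flats of rank $m$ in a union bound.

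Dense sets of rank below $R$ are not handled by concentration at all. The algorithm repeatedly deletes a maximum-density flat of $M_j|_W$ until $\chi(M_j|_W)\le\chimax/k$. On the good event, each deleted union $S^j$ has density above $\chimax/k$ and hence $r_j(S^j)<R$, so $r_{\Mint}(\bigcup_j S^j)<kR$. Greedy set cover with a $k$-approximate maximum-coverage oracle then colors this union with $O(k\log(kR))\cdot\OPT=O(k\log k)\cdot\OPT$ classes. Only then does the $O(k^2)$ algorithm on the remainder cost $k^2\cdot\chimax/k\le k\cdot\OPT$.

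Your intuition that $\Theta(k\chimax\log k)$ samples shrink things by a factor of $k$ matches the paper's Phase~1. Without the per-flat union bound and the separate low-rank phase, however, the argument does not close.
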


\paragraph{Monotone Submodular Maximization with Matroid, Packing, and Covering Constraints.} An instance of this problem consists of a monotone submodular function $f$ on a ground set $U$, $k$ matroids on $U$, packing constraints, and covering constraints.
A feasible solution is a subset of $U$ that is independent in all of the matroids, and satisfies the packing and covering constraints. The objective
is to maximize $f$. 
Monotone submodular maximization is a well-studied optimization problem receiving considerable attention over the past few decades \cite{FisherNW1978,NemhauserW1978,CalinescuChekuriPalVondrak2011,LeeSV2010}.  Under a single matroid constraint, the problem admits an optimal polynomial-time $(1-1/e)$-approximation via a continuous greedy algorithm, and pipage rounding or swap rounding~\cite{CalinescuChekuriPalVondrak2011}. Under a single matroid constraint and a constant number of packing and covering constraints, the problem also admits an optimal polynomial-time $(1-1/e-\eps)$-approximation for all fixed $\eps > 0$, allowing violation of the covering constraints by a factor of $1-\eps$ \cite{MizrachiSSU19}. Under the intersection of $k$ matroid constraints, but no packing and covering constraints, the best known polynomial-time approximation guarantee is $\frac{1}{k+\eps}$ for every fixed $\eps > 0$ \cite{LeeSV2010}, and the standard greedy algorithm gives a $\frac{1}{k+1}$-approximation \cite{FisherNW1978}.

As in~\cite{ChekuriVondrakZenklusen2014}, we work in what they called the ``loose constraints'' setting.
When all constraints are normalized so that the left-hand-side coefficients are in $[0,1]$, define the \textbf{capacity} of an instance to be the minimum right-hand-side value over the $p$ packing constraints, and the \textbf{width} of a covering instance to be the minimum right-hand-side value over the $c$ covering constraints.
 Theorem \ref{thm:main-intro} yields a bicriteria approximation result for monotone submodular maximization under $k$ matroid constraints,  packing constraints with logarithmic capacity, and covering constraints with logarithmic width.

\begin{theorem}\label{thm:submax_main_temp}
Consider the problem of maximizing a monotone submodular function subject to $k$ matroid constraints, packing constraints of capacity at least $\Omega(\log p)$, and covering constraints of width at least $\Omega(k^3 \log c)$. For this problem, there is a randomized polynomial-time algorithm that on instances where there is a feasible solution produces a solution that satisfies the matroid constraints, satisfies the packing constraints with high probability in $p$, violates the covering constraints by at most a factor of $O(k)$ with high probability in $c$, and
approximates the optimal objective
within a factor of $\left(\frac{1-1/e-\eps}{k+1}\right)$ in expectation for all fixed $\eps > 0$.
\end{theorem}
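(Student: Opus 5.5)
The plan follows the standard CRS pipeline of \cite{ChekuriVondrakZenklusen2014}: a fractional relaxation, then rounding with AW, with Theorem~\ref{thm:main-intro} handling the covering constraints. Let the packing constraints be $\sum_e b^j_e y_e \le B_j$ and the covering constraints $\sum_e a^i_e y_e \ge C_i$, all with coefficients in $[0,1]$.

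First I will solve the relaxation. Write $P = \bigcap_{i=1}^k P(M_i) \cap \{y : \text{packing and covering constraints hold}\}$, which is a down-monotone-in-packing polytope intersected with covering halfspaces and admits a separation oracle. Since $P$ is not down-closed, I will run the measured/continuous greedy variant for general solvable polytopes (as in \cite{MizrachiSSU19} and related work). This yields $x \in P$ with $F(x) \ge (1-1/e-\eps)\,\OPT$, where $F$ is the multilinear extension. The step needs care, because continuous greedy with covering constraints only gives this guarantee when $\OPT$'s indicator lies in $P$; it does, since a feasible solution exists. I will then scale $x' = x/2$ (or $bx$ for a suitable constant $b$) so the packing constraints have slack. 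The covering values drop only by a constant factor.

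Next, I will round: sample $R(x')$ and apply AW for the $k$ matroids, obtaining $S$ independent in all matroids. For the objective, AW is a monotone $(b, \Omega(1/(k+1)))$-balanced CRS (for $b=1$ its balance is $1/(k+1)$ \cite{AdamczykWlodarczyk2018}). The standard CRS theorem for monotone submodular $f$ then gives $\mathbb{E}[f(S)] \ge \frac{1}{k+1} F(x)$. To keep the factor $\frac{1-1/e-\eps}{k+1}$, I will not scale for the objective at all. Instead I will run AW on $x$ itself and handle packing directly: $S \subseteq R(x)$, and $\mathbb{E}[\sum_e b^j_e \mathbf{1}_{e\in R(x)}] \le B_j$ with independent sampling. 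By Chernoff, $\sum_{e\in S} b^j_e \le (1+\eps')B_j$ with probability $1 - 1/\mathrm{poly}(p)$ once $B_j = \Omega(\log p/\eps'^2)$, and a union bound over the $p$ constraints finishes this. If exact packing satisfaction is required, I will use $x' = (1-\eps)x$ and absorb the loss into $\eps$, since $F((1-\eps)x) \ge (1-\eps)F(x)$ by concavity along rays.

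For covering, I will apply Theorem~\ref{thm:main-intro} with $a_e = a^i_e$, so $Q_i = \sum_e a^i_e x_e \ge (1-\eps) C_i$, and fix $\delta = 1/10$. Then $\sum_{e\in S} a^i_e \ge \frac{Q_i}{10(k+1)} = \Omega(C_i/k)$ except with probability $4\exp(-Q_i/(300(k+1)^3))$. This is at most $1/\mathrm{poly}(c)$ when $C_i = \Omega(k^3\log c)$ with a large enough constant, and a union bound over the $c$ covering constraints gives the $O(k)$ violation. The guarantees on the objective, packing and covering are each stated separately, as in the theorem (expectation for the objective, high probability for the constraints), so no conditioning is needed.

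The main obstacle is the fractional step: obtaining $F(x) \ge (1-1/e-\eps)\OPT$ over a polytope with covering constraints in polynomial time. I would first try to invoke the continuous greedy for arbitrary solvable down-closed-free polytopes via the multilinear relaxation, guessing the heavy elements as in \cite{MizrachiSSU19} if needed. An alternative I would also try is to take the convex hull with the feasibility LP. The loss of $\eps$ there accounts for the ``for all fixed $\eps$''.
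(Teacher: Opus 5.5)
Your proposal is correct and is essentially the paper's proof. The paper runs continuous greedy over $(1-\eps/2)P$, which is equivalent to your $(1-\eps)x$ fallback; you do need that fallback, since the statement requires the packing constraints to hold exactly. It then rounds with AW, bounds packing by Bernstein on $R(\bar{x}) \supseteq S$, and bounds covering via Theorem~\ref{thm:main-intro} plus a union bound, taking $\delta=\eps$ where you take $\delta=1/10$.

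One small fix: justify $\mathbb{E}[f(S)] \geq F(x')/(k+1)$ by citing AW's direct guarantee for monotone submodular $f$ (Corollary~\ref{cor:AW_multi}). Do not cite the CVZ theorem for monotone CRSs, because AW's random-order analysis does not establish that their scheme is monotone in the CVZ sense.
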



As far as we know, this is the first  approximation algorithm for monotone submodular maximization in the setting of $k$ matroid constraints, packing constraints, and covering constraints. As we shall explain later, the $O(k)$ violation of the covering constraints 
is  essentially best possible provided $NP \not \subseteq BPP$. 

Applying this result to the Representative Subhypergraph problem from the introduction 
yields a randomized
polynomial-time algorithm that, on instances where there is a feasible solution,
and where the lower degree bounds are logarithmic,  outputs a subhypergraph $H'$ that satisfies the upper degree bounds,
and that
violates the lower degree bounds by at most an $O(k)$ factor (Appendix \ref{subsec:submax_hyper}).~\footnote{But again note that comparable or better results are obtainable by standard Chernoff-like tail bounds and union bounds.}

\subsection{Technical Overview}
\label{subsec:technicaloverview}

Here we provide an overview of the proof of our main result in Theorem \ref{thm:main-intro}.

\paragraph{The CRS AW:}  Let us first describe the scheme for a single matroid $M$. The input
is a point $\mathbf{x}$ in the independence polytope $P(M)$ of $M=(U, {\mathcal I})$.
Initially, the scheme decomposes $\mathbf{x}$ into a convex combination
$\mathbf{x} = \sum_{i} \beta_i \mathbf{1}_{I_i}$ of incidence vectors of independent sets
$I_i$ of $M$. The scheme maintains
a mapping $\phi_{ij}$ between each pair of independent sets $I_i$ and $I_j$, that we call exchange mappings,
with the property that $I_j - \{\phi_{ij}(e) \} \cup \{e\}$ is independent for all $i, j$ and elements $e \in I_i$.
These independent sets and exchange mappings are updated as the scheme executes. For each element $e$, the scheme selects a random
independent set $I_i$ containing $e$ as its ``controller.''
The scheme then generates the random set $R(\mathbf{x})$, (here it is more convenient here to think of this being done internally to the scheme),
and considers the elements in $R(\mathbf{x})$ in a random order $\sigma$. 

When an element $e$ is being considered, it is added to the output set
$S$ if and only if $e \in R(x)$ and $e$ is still a member of its controller independent set $I_i$. If $e$ is added to $S$ then 
each independent set $I_j$ is updated by adding
$e$ and deleting $\phi_{ij}(e)$. These independent set updates may cause
an element $f=\phi_{ij}(e)$ to be kicked out of its controller $I_j$, and thus killing $f$'s chances of being added to $S$ later. 
For multiple matroids, the single matroid algorithm is run for each matroid, and the final output is just the intersection
of the $S$'s for the individual matroids.

\paragraph{Sequential Selection Process.}
In our analysis we abstract away the matroid structure and the CRS entirely, and analyze 
a general object that we call a \emph{sequential selection process}.  The process starts
with a ground set $U$ and an arbitrary point $\mathbf{x} \in [0, 1]^U$.
 The process considers the elements of $U$  in a uniformly random order. The process initializes a residual vector
 $\mathbf{p} $ to be $\mathbf{x}$, and updates  $\mathbf{p}$ over time. Let  $\mathbf{p}(t)$ be the value of this residual vector
 right before the $t^{th}$ element is considered. 
 When an element $e$ is considered at step $t$, it is selected into the
output only if $e \in R$, and in that case with probability at least
$p_e(t)/x_e$; hence unconditionally, $e$ is selected with probability
at least $p_e(t)$. After 
this step, the entries in the residual vector are decreased in some arbitrary way. 
We write $\mathbf{X}$ for the $\{0,1\}$ characteristic vector of the output. The 
CRS of~\cite{AdamczykWlodarczyk2018} is a special case of this process, so any lower tail 
bound we prove for the abstract process applies to it.

\paragraph{Strong $\lambda$-boundedness.}
The main technical difficulty in obtaining lower tail concentration bounds for a sequential selection
process is that it is adversarial: the drops $\mathbf{\Delta}(t) := \mathbf{p}(t) - \mathbf{p}(t+1)$ are chosen adaptively, so the
acceptance probabilities are history-dependent and the correlations among
selections are neither negative nor otherwise structured. This rules out
Chernoff bounds, negative association, and the submodular concentration result of \cite{swap_rounding_one_matroid}. Our first contribution is
thus definitional: identifying the property of a sequential selection process
that makes concentration possible.
Adamczyk and W{\l}odarczyk~\cite{AdamczykWlodarczyk2018} showed that in their CRS AW it is the case that any fixed element 
can be killed only by a set of elements with total fractional mass in $\mathbf{x}$ of at most $\lambda$, where $\lambda = 1$ for a single matroid and $\lambda = k$ for $k$ matroids. We will call this property 
$\lambda$-boundedness.~\footnote{This doesn't exactly match the definition of $\lambda$-bounded in \cite{AdamczykWlodarczyk2018}, but this definition makes the exposition cleaner.}
The property of $\lambda$-boundedness sufficed for the expectation guarantee achieved in \cite{AdamczykWlodarczyk2018}, but provably 
is not sufficient to imply
the sort of concentration result we seek for a sequential selection process. The issue is that the fractional mass which kills each element could heavily overlap, so $\lambda$-boundedness does not provide any
worst-case control on the 
\emph{aggregate} decrease to the components of the residual vector $\mathbf{p}$.
The danger is that an element considered early in the random 
order could zero out most components in $\mathbf{p}$, and if this happens with some decent probability, then concentration is unachievable.

Our key insight is that the CRS AW has a stronger property, that we will call
\emph{ $\lambda$-limitedness}, which limits the amount of decrease that a single element 
can cause in aggregate to the entries of  the residual vector $\mathbf{p}$. A sequential selection process is  
{\bf  $\lambda$-limited}
if  at every step, the 1-norm decrease to the residual vector, namely
$\|\mathbf{p}(t) - \mathbf{p}(t+1)\|_1$, is at most $\lambda$. Intuitively, this 
says each accepted element kills at most $\lambda$ fractional mass in total, 
ruling out the dangerous case above. 
The two properties of $\lambda$-boundedness and $\lambda$-limitedness are now symmetric. Each element is killed by at most $\lambda$ fractional mass and kills at most 
$\lambda$ fractional mass. We show that one can obtain concentration results for
every sequential selection process that is {\bf strongly $\lambda$-bounded}, which we define to be both $\lambda$-bounded and $\lambda$-limited.

\paragraph{Our three-step chaining argument.}
We give a three-step argument to show that the strong $\lambda$-boundedness of a CRS implies the  concentration bounds in Theorem \ref{thm:main-intro}.  At a high level our analysis of a strongly $\lambda$-bounded sequential selection process
 shows that  the first $m=\Theta(n/\lambda)$  elements in the random order
behave almost  like an independent rounding of the input vector $\mathbf{x}$.
We show that the output set $S$ well approximates the input vector $\mathbf{x}$ 
using intermediate vector
$\mathbf{q}$, where $q_e$ is the probability that element $e$ is
accepted when it is considered in the sequential selection process. 
Our analysis consists of three steps, and is illustrated by the following diagram: 
\[
    \overline{\mathbf{a}}(m)\cdot\mathbf{X}
    \;\xrightarrow[]{\text{Step 1}}\;
    \overline{\mathbf{a}}(m)\cdot\mathbf{q}
    \;\xrightarrow[]{\text{Step 2}}\;
    \overline{\mathbf{a}}(m)\cdot\mathbf{x}
    \;\xrightarrow[]{\text{Step 3}}\;
    \tfrac{m}{n}\, \mathbf{a}\cdot\mathbf{x},
\]
where $\overline{\mathbf{a}}(m)$ zeros out the elements in  $\mathbf{a}$ that are
not in the first $m$ positions of the random order.
Each arrow represents a high-probability lower bound (up to lower-order slack) on the
left-hand quantity in terms of the right. Chaining the three steps and using
$\mathbf{a}\cdot\mathbf{X} \ge \overline{\mathbf{a}}(m)\cdot\mathbf{X}$
completes the argument.

\emph{Step 1: } Conditioned on the history, each element's inclusion in $S$ is a coin flip with bias  $q_e$, but note that the value of $q_e$  depends
heavily on the history. The claim then follows by the application of Freedman's
martingale inequality, which tolerates history-dependent biases.

\emph{Step 2: }
This is the key step, and the only step that uses $\lambda$-limitedness and
$\lambda$-boundedness. Since $q_e$ is lower-bounded by the residual weight
$p_e$ at the moment $e$ is processed, it suffices to show that, on average,
residual weights have not decayed far from their initial values $x_e$ over
the prefix.

The proof of this step requires finding the right potential function to effectively apply Freedman's inequality.  We track $\overline{\mathbf{a}}(m)\cdot \mathbf{p}(t)/(n-t+1)$. This potential's expected change per step is
the difference of two terms, a gain of order $\overline{\mathbf{a}}(m)\cdot \mathbf{x}/(n-t)^2$ from the
shrinking denominator as $t$ increases, and a loss of order at most $\lambda\, \overline{\mathbf{a}}(m)\cdot
\mathbf{x}/(n-t)^2$ due to $\lambda$-boundedness. These effects are lower-order terms with respect to this potential, and since the increments we analyze are differences of consecutive values of
this potential, the overall sum telescopes to a comparison of the initial
potential $\overline{\mathbf{a}}(m)\cdot \mathbf{x}/n$ against the final one, which yields the comparison
between $\mathbf{x}$ and $\mathbf{q}$ that this step requires.  Freedman's inequality then applies, with
$\lambda$-boundedness controlling the expected increments and
$\lambda$-limitedness their worst case. Each step's drift is a $\lambda/n$
fraction of the potential's initial value, so the drift budget is exhausted
after $\Theta(n/\lambda)$ steps; this is what dictates the prefix length $m$.

\emph{Step 3: } This step  only uses the fact that the
first $m$ elements in a uniformly random order are a uniformly random sample
of size $m$, and hence negatively associated. The step then follows by a 
standard Chernoff bound for sampling without replacement.

\subsection{Organization}
\Cref{sec:prelim} reviews the main concentration inequalities used in our proof of Theorem \ref{thm:main-intro}.
\Cref{sec:ssp} shows that the CRS AW is a strongly $k$-bounded  sequential selection process for $k$ matroids,
 and proves  
\Cref{thm:main-intro}. \Cref{sec:matroidcoloring} applies \Cref{thm:main-intro} to matroid intersection coloring, proving \Cref{thm:coloring}. \Cref{sec:submodular}  applies \Cref{thm:main-intro} to
monotone submodular maximization under matroid, packing, 
and covering constraints, and proves  \Cref{thm:submax_main_temp}. 

\section{Concentration Inequalities}\label{sec:prelim}

We now state a some known concentration inequalities that will be useful for our proofs, in particular,  Freedman's inequality \cite{Freedman1975}, Bhatia-Davis' inequality \cite{BhatiaDavis2000},\footnote{This is a folklore result but Bhatia and Davis seem to have popularized it.} and Bernstein's inequality \cite{BoucheronLugosiMassart2013, Hoeffding1963}. Freedman's inequality is the main workhorse for obtaining our concentration guarantees.

\begin{theorem}[Freedman's Inequality]\label{thm:freedman}
Let $(\Omega, \mathcal{F}, \mathbb{P})$ be a probability space, $X_t : \Omega \rightarrow \mathbb{R}$ for $t = 1, 2, \dots$ be a stochastic process, and $\mathcal{F}_1 \subseteq \mathcal{F}_2 \subseteq \dots \subseteq \mathcal{F}$ be a filtration such that $\{X_t\}$ is adapted to $\{\mathcal{F}_t\}$. Suppose $|X_t| \leq 1$ and $\mathbb{E}[X_t | \mathcal{F}_{t-1}] = 0$ for all $t$. Let $V_t = \text{Var}[X_t  |\mathcal{F}_{t-1}]$ for all $t$, $S_n = \sum_{t=1}^n X_t$, and $T_n = \sum_{t=1}^n V_t$. Then for all $a, b > 0$,

\[ \Pr[\exists n : S_n \geq a \text{ and } T_n \leq b] \leq \exp\left(-\frac{a^2}{2(a+b)}\right)\]
\end{theorem}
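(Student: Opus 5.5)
The plan is the classical exponential-supermartingale argument. Fix $\theta>0$ and set $g(\theta)=e^\theta-1-\theta$. I will show that
\[
M_n=\exp\bigl(\theta S_n-g(\theta)T_n\bigr), \qquad M_0=1,
\]
is a nonnegative supermartingale with respect to $\{\mathcal{F}_n\}$. Optional stopping at the first time the bad event occurs then gives the bound, after optimizing over $\theta$.

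\textbf{Step 1 (one-step mgf bound).} The function $\psi(y)=(e^y-1-y)/y^2$, extended by $\psi(0)=1/2$, is nondecreasing on $\mathbb{R}$. Hence for $x\le 1$ and $\theta>0$ we get $\psi(\theta x)\le\psi(\theta)$, that is,
\[
e^{\theta x}\le 1+\theta x+g(\theta)x^2 .
\]
Apply this with $x=X_t$, which is valid since $|X_t|\le 1$. Taking conditional expectations and using $\mathbb{E}[X_t\mid\mathcal{F}_{t-1}]=0$ and $\mathbb{E}[X_t^2\mid\mathcal{F}_{t-1}]=V_t$ gives
\[
\mathbb{E}[e^{\theta X_t}\mid\mathcal{F}_{t-1}]\le 1+g(\theta)V_t\le e^{g(\theta)V_t}.
\]
Because $V_t$ is $\mathcal{F}_{t-1}$-measurable (predictable), it can be pulled out of the conditional expectation. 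This yields $\mathbb{E}[M_t\mid\mathcal{F}_{t-1}]\le M_{t-1}$.

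\textbf{Step 2 (stopping).} Let $\tau=\inf\{n: S_n\ge a \text{ and } T_n\le b\}$, which is a stopping time. For a fixed horizon $N$, optional stopping for the bounded time $\tau\wedge N$ gives $\mathbb{E}[M_{\tau\wedge N}]\le 1$. On the event $\{\tau\le N\}$ we have
\[
M_\tau\ge \exp\bigl(\theta a-g(\theta)b\bigr).
\]
Markov's inequality then gives
\[
\Pr[\tau\le N]\le \exp\bigl(-\theta a+g(\theta)b\bigr).
\]
Letting $N\to\infty$ by monotone convergence bounds $\Pr[\exists n: S_n\ge a,\ T_n\le b]=\Pr[\tau<\infty]$ by the same quantity. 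This infinite-horizon limit is the only delicate point, and it is handled cleanly by working with bounded stopping times first.

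\textbf{Step 3 (optimize).} Choose $\theta=\log(1+a/b)$. The exponent becomes $-b\,h(a/b)$ with $h(u)=(1+u)\log(1+u)-u$. Use the standard calculus inequality $h(u)\ge \frac{u^2}{2(1+u/3)}$, proved by comparing derivatives at $0$. This gives
\[
b\,h(a/b)\ge \frac{a^2}{2(b+a/3)}\ge\frac{a^2}{2(a+b)},
\]
which is the claimed bound.

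The main obstacle is Step 1: the monotonicity of $\psi$, which requires $x\le 1$ rather than merely bounded variance. The remaining steps are routine.
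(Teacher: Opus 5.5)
Your argument is correct. It is the classical exponential-supermartingale proof, essentially Freedman's original one; the paper itself gives no proof and simply invokes the result from Freedman (1975) as a black box. Your Step 3 actually gives the sharper bound $\exp\bigl(-a^2/(2(b+a/3))\bigr)$. The one imprecision is that $h(u)\ge u^2/(2(1+u/3))$ does not follow from comparing derivatives at $0$ alone. It needs both sides to vanish with zero slope at $0$, plus the second-derivative comparison $h''(u)=1/(1+u)\ge(1+u/3)^{-3}$ for all $u\ge 0$.
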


Note that as long as $|X_t| \leq c$ for some value $c > 0$, we can rescale and center the $X_t$'s to apply Freedman's Inequality. Further, it is often useful to identify an absolute upper bound $B$ on the total quadratic variation $T_n = \sum_{t=1}^n V_t$, so that the condition $T_n \leq B$ can be dropped from the probability statement and give a concentration guarantee just involving the sum $S_n$. Thus the ``recipe'' for applying Freedman's inequality is often to identify these two absolute upper bounds $c, B > 0$, with the goal of minimizing them because larger values of $c, B$ degrade the concentration guarantee.

\begin{theorem}[Bhatia-Davis Inequality]\label{thm:bhatia_davis} For a random variable $X \in [a, b]$ with mean $\mu$,
\[ \text{Var}(X) \leq (b-\mu)(\mu-a) \]
\end{theorem}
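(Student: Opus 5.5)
The plan is a direct argument: expand the variance and bound the second moment pointwise. Since $X \in [a,b]$ almost surely, the random variable $(X-a)(b-X)$ is nonnegative almost surely, and taking expectations of it is the only real step.

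Expanding and using linearity of expectation,
\[
0 \le \mathbb{E}\bigl[(X-a)(b-X)\bigr] = -\mathbb{E}[X^2] + (a+b)\mu - ab,
\]
so $\mathbb{E}[X^2] \le (a+b)\mu - ab$. Substituting this into $\mathrm{Var}(X) = \mathbb{E}[X^2] - \mu^2$ gives
\[
\mathrm{Var}(X) \le (a+b)\mu - ab - \mu^2 = (b-\mu)(\mu-a),
\]
where the last equality is checked by expanding the right-hand side. $X$ is bounded, so all moments exist and no integrability issues arise.

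There is no real obstacle. The only idea needed is to use the pointwise inequality $(X-a)(b-X)\ge 0$ rather than, for example, the cruder bound $(b-a)^2/4$; the rest is algebra. Equality holds exactly when $X$ is supported on $\{a,b\}$, which confirms that the bound is tight. This is also the case relevant to the later applications to Bernoulli-type increments in Freedman's inequality (Theorem~\ref{thm:freedman}), where it gives $V_t \le q(1-q)$.
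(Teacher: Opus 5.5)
Your proof is correct. Taking expectations of the pointwise inequality $(X-a)(b-X)\ge 0$ is the standard argument; the paper gives no proof of its own and simply cites the inequality as folklore popularized by Bhatia and Davis.

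One small aside on your closing remark. For a two-point variable like $Y_t = a_e(q_e - X_e)$, the variance $a_e^2 q_e(1-q_e)$ can be computed directly. The inequality is genuinely needed for the non-two-point variables $Z_t$ in Lemma~\ref{lemma:zt}, where only the range of $Z_t$ and the sign of its conditional mean are known.
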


The Bhatia-Davis Inequality will be useful in our ``recipe'' for Freedman's Inequality because it will allow us to produce an absolute upper bound $B$ on the total quadratic variation $T_n$ by only using the bounds and mean of each random variable $X_t$.

\begin{theorem}[Bernstein's Inequality]\label{thm:bernstein}
Let $X_1,\ldots,X_n$ be independent random variables s.t. $X_i \in [0, 1]$ for all $i \in [n]$. Let $\nu^2 = \sum_{i=1}^n \text{Var}(X_i)$. Then for all $t > 0$, 

\[\Pr\left[\sum_{i=1}^n X_i - \mathbb{E}\left[\sum_{i=1}^n X_i\right] \ge t\right]
    \le
    \exp\!\left(
        -\frac{t^2}
        {2\left(\nu^2 + t/3\right)}
    \right).
\]

and

\[\Pr\left[\sum_{i=1}^n X_i - \mathbb{E}\left[\sum_{i=1}^n X_i\right] \leq -t\right]
    \le
    \exp\!\left(
        -\frac{t^2}
        {2\left(\nu^2 + t/3\right)}
    \right).
\]
\end{theorem}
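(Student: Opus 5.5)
This is the classical Bernstein inequality, and I would prove it by the Cram\'er--Chernoff exponential-moment method. Only the upper-tail statement needs real work. The lower-tail statement follows by applying it to the variables $X_i' = 1 - X_i$. These are again independent with values in $[0,1]$, and they have the same variances, so $\nu^2$ is unchanged. Moreover $\sum_i X_i' - \mathbb{E}[\sum_i X_i'] \ge t$ holds exactly when $\sum_i X_i - \mathbb{E}[\sum_i X_i] \le -t$.

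For the upper tail, center the variables by setting $Y_i = X_i - \mathbb{E}[X_i]$. Then $\mathbb{E}[Y_i]=0$, $Y_i \le 1$, and $\mathbb{E}[Y_i^2]=\mathrm{Var}(X_i)$. Fix $\lambda \in (0,3)$. The function $\psi(y) = (e^{y}-1-y)/y^2$, extended by $\psi(0)=1/2$, is nondecreasing in $y$. Hence for every $y \le \lambda$ we have $e^{y} \le 1 + y + y^2\psi(\lambda)$. Applying this with $y = \lambda Y_i \le \lambda$ gives
\[
e^{\lambda Y_i} \le 1 + \lambda Y_i + Y_i^2\,(e^\lambda - 1 - \lambda).
\]
Taking expectations and using $1+u\le e^u$ yields
\[
\mathbb{E}[e^{\lambda Y_i}] \le \exp\bigl(\mathrm{Var}(X_i)\,(e^\lambda-1-\lambda)\bigr).
\]
By independence and Markov's inequality,
\[
\Pr\Bigl[\textstyle\sum_i Y_i \ge t\Bigr] \le \exp\bigl(-\lambda t + \nu^2 (e^\lambda - 1-\lambda)\bigr).
\]

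Next I bound the exponential term by its series. Since $k! \ge 2\cdot 3^{k-2}$ for all $k\ge 2$,
\[
e^\lambda - 1 - \lambda = \sum_{k\ge 2}\frac{\lambda^k}{k!} \le \frac{\lambda^2}{2}\sum_{j\ge 0}(\lambda/3)^j = \frac{\lambda^2}{2(1-\lambda/3)}.
\]
The tail probability is therefore at most
\[
\exp\Bigl(-\lambda t + \frac{\nu^2\lambda^2}{2(1-\lambda/3)}\Bigr).
\]

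The last step is to choose $\lambda$. I would take $\lambda = t/(\nu^2 + t/3)$, which lies in $(0,3)$. With this choice $1-\lambda/3 = \nu^2/(\nu^2+t/3)$, so the second term in the exponent equals $\lambda^2(\nu^2+t/3)/2 = \lambda t/2$. The exponent is then $-\lambda t/2 = -t^2/\bigl(2(\nu^2+t/3)\bigr)$, which is exactly the claimed bound.

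The only step that is not routine is the monotonicity of $\psi$, which is what turns the one-sided bound $Y_i\le 1$ into the variance-sensitive MGF bound. I would verify it from the series representation $\psi(y)=\int_0^1 (1-s)e^{sy}\,ds$. For each fixed $s$ the integrand is nondecreasing in $y$, so $\psi$ is nondecreasing, and this handles negative $y$ as well. The degenerate case $\nu^2=0$ needs no separate treatment: then $\lambda=3$ is excluded from the open interval, but every $Y_i=0$ almost surely and the probability is zero.
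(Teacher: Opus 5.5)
Your proof is correct and complete. The Bennett-type bound $\mathbb{E}[e^{\lambda Y_i}]\le \exp(\mathrm{Var}(X_i)(e^\lambda-1-\lambda))$ follows from the monotonicity of $(e^y-1-y)/y^2$. Your formula $\int_0^1(1-s)e^{sy}\,ds$ is the integral Taylor remainder rather than a series, but it does handle negative $y$. The estimate $e^\lambda-1-\lambda\le \lambda^2/(2(1-\lambda/3))$, the choice $\lambda=t/(\nu^2+t/3)$, the separate treatment of $\nu^2=0$, and the reduction of the lower tail to the variables $1-X_i$ all check out.

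The paper gives no proof of this statement and simply cites Boucheron--Lugosi--Massart and Hoeffding. Your argument is the standard Cram\'er--Chernoff derivation from that source. One small convenience is that your explicit, non-optimal $\lambda$ yields the stated form directly, without computing the exact Legendre transform or passing through Bennett's function $h(u)=(1+u)\log(1+u)-u$.
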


\begin{theorem}[Bernstein's Inequality without Replacement]\label{thm:bernstein_replacement} Let $S = \{x_1, x_2, \dots, x_n\}$ be a collection of $n$ values in $[0, 1]$. Let $T$ be a uniform random subcollection of $m \leq n$ values in $S$. Define $\nu^2 := m \varsigma^2$ where $\varsigma^2$ is the variance of a single uniform random sample from $S$. Then

\[\Pr\left[\sum_{i \in T} x_i - \mathbb{E}\left[\sum_{i \in T} x_i\right] \geq t\right]
    \le
    \exp\!\left(
        -\frac{t^2}
        {2\left(\nu^2 + t/3\right)}
    \right).
\]

and

\[\Pr\left[\sum_{i \in T} x_i - \mathbb{E}\left[\sum_{i \in T} x_i\right] \le -t\right]
    \le
    \exp\!\left(
        -\frac{t^2}
        {2\left(\nu^2 + t/3\right)}
    \right).
\]
\end{theorem}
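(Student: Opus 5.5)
The plan is to reduce to the with-replacement case, where Theorem~\ref{thm:bernstein} already applies, using Hoeffding's classical comparison between sampling with and without replacement~\cite{Hoeffding1963}. Let $W = \sum_{j=1}^m Y_j$, where $Y_1,\dots,Y_m$ are i.i.d.\ uniform draws from $S$ (with replacement), and let $Z = \sum_{i\in T} x_i$ be the without-replacement sum. Both have mean $m\mu$, where $\mu = \frac1n\sum_i x_i$.

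The key intermediate claim is the convex-order domination
\[ \mathbb{E}[f(Z)] \le \mathbb{E}[f(W)] \quad \text{for every continuous convex } f. \]
The approach is to exhibit a coupling in which $Z$ is distributed as a conditional expectation of $W$, and then apply Jensen.

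\begin{itemize}
\item Realize the with-replacement draws as $Y_j = x_{\pi(J_j)}$. Here $\pi$ is a uniformly random permutation of $[n]$, and $J_1,\dots,J_m$ are i.i.d.\ uniform on $[n]$, independent of $\pi$.
\item Write $W = \sum_{\ell} c_\ell\, x_{\pi(\ell)}$, where $c_\ell$ counts how often $\ell$ appears among the $J_j$. So $W$ is a random convex-type combination of the values $x_{\pi(1)},\dots,x_{\pi(n)}$.
\item By exchangeability under $\pi$, conditioning on a suitable symmetrization (averaging over the relabelings that fix the multiplicity profile) replaces each $c_\ell$-weighted term by the uniform average over a block of $m$ distinct positions.
\item This gives $\mathbb{E}[W \mid \mathcal{G}] \overset{d}{=} Z$ for an appropriate $\sigma$-field $\mathcal{G}$.
\end{itemize}

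Making this symmetrization precise is the main obstacle. I would follow Hoeffding's original argument (his Theorem~4): write $\mathbb{E}[f(W)]$ as an average, over $m$-tuples of indices, of $f$ evaluated at sums with repeated entries, and show termwise that each such average dominates the corresponding distinct-index average by Jensen. If needed, I would simply cite that theorem.

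With the claim in hand, take $f(s) = e^{\lambda(s - m\mu)}$ for $\lambda>0$. This gives $\mathbb{E}[e^{\lambda(Z-m\mu)}] \le \mathbb{E}[e^{\lambda(W-m\mu)}] = \bigl(\mathbb{E}[e^{\lambda(Y_1-\mu)}]\bigr)^m$. Since $Y_1-\mu \le 1$ and $\mathrm{Var}(Y_1)=\varsigma^2$, the standard Bernstein moment-generating-function bound applies:
\[ \mathbb{E}[e^{\lambda(Y_1-\mu)}] \le \exp\!\Bigl(\frac{\varsigma^2 \lambda^2/2}{1-\lambda/3}\Bigr) \quad \text{for } 0<\lambda<3. \]
Raising this to the $m$-th power and applying Markov's inequality with the usual choice $\lambda = t/(\nu^2+t/3)$ yields the stated upper tail with $\nu^2 = m\varsigma^2$. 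This is exactly the computation underlying Theorem~\ref{thm:bernstein}, so the constants match.

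For the lower tail, apply the upper-tail bound to the complemented values $x_i' = 1 - x_i \in [0,1]$. A uniform sample of these has the same variance $\varsigma^2$, and $\sum_{i\in T} x_i' - \mathbb{E}\bigl[\sum_{i\in T} x_i'\bigr] = -\bigl(Z - \mathbb{E}[Z]\bigr)$, so the second inequality follows immediately.
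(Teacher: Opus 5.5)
Your proposal is correct and takes the same route as the paper, which simply combines Bernstein's inequality with Hoeffding's comparison theorem for sampling with versus without replacement. Your version is slightly more careful on one point. Hoeffding's theorem compares expectations of continuous convex functions, not tail probabilities, so the transfer has to happen at the level of the moment generating function $\mathbb{E}[e^{\lambda(Z-m\mu)}]$ before Markov's inequality is applied; you do this explicitly, while the paper leaves it implicit.
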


\Cref{thm:bernstein_replacement} follows from Bernstein's inequality (\Cref{thm:bernstein}) and Hoeffding's comparison theorem \cite{Hoeffding1963}. Bernstein's inequality gives the above statement if $T$ is sampled independently with replacement, i.e. $T$ is a collection of $m$ independent uniform random values of $S$. Hoeffding's comparison theorem \cite{Hoeffding1963} allows us to transfer Bernstein-style concentration bounds from the \emph{with replacement} setting to the \emph{without replacement} setting.


\section{Sequential Selection Process} \label{sec:ssp}

In \Cref{subsec:ssp_def}, we formally define a sequential selection process and related properties. We then show that sequential selection processes compose in the same way that CRS do.
In \Cref{subsec:ssp_aw}, we show that the $k$-matroid AW algorithm is a  strongly $k$-bounded sequential selection process. In \Cref{subsec:ssp_conc}, we analyze the concentration properties of strongly $\lambda$-bounded sequential selection processes, and prove \Cref{thm:main-intro}.

\subsection{Definition and Notation}\label{subsec:ssp_def}

The purpose of this subsection is to define a sequential selection process, and related terms, and show
in Lemma 
\ref{lemma:combine_ssp} that sequential selection processes compose in the same way as CRS.

\begin{definition}
A  \textbf{sequential selection process} $P$ is process consistent with the following description. 
The input to $P$ consists of a ground set
$U $  of $n$ elements and a vector $\mathbf{x} \in [0, 1]^n$  of $n$ probabilities.
The process $P$ maintains a probability vector $\mathbf{p}$, initialized to $\mathbf{x}$,
and vector $\mathbf{X} \in \{0, 1\}^n$, initialized to the zero vector. 
The process $P$ will generate a random subset $R$ of $U$ where each $e \in U$ is included in $R$ independently with probability $x_e$. The process $P$ will also generate a uniform random permutation $\sigma$ of $U$ and process the elements in this $\sigma$ order.

At time step $t$, the process $P$ chooses a uniformly random element $e \notin \sigma[1:(t-1)]$ and sets $e = \sigma(t)$, and then includes $e$ in $R$ independently with probability $x_e$. Next, to process element $e = \sigma(t)$:
\begin{itemize}
    \item If $e \notin R$, then $X_e $ is set to $0$.
 If $e \in R$ then 
 $X_e $ is set to $1$ with some arbitrary probability that is at least
$\frac{p_e(t)}{x_e}$ (this probability may be dependent on past events), and $X_e$ is set to $0$ otherwise. 
\item A probability vector $\mathbf{\Delta}(t)\in [0,1]^n$, where  for all $e$ it is the case that  $0 \le \Delta_e(t) \leq p_e(t)$, is arbitrarily generated. Then $\mathbf{p}$ is updated by decrementing $\mathbf{p}$ by $\mathbf{\Delta}$,
that is  $\mathbf{p}(t+1) $ is set to $ \mathbf{p}(t)-\mathbf{\Delta}(t)$.
\end{itemize}
After processing all of the elements of $U$, $P$ outputs  $\mathbf{X}$. 
\end{definition}

\begin{definition}~
\begin{itemize}
    \item A sequential selection process $P$ is \textbf{$\lambda$-bounded}  if $\mathbb{E}\left[\Delta_e(t) | \mathcal{F}_{t-1}\right] \leq \frac{\lambda x_e}{n-t+1}$ for all $e \in U$ and $t \in [n]$, where $\mathcal{F}_{t-1}$ is the state of the random process after completion of time step $t-1$.
    \item The process $P$ is \textbf{$\lambda$-limited} if it is always the case that $||\mathbf{\Delta}(t)||_1 \leq \lambda$.
    \item
The process $P$ is {\bf strongly $\lambda$-bounded} if it is $\lambda$-bounded and $\lambda$-limited. 
\end{itemize} 
\end{definition}

\begin{definition}
Consider  $k$ sequential selection processes $P^1, \ldots, P^k$ that share a common ground set $U$, initial probability vector $\mathbf{x} \in [0, 1]^n$, random subset $R$, and random order $\sigma$.  Let $\mathbf{X}^i$ be the output of $P^i$. 
The output of the {\bf combined sequential selection process} $P$ is the vector $\mathbf{X} \in \{0, 1\}^n$ where
 $X_e = 1$ if $X_e^i = 1$ for all $i \in [k]$, and $0$ otherwise.
\end{definition}

\begin{lemma}\label{lemma:combine_ssp}
If sequential selection process $P^i$ is strongly $\lambda_i$-bounded for $i \in [k]$, then the combined random process $P$ is a strongly $\left(\sum \lambda_i\right)$-bounded sequential selection process.
\end{lemma}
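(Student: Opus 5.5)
We need to exhibit, for the combined process $P$, a residual vector $\mathbf{p}$ and drops $\mathbf{\Delta}(t)$ that make $P$ a sequential selection process. The natural choice is to let the combined process run all $k$ component processes in lockstep, with the same $R$ and $\sigma$, and define
\[
\Delta_e(t) := p_e(t) - p_e(t+1), \qquad p_e(t) := \max\Bigl(0,\; x_e - \sum_{i=1}^k \bigl(x_e - p^i_e(t)\bigr)\Bigr),
\]
where $\mathbf{p}^i$ is the residual vector of $P^i$. Equivalently, $p_e(t) = \bigl(\sum_i p^i_e(t) - (k-1)x_e\bigr)^+$. Since each $p^i_e$ is nonincreasing in $t$, so is $p_e$. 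Hence $0 \le \Delta_e(t) \le p_e(t)$, with $p_e(t+1) \ge 0$ by the positive part, and $\mathbf{p}(1) = \mathbf{x}$.

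\textbf{Selection probability.} Fix the history $\mathcal{F}_{t-1}$ and $e=\sigma(t)\in R$. Let $A_i$ be the event $X^i_e = 1$, so $\Pr[A_i\mid\cdot] \ge p^i_e(t)/x_e$. With no assumption on the joint law of the $A_i$, a union bound on the complements gives
\[
\Pr\Bigl[\bigcap_i A_i\Bigr] \;\ge\; 1 - \sum_i \Bigl(1 - \frac{p^i_e(t)}{x_e}\Bigr) \;=\; \frac{\sum_i p^i_e(t) - (k-1)x_e}{x_e}.
\]
Together with nonnegativity, this is at least $p_e(t)/x_e$, which is exactly the requirement for a sequential selection process. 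This Bonferroni step is the one place where the right definition of $\mathbf{p}$ matters, and the choice above is forced by it.

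\textbf{Strong boundedness.} Since $y\mapsto y^+$ is $1$-Lipschitz and monotone, and each $p^i_e$ is nonincreasing,
\[
\Delta_e(t) \;\le\; \sum_i \bigl(p^i_e(t) - p^i_e(t+1)\bigr) \;=\; \sum_i \Delta^i_e(t).
\]
Summing over $e$ gives $\|\mathbf{\Delta}(t)\|_1 \le \sum_i \|\mathbf{\Delta}^i(t)\|_1 \le \sum_i \lambda_i$, so $P$ is $(\sum_i\lambda_i)$-limited. Taking conditional expectations termwise gives
\[
\mathbb{E}[\Delta_e(t)\mid\mathcal{F}_{t-1}] \;\le\; \sum_i \frac{\lambda_i x_e}{n-t+1},
\]
so $P$ is $(\sum_i\lambda_i)$-bounded.

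\textbf{Main obstacle.} The subtle point is the filtration: $\mathcal{F}_{t-1}$ for $P$ must be the joint state of all $k$ processes. Each $P^i$'s $\lambda_i$-boundedness has to hold conditioned on this finer joint history, not just on $P^i$'s own history. I would handle this by adopting the convention, as in the definition, that $\mathcal{F}_{t-1}$ is the full state of the shared random experiment, with common $R$ and $\sigma$. Each component's bound is then stated with respect to that state, which AW satisfies since its bounds hold for any fixed past. If needed, I would note explicitly that the component processes use independent internal randomness beyond the shared $R$ and $\sigma$, so conditioning on the other components' states does not change $P^i$'s conditional drop distribution given its own state, $R$ restricted to the past, and $\sigma[1:t-1]$.
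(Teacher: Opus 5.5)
Your proof is correct and is essentially the paper's argument. You use the same residual vector $p_e(t)=\max\{0,\sum_i p^i_e(t)-(k-1)x_e\}$, the same union bound for the acceptance probability, and the same pointwise bound $\Delta_e(t)\le\sum_i\Delta^i_e(t)$ to get both $(\sum_i\lambda_i)$-boundedness and $(\sum_i\lambda_i)$-limitedness. Your remark that each $P^i$'s $\lambda_i$-boundedness must hold with respect to the joint filtration of all $k$ processes is a point the paper leaves implicit, and your justification is the right way to handle it: the component AW instances share only $R$ and $\sigma$ and otherwise use independent internal randomness.
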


\begin{proof}
Let the probability vectors $\mathbf{p}(t)$ for the combined random process have $p_e(t) = \max\{0, \sum_i p_e^i(t)-kx_e+x_e\}$ for all $e \in U$, where $p^i$ is the probability vector for process $P^i$. Note that $\mathbf{p}(1) = \mathbf{x}$ as required. Consider time step $t \in [n]$ and let $e = \sigma(t)$. If $e \in R$, then note that $X_e = 1$ with probability at least $\frac{p_e(t)}{x_e}$\footnote{$e \in R$ implies $x_e > 0$, so this expression is well-defined.} by a simple union bound, because

\begin{align*}
\Pr[X_e = 1 | e \in R] &= 1 - \Pr\left[X_e = 0 | e \in R\right] \\
&\geq 1 -\sum_i \Pr[X_e^i = 0 | e \in R] \\
&= 1 - \sum_i \left(1-\Pr[X_e^i = 1 | e \in R]\right) \\
&= \sum_i \Pr[X_e^i = 1 | e \in R] - k+1 \\
&\geq \sum_i \frac{p_e^i(t)}{x_e} - k+1 \\
&=  \frac{ \sum_i p_e^i(t)-kx_e+x_e}{x_e} \\
&= \frac{p_e(t)}{x_e}
\end{align*}

where $\Pr[X_e^i = 1 | e \in R] \geq \frac{p_e^i(t)}{x_e}$ by the 
sequential selection process definition applied to $P^i$, the last step 
applies if $p_e(t) = \sum_i p_e^i(t)-kx_e+x_e$, and the claim trivially 
holds if $p_e(t) = 0$. Next, observe

\[ \Delta_e(t) = p_e(t)-p_e(t+1) \leq \sum_i \left(p_e^i(t)-p_e^i(t+1)\right) = \sum_i \Delta_e^i(t) \]

Thus

\[ \mathbb{E}\left[\Delta_e(t) | \mathcal{F}_{t-1}\right] \leq \sum_i \mathbb{E}\left[\Delta_e^i(t) | \mathcal{F}_{t-1}\right] \leq \frac{\left(\sum_i \lambda_i\right)x_e}{n-t+1} \]

for all $e \in U$ and $t \in [n]$, and

\[ ||\mathbf{\Delta}(t)||_1 = \sum_e \Delta_e(t) \leq \sum_e \sum_i \Delta_e^i(t) \leq \sum_i \sum_e \Delta_e^i(t) = \sum_i ||\Delta^i(t)||_1 \leq \sum_i \lambda_i \]

for all $t \in [n]$, as desired.

\end{proof}




\subsection{Analysis: AW Algorithm and Sequential Selection Process}\label{subsec:ssp_aw}

In this section, our ultimate goal is to prove \Cref{lem:k-bounded}, which states that the $k$-matroid AW algorithm is a strongly $k$-bounded sequential selection process. We begin by giving a complete description of the AW CRS for a single matroid. We then show in \Cref{lem:1-bounded} that the single-matroid AW CRS is a strongly $1$-bounded sequential selection process. Then, in \Cref{lem:k-bounded}, we show that the $k$-matroid AW CRS is a strongly $k$-bounded sequential selection process, which follows from \Cref{lemma:combine_ssp} and \Cref{lem:1-bounded}.
We first  need the following standard definition of an exchange mapping  between pairs of independent sets in a matroid. 

\begin{definition}
    Given a matroid $M$ and independent sets $I, J \in \mathcal{I}$, an exchange-mapping is a function $\phi: I \to J \cup \{\perp\}$ such that:

\begin{enumerate}
    \item If $e \in I \cap J$ then $\phi(e) = e$.
    \item If $\phi(e) = \perp$, then we have $J + e \in \mathcal{I}$.
    \item If $\phi(e) = f \neq \perp$, then we have $J + e - f \in \mathcal{I}$.
    \item The map $\phi$ is an injection upon restriction to Case 3. That is, for all $f \in J$, there is at most one element $e \in I$ such that $\phi(e) = f$.
\end{enumerate}
    
\end{definition}

That exchange maps exist and can be constructed efficiently follows from standard matroid results (see Schrijver Corollary 39.12a \cite{schrijver_book}).
We now describe the AW CRS for a single matroid.

 \begin{mdframed}
 \noindent
 \textbf{The AW CRS}

\noindent
 \textbf{Input:} The input is a matroid $M = (U, \mathcal{I})$ and
fractional point $x \in P (M )$, and a random subset $R$ of $U$ in which each $e$ is included in $R$ independently with probability $x_e$.

\noindent
\textbf{Initialization:} 
\begin{itemize}
    \item The scheme maintains an independent subset $S$ of elements, that is initialized to the empty set, and that
    at the end will be the output. 
    \item The scheme computes a representation $x = \sum_{i=1}^m \beta_i \mathbf{1}_{I_i}$ of $x$ as a convex combination of the indicator vectors of $m \le n+1$ independent sets $I_1, \ldots, I_m$ in $M$. These sets are updated as the scheme executes, but the $\beta_i$ values do not change.

    \item For each element $e \in U$ the scheme selects an independent set $I_i$,  where $e \in  I_i$ and $\beta_i \ne 0$, 
    as $e$'s controller with probability  $\beta_i/x_e$.
    \item For each pair $I_i$ and $I_j$ of independent sets in the representation of $x$, the scheme maintains an exchange map $\phi_{ij}$ between 
    $I_i$ and $I_j$. 
\end{itemize}  

\noindent
\textbf{Iteration:} The algorithm chooses a uniformly random ordering $\sigma : [n] \rightarrow U$, and processes elements sequentially in this order. 

\noindent
\textbf{ Update Step:}
When an element $e=\sigma(t)$ is being considered at time $t$, $e$ is added to 
$S$ if and only if $e$ is in $R$ and $e$ is still a member of its controller independent set $I_i$. 
If $e$ is added to $S$, then 
each independent set $I_j$, where $\beta_j \ne 0$, is updated by deleting $\phi_{ij}(e)$ if $\phi_{ij}(e) \neq \perp$ and adding $e$.
\end{mdframed}






\begin{lemma}\label{lem:1-bounded}
    The AW CRS yields a strongly 1-bounded sequential selection process. 
\end{lemma}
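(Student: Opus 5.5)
We identify the residual vector with the mass of sets still controlling each unprocessed element: for each $e$ not yet processed, set
\[
p_e(t) := \sum_{i:\, e \in I_i(t)} \beta_i ,
\]
where $I_i(t)$ denotes the current version of $I_i$ just before step $t$. Initially $p_e(1)=x_e$ by the convex decomposition.

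\textbf{Selection probability.} The controller of $e$ is chosen with probability $\beta_i/x_e$, independently of $R$ and $\sigma$. The sets only change when earlier elements are accepted, so the event that the controller of $e$ still contains $e$ at time $t$ depends on past events. One must be careful, since the controller choices of already-processed elements influence the history. The controller of $e$ itself, however, is used only when $e$ is processed. Hence, conditioned on the history up to step $t-1$ (which does not reveal $e$'s controller), it is still distributed as $\beta_i/x_e$. Therefore
\[
\Pr[X_e = 1 \mid e \in R, \mathcal{F}_{t-1}] = \frac{p_e(t)}{x_e}.
\]

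\textbf{Decrements.} When $e=\sigma(t)$ is accepted, each $I_j$ loses $\phi_{ij}(e)$ and gains $e$. Adding $e$ does not matter, because $e$ is now processed and we may set its residual to $0$; this is allowed since $0\le\Delta_e\le p_e$. If $e$ is not accepted, the decrement is zero except at $e$ itself. We also set the residual of processed elements to $0$ in all cases. A cleaner convention is to count only unprocessed coordinates, since these are all that matter for the process definition.

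For an unprocessed $f\neq e$, the decrement is
\[
\Delta_f(t)=\sum_{j:\,\phi_{ij}(e)=f}\beta_j .
\]
Each $j$ removes at most one element, so $\sum_{f\ne e}\Delta_f(t)\le\sum_j\beta_j\le 1$.

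\textbf{$1$-limitedness.} The coordinate of $e$ itself drops by $p_e(t)$, which would break the bound $\|\Delta\|_1\le 1$. I expect this is handled by the convention that processed coordinates are frozen, or excluded. With that convention, $1$-limitedness holds.

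\textbf{$1$-boundedness.} Fix $f$ and $t$, and condition on $\mathcal{F}_{t-1}$. The element $\sigma(t)$ is uniform among the $n-t+1$ remaining elements. Then
\[
\mathbb{E}[\Delta_f(t)\mid\mathcal{F}_{t-1}]
\le \frac{1}{n-t+1}\sum_{e\ne f}\ \sum_i \Pr[\text{ctrl}(e)=i,\ e\in R]\sum_{j:\,\phi_{ij}(e)=f}\beta_j
= \frac{1}{n-t+1}\sum_j \beta_j \sum_i \beta_i \sum_{e\in I_i\setminus\{f\}} [\phi_{ij}(e)=f],
\]
using $\Pr[\text{ctrl}(e)=i,\ e\in R]=x_e\cdot\beta_i/x_e=\beta_i$ (for $e\in I_i$ at that time, and otherwise $e$ is not accepted). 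By the injectivity of $\phi_{ij}$, the inner sum is at most $[f\in I_j]$. The whole expression is therefore at most
\[
\frac{1}{n-t+1}\sum_{j:\, f\in I_j}\beta_j=\frac{p_f(t)}{n-t+1}\le\frac{x_f}{n-t+1}.
\]

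\textbf{Main obstacle.} The hardest part is justifying the conditional independence of controllers from $\mathcal{F}_{t-1}$. This requires that the exchange maps be maintained deterministically, for instance by recomputing them from the current sets, so that they depend only on the history and not on unrevealed controllers. The bookkeeping for processed coordinates is a secondary issue.
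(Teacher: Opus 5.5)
Your proposal is correct and follows essentially the paper's proof: the residual $p_e(t)=\sum_{i:\,e\in I_i}\beta_i$, the acceptance probability $p_e(t)/x_e$ from the controller distribution, $1$-limitedness because each $\phi_{ij}(e)$ removes at most one element, and $1$-boundedness because injectivity of $\phi_{ij}$ reduces the double sum to $\sum_i\beta_i\sum_{j:\,f\in I_j}\beta_j\le x_f$. Commit from the start to the freezing convention you eventually settle on, which is the paper's choice ($p_e(t)=p_e(\sigma^{-1}(e))$ for $t>\sigma^{-1}(e)$): zeroing processed coordinates would break both $1$-limitedness and $1$-boundedness, and your boundedness sum over $e\neq f$ already assumes freezing. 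Your deferred-decision remark about controllers makes explicit a point the paper leaves implicit.
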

\begin{proof}
We create a sequential selection process from AW as 
follows. The element set is $U$, the random order $\sigma$ of the process is the random order of the CRS, and the random set $R$ is the random set of the CRS. The output variables are $X_e = \mathbf{1}[e \in S^n]$, the indicators of the 
final output of the scheme.
Let $I_j^{t-1}$ be the value of $I_j$ right before $\sigma(t)$ is processed.  For each element $e \in U$, define the residual weight of $e$ at time $t$ as

\[
    w_e(t) \;=\; \sum_{i \,:\, e \in I_i^{t-1}} \beta_i,
\]
the total weight of independent sets containing $e$ after the first $t-1$ iterations 
of the CRS. We define the probability vector of the process by
\[
    p_e(t) \;=\; 
    \begin{cases}
        w_e(t) & \text{if } t \leq \sigma^{-1}(e), \\[2pt]
        p_e\!\left(\sigma^{-1}(e)\right) & \text{if } t > \sigma^{-1}(e).
    \end{cases}
\]
So $p_e$ is the residual weight of $e$ until the iteration at which $e$ is 
processed, and is fixed after that point. The drop vector is then 
$\mathbf{\Delta}(t) = \mathbf{p}(t) - \mathbf{p}(t+1)$. Notice that with this correspondence, $\mathbf{p}(1) = \mathbf{x}$ and $\mathbf{0} \leq \mathbf{\Delta}(t) \leq \mathbf{p}(t)$ for all $t$. Also, we need to verify that the scheme accepts $e = \sigma(t)$ with probability at least $\frac{p_e(t)}{x_e}$ given that $e \in R$. Let 
$\mathcal{F}_{t-1}$ denote the state of the scheme after iteration $t-1$. In order for $e$ to be selected given that $e \in R$, it must be alive in the independent set sampled by the controller. Since $i$ is selected  with probability $\beta_i / x_e$ the overall survival
probability is $\frac{w_e(t)}{x_e}$.
Hence,
\[
    \Pr\big[X_e = 1 \,\big|\, \mathcal{F}_{t-1},\, \sigma(t) = e, e \in R\big]
    \;=\; \frac{w_e(t)}{x_e}
    \;=\; \frac{p_e(t)}{x_e}.
\]
Thus the AW CRS yields a sequential selection
process on $U$.

It remains to show that the above sequential selection process is strongly 1-bounded.
First recall how $\mathbf{\Delta}(t)$ evolves. If $e$ is not accepted at time step $t$, then no $p_e$ values are affected and $\mathbf{\Delta}(t) = \mathbf{0}$. 
Suppose instead that $e$ is accepted with controller $i$. Then for all $j$, we 
update $I_j^t = I_j^{t-1} - \{\phi_{ij}(e)\} \cup \{e\}$, where 
$\phi_{ij}$ is the exchange-mapping from $I_i^{t-1}$ to $I_j^{t-1}$. Thus, an arbitrary element $f$ loses residual weight exactly $\sum_{j : \phi_{ij}(e) = f} \beta_j$ at iteration $t$.

\paragraph{Proving $1$-Limitedness}
The total drop is
\[
    ||\mathbf{\Delta}(t)||_1 
    \; = \; \sum_{f \neq e} \;\sum_{j \,:\, \phi_{ij}(e) = f} \beta_j 
    \;\leq\; \sum_{j} \beta_j 
    \;=\; 1,
\]
The  inequality holds because the element $e$ points to at most one element in each independent set $I_j^{t-1}$.


\paragraph{Proving $1$-Boundedness}
This is shown in Lemma II.7 in \cite{AdamczykWlodarczyk2018} but we include the proof for completeness. Fix a time $t$ and condition on $\mathcal{F}_{t-1}$. Fix an arbitrary element $f$. If it has been processed already then $\mathbf{\Delta}_f(t) = 0$. So we assume $f$ has not been processed at time $t$. At time $t$, there are $n-t+1$ unprocessed elements. The element $\sigma(t)$ is 
uniformly distributed over these $n-t +1$ elements. If element $f$ is chosen as $\sigma(t)$ then the weight of $f$ does not change by definition of $w_f(t)$. Otherwise, element $e$ is chosen as $\sigma(t)$, and then the weight of $f$ drops by $\sum_{j : \phi_{ij}(e) = f} \beta_j$ if and only if $e$ is sampled into $R$ 
(with probability $x_e$) and $e$ is in the chosen controller $e \in I_i^{t-1}$ 
(which happens with probability $\beta_i / x_e$). Thus,

\begin{align}
    \mathbb{E}\big[\mathbf{\Delta}_f(t) \,\big|\, \mathcal{F}_{t-1}\big]
    &= \frac{1}{n-t+1} \sum_{\substack{e \text{ unprocessed} \\ e \neq f}} 
        x_e \sum_{i \,:\, e \in I_i^{t-1}} \frac{\beta_i}{x_e} 
        \sum_{j \,:\, \phi_{ij}(e) = f} \beta_j \label{eq:edrop_1} \\
    &= \frac{1}{n-t+1} \sum_{\substack{e \text{ unprocessed} \\ e \neq f}} 
        \;\sum_{i \,:\, e \in I_i^{t-1}} \;\sum_{j \,:\, \phi_{ij}(e) = f} 
        \beta_i \beta_j \label{eq:edrop_2} \\
    &\leq \frac{1}{n-t+1} \left(\sum_{i } \beta_i\right) \cdot \left(\sum_{j : f \in I_j^{t-1}} \beta_j\right)\label{eq:edrop_3} \\
    &\leq \frac{x_f}{n-t+1}. \label{eq:edrop_4}
\end{align}
Line (\ref{eq:edrop_1}) follows from the description of $\mathbf{\Delta}_f(t)$ above. 
Line (\ref{eq:edrop_2}) cancels the $x_e$ factors. Line (\ref{eq:edrop_3}) is the key step. To see this, note that for each fixed pair $(i, j)$ 
there is at most one element $e \in I_i^{t-1}$ with $\phi_{ij}(e) = f$, so each 
pair $(i,j)$ contributes the term $\beta_i \beta_j$ at most once over the entire sum 
over $e$. Further, $\beta_j$ can only appear for $j$ such that $f \in I_j^{t-1}$. 
Line (\ref{eq:edrop_4}) uses $\sum_i \beta_i \leq 1$ and 
$\sum_{j : f \in I_j^{t-1}} \beta_j \leq x_f$, where the latter holds 
because $f$ has not been processed at time $t$, so 
$\{j : f \in I_j^{t-1}\} \subseteq \{j : f \in I_j^0\}$ and thus 
$\sum_{j : f \in I_j^{t-1}} \beta_j \leq \sum_{j : f \in I_j^0} \beta_j = x_f$.

Thus, $P$ is a strongly $1$-bounded sequential selection process, completing the proof.
\end{proof}

Having shown the  AW CRS is strongly 1-bounded for a single matroid, we can now leverage \Cref{lemma:combine_ssp} to show that the $k$-matroid AW CRS yields a strongly $k$-bounded sequential selection process. To obtain this, note that the $k$-matroid AW CRS is defined by simply running the AW CRS for each matroid individually to obtain sets $S_j \subseteq U$ such that $S_j$ is independent in $M_j$, and then outputting their intersection $S$ which is independent in all matroids. The only nuance is that when running the individual CRS's, we share the same random order on $U$ and random set $R$ \cite{AdamczykWlodarczyk2018}. This yields: 



\begin{lemma}\label{lem:k-bounded}
The AW algorithm on $k$ matroids yields a strongly $k$-bounded sequential selection process.
\end{lemma}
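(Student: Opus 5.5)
The plan is to combine \Cref{lem:1-bounded} with \Cref{lemma:combine_ssp}. For each matroid $M_i$, $i \in [k]$, run the single-matroid AW CRS on $M_i$ with the common point $\mathbf{x} \in P(M_i)$. All $k$ runs use the \emph{same} random order $\sigma$ and the \emph{same} random set $R$, as the $k$-matroid AW algorithm prescribes. By \Cref{lem:1-bounded}, each run, viewed through the residual-weight construction $p^i_e(t)$ built from its own convex decomposition and exchange maps, is a strongly $1$-bounded sequential selection process $P^i$ on $U$ with output $X^i_e = \mathbf{1}[e \in S_i]$.

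Next, the output of the $k$-matroid AW algorithm is $S = \bigcap_i S_i$. Its indicator vector is therefore exactly the output of the combined sequential selection process, since $X_e = 1$ iff $X^i_e = 1$ for all $i$. Applying \Cref{lemma:combine_ssp} with $\lambda_i = 1$ then gives that the combined process is strongly $\sum_i \lambda_i = k$-bounded, which is the claim.

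The main point to check is that the hypotheses of the combination definition really hold. The processes must share $U$, $\mathbf{x}$, $R$ and $\sigma$, which is true by construction. We also need the $\lambda_i$-boundedness of each $P^i$ to hold with respect to a common filtration $\mathcal{F}_{t-1}$, namely the joint state of all $k$ runs, rather than each run's own history. The proof of \Cref{lem:1-bounded} goes through verbatim under the joint conditioning, for three reasons:
\begin{itemize}
\item given the joint history, $\sigma(t)$ is uniform over the unprocessed elements;
\item membership of $\sigma(t)$ in $R$ is a fresh coin with bias $x_e$;
\item the controller choices in run $i$ are independent of the other runs' internal randomness, so the acceptance probability in run $i$ given $e \in R$ is still $w^i_e(t)/x_e$.
\end{itemize}
Hence each bound $\mathbb{E}[\Delta^i_e(t)\mid\mathcal{F}_{t-1}] \le x_e/(n-t+1)$ survives, and $1$-limitedness is a worst-case statement, so it holds trivially. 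With this verified, the union-bound argument inside \Cref{lemma:combine_ssp} applies directly and the lemma follows.
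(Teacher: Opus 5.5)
Your proposal is correct and takes the same approach as the paper. The paper observes that the $k$-matroid AW algorithm is an instance of the combined process, then invokes \Cref{lem:1-bounded} together with \Cref{lemma:combine_ssp} with $\lambda_i = 1$. Your extra check is a detail the paper leaves unstated: the single-matroid acceptance and $1$-boundedness bounds must hold when conditioning on the joint filtration of all $k$ runs, as the argument in \Cref{lemma:combine_ssp} implicitly requires, and your justification of this is sound.
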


\begin{proof}
Observe that the combination procedure for the AW algorithm is an instance of the combination procedure for sequential selection processes. By \Cref{lem:1-bounded}, the single matroid AW algorithm yields a strongly $1$-bounded sequential selection process, so by \Cref{lemma:combine_ssp}, the $k$-matroid AW algorithm yields a strongly $k$-bounded sequential selection process.
\end{proof}

\subsection{Analysis: Concentration of Sequential Selection Process}\label{subsec:ssp_conc}

In this section, our ultimate goal is to prove our main technical theorem, \Cref{thm:main}, which shows dimension-free concentration properties of strongly $\lambda$-bounded sequential selection processes.

\begin{theorem}\label{thm:main}
Let $P$ be a strongly $\lambda$-bounded sequential selection process for 
$\lambda \geq 1$ with input vector $\mathbf{x}\in [0,1]^n$, with ground set $U$ of $n$ elements and output 
$\mathbf{X} \in \{0,1\}^n$. Let $a_1, a_2, \dots, a_n \in [0, 1]$ be 
arbitrary. Then for all $\delta \in (0, 1/5)$,

\[ \Pr\left[\mathbf{a} \cdot \mathbf{X} \leq \left(\frac{1}{5} - \delta\right)
\frac{\mathbf{a} \cdot \mathbf{x}}{\lambda+1}\right] \leq 
4\exp\left(-\frac{\delta^2 (\mathbf{a} \cdot \mathbf{x})}{3(\lambda+1)^3}\right). \]
\end{theorem}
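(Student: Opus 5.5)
I will follow the three-step chain from the technical overview, working with the prefix of length $m = \lceil n/(\lambda+1)\rceil$ (or a similar $\Theta(n/\lambda)$ choice tuned to produce the constant $1/5$). Write $\overline{a}_e = a_e \mathbf{1}[\sigma^{-1}(e)\le m]$ and $Q=\mathbf{a}\cdot\mathbf{x}$. Let $q_e$ be the conditional acceptance probability of $e$ at the moment it is processed. By the definition of a sequential selection process, $q_e \ge x_e\cdot p_e(t)/x_e = p_e(t)$ with $t=\sigma^{-1}(e)$. The target is the chain $\overline{\mathbf a}\cdot\mathbf X \gtrsim \overline{\mathbf a}\cdot \mathbf q \ge \overline{\mathbf a}\cdot \mathbf p(\text{at processing}) \gtrsim c\,\overline{\mathbf a}\cdot\mathbf x \gtrsim c\frac{m}{n}Q$. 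Each "$\gtrsim$" fails with probability at most $\exp(-\Omega(\delta^2 Q/(\lambda+1)^3))$, and a union bound over the (at most four) bad events gives the factor $4$.

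\textbf{Step 1 (X vs.\ q).} I will define the martingale differences $Y_t = a_{\sigma(t)}(q_{\sigma(t)} - X_{\sigma(t)})$ for $t\le m$, with respect to the filtration $\mathcal F_{t}$. Each $|Y_t|\le 1$, and by Bhatia--Davis the conditional variance is at most $a_{\sigma(t)}^2 q_{\sigma(t)}(1-q_{\sigma(t)})\le a_{\sigma(t)} q_{\sigma(t)}$. The predictable quadratic variation is therefore at most $\sum_{t\le m} a_{\sigma(t)} q_{\sigma(t)}$. This is random, but it is bounded by the conditional expectation terms, which are in turn bounded by $\sum_{t\le m}\frac{1}{n-t+1}\sum_{e \text{ unprocessed}} a_e x_e \le \frac{m}{n-m}Q$. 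Freedman's inequality (\Cref{thm:freedman}) with $a=\delta' Q/(\lambda+1)$ and $b=O(Q/\lambda)$ then gives the first bound. To be careful, I would instead use $\mathbb E[a_{\sigma(t)}q_{\sigma(t)}\mid \mathcal F_{t-1}]$ as the centering, so that $\mathbf X$ is compared to a predictable sum.

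\textbf{Step 2 (p vs.\ x), the main obstacle.} The quantity I need to control is $\sum_{t\le m} a_{\sigma(t)}p_{\sigma(t)}(t)$. Its conditional expectation given $\mathcal F_{t-1}$ is $\Phi_t := \frac{1}{n-t+1}\sum_{e\text{ unproc.}} a_e p_e(t)$, so by Freedman it suffices to lower bound $\sum_{t\le m}\Phi_t$. Consider the potential $\Phi_t$ itself. When $\sigma(t)$ is removed, the sum loses on average exactly a $\frac{1}{n-t+1}$ fraction, which is cancelled by the shrinking denominator. By $\lambda$-boundedness, the remaining drift is a loss of at most $\frac{\lambda}{n-t}\cdot\frac{1}{n-t+1}\sum a_e x_e$. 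Since unprocessed mass is only removed, this gives $\mathbb E[\Phi_{t+1}-\Phi_t\mid\mathcal F_{t-1}] \ge -\frac{\lambda\, Q}{(n-t)(n-t+1)}$ (using the unprocessed $\mathbf a\cdot\mathbf x \le Q$). The increments are bounded via $\lambda$-limitedness: $|\Phi_{t+1}-\Phi_t|\le \frac{1+\lambda}{n-t}$, since one element leaves and $\|\Delta\|_1\le\lambda$. Their conditional variance is at most this bound times the expected absolute change, which is $O(\lambda Q/(n-t)^2)$. Applying Freedman to the telescoped martingale part gives, with high probability, $\Phi_t \ge \frac{Q}{n} - \lambda Q\left(\frac{1}{n-t}-\frac{1}{n}\right) - \text{err}$ for all $t\le m$ simultaneously, using the "$\exists n$" form of Freedman. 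Summing over $t\le m$ with $m\approx n/(\lambda+1)$ yields $\sum_{t\le m}\Phi_t \ge c\,\frac{m}{n}Q$ for a constant $c$ that I expect to produce the $1/5$ after optimizing $m$. The delicate points are the rescaling: the natural increment scale is $(\lambda+1)/(n-m)\approx(\lambda+1)^2/n$ after dividing by $Q/n$, and that is where the $(\lambda+1)^3$ in the exponent comes from. Handling the variance bound without circularity may require bounding quadratic variation by the deterministic $O(\lambda m Q/(n-m)^2)$.

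\textbf{Step 3 and assembly.} Step 3 relates $\sum_{t\le m}\Phi_t$, or directly $\overline{\mathbf a}\cdot\mathbf x$, to $\frac{m}{n}Q$. This follows from \Cref{thm:bernstein_replacement} applied to the values $a_ex_e\in[0,1]$, with variance at most $mQ/n$; it may be absorbed into Step 2 as stated above. Finally, I will chain the inequalities with slacks $\delta/4$ each, use $\mathbf a\cdot\mathbf X\ge\overline{\mathbf a}\cdot\mathbf X$, and simplify the exponents to the common lower bound $\delta^2 Q/(3(\lambda+1)^3)$. This simplification is loose but valid because $\lambda\ge1$ and each individual exponent is at least that large.
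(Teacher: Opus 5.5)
Your route differs from the paper's in Step 2 and makes Step 3 unnecessary, and its core is sound. The paper makes $\overline{\mathbf a}(m)$ a fixed vector by conditioning on $\sigma$. It applies Freedman to the compensated increments of $(n-m)\,\overline{\mathbf a}(m)\cdot\mathbf p(t)/(n-t+1)$ to compare $\overline{\mathbf a}(m)\cdot\mathbf q\ge\overline{\mathbf a}(m)\cdot\mathbf p(m+1)$ with $\bigl(1-\frac{(\lambda+1)m}{n-m}\bigr)\overline{\mathbf a}(m)\cdot\mathbf x$. It then needs Bernstein without replacement to pass to $\frac mn Q$, with $m=n/(2\lambda+4)$ and main term $\frac{Q}{4\lambda+6}$.

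Your potential $\Phi_t=\frac{1}{n-t+1}\sum_{e\text{ unprocessed}}a_ep_e(t)$ is $\mathcal F_{t-1}$-measurable, is exactly the conditional mean of the next summand, and starts at $Q/n$. So the randomness of the prefix already lives inside your martingale, and $\lambda$-boundedness is used literally as defined, as an average over the uniformly random $\sigma(t)$. The paper's $\overline{\mathbf a}(m)$ depends on future positions of $\sigma$, which forces its conditioning on $\sigma$. Given $\sigma$, the drop caused by the now-known element $\sigma(t)$ need not be $O(x_e/(n-t+1))$, so your potential also sidesteps a delicate point in the paper's Step 2.

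Use the centered form of your Step 1. Your first variant, which bounds the random sum $\sum_t a_{\sigma(t)}q_{\sigma(t)}$ ``by the conditional expectation terms,'' is not valid as stated. Centering at $\mathbb E[a_{\sigma(t)}X_{\sigma(t)}\mid\mathcal F_{t-1}]$ instead gives the deterministic variance proxy $\sum_{t\le m}\frac{Q}{n-t+1}\le\frac{mQ}{n-m}$. Since $\mathbb E[a_{\sigma(t)}X_{\sigma(t)}\mid\mathcal F_{t-1}]\ge\Phi_t$ holds pointwise, your separate Freedman step comparing $\sum_t a_{\sigma(t)}p_{\sigma(t)}(t)$ with $\sum_t\Phi_t$ can be dropped. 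The drift bound $-\frac{\lambda Q}{(n-t)(n-t+1)}$, the increment bound $\frac{1+\lambda}{n-t}$ from $\lambda$-limitedness, and the maximal form of Freedman are all used correctly.

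The real gap is the quantitative assembly. For a statement with explicit constants this is not a formality, and as sketched it is off in three places.

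First, your main term is not ``$\frac15$ after optimizing $m$.'' With $m=n/(\lambda+1)$, summing the drift bound gives $\frac{(\lambda+1)m}{n}Q-\lambda Q(H_n-H_{n-m})\ge Q\bigl(1-\lambda\ln(1+\frac1\lambda)\bigr)\ge\frac{Q}{2(\lambda+1)}$.

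Second, slacks of $\frac\delta4\cdot\frac{Q}{\lambda+1}$ per step against a main term of $\frac15$ do not reach the stated exponent. In Step 1 at $\lambda=1$ we have $m=n/2$ and variance proxy $\frac{mQ}{n-m}=Q$. A deviation of $\frac\delta4\cdot\frac Q2$ then gives Freedman exponent about $\delta^2Q/128$, short of the required $\delta^2Q/24$. So your claim that ``each individual exponent is at least that large'' is false for $\lambda$ near $1$. The fix is to spend the cushion you actually have, at least $(\frac{3}{10}+\delta)\frac{Q}{\lambda+1}$, and check each exponent explicitly. At $\lambda=1$, with the bounds you describe, slacks of about $0.6\delta$ for Step 1 and $1.5\delta$ for the uniform bound on $\Phi_t$ (in units of $\frac{Q}{\lambda+1}$) suffice, and together they fit inside the cushion because $\delta<\frac15$.

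Third, the deterministic quadratic variation of the martingale part of $\Phi_t$ is $O(\lambda^2mQ/(n-m)^3)$, which is what your own per-step estimate gives. The $O(\lambda mQ/(n-m)^2)$ you wrote is of order $Q/n$; against deviations of order $\delta Q/n$ it would only give $\exp(-O(\delta^2Q/n))$, which depends on $n$.

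Done correctly, your exponents scale like $Q/\lambda$. The $(\lambda+1)^3$ in the statement is an artifact of the paper's parameter choices, not of the rescaling you describe.
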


\Cref{thm:main} immediately yields our main theorem, \Cref{thm:main-intro}, because the $k$-matroid AW algorithm is a strongly $k$-bounded sequential selection process (\Cref{lem:k-bounded}) with output $S = \{e \in U : X_e = 1\}$.

To show \Cref{thm:main}, we focus on the first $\Theta(1/\lambda)$ fraction of
the elements in the random order $\sigma$, as the accumulated drift beyond
this prefix becomes unmanageable (see
\Cref{subsec:technicaloverview}). Formally, for an integer $m \in [0, n]$, we
define $\overline{\mathbf{a}}(m)$ by $\overline{a}_e(m) = a_e$ if
$\sigma^{-1}(e) \leq m$ and $\overline{a}_e(m) = 0$ otherwise.
We then apply the three step process, described in \Cref{subsec:technicaloverview}, in 
\Cref{subsubsec:step1}, \Cref{subsubsec:step2} and \Cref{subsubsec:step3}. The analyses of these steps are essentially logically independent (although our exposition of the later steps may rely on definitions introduced in the earlier steps). 
In  \Cref{subsubsec:combining} we combine the results of each of these three steps to prove
\Cref{thm:main}.

\subsubsection{Step 1} 
\label{subsubsec:step1}

In \Cref{lemma:qX}, we show that $\overline{\mathbf{a}}(m) \cdot \mathbf{X}$
is not much smaller than
$\overline{\mathbf{a}}(m) \cdot \mathbf{q}$ with high probability, where for each $e \in U$ we
define $q_e = \Pr[X_e = 1 \mid \mathcal{F}_{t-1}, e = \sigma(t)]$ with
$t = \sigma^{-1}(e)$. In other words, $q_e$ is the actual probability that
$X_e$ is set to $1$ at the time that $e$ is processed. Note that
$q_e \geq x_e \cdot \frac{p_e(t)}{x_e} = p_e(t)$ for $t = \sigma^{-1}(e)$.


\begin{lemma}\label{lemma:qX}
For all vectors $\mathbf{a} \in [0, 1]^n$ and integers $m \in [n]$, and for all $s, v > 0$,

\[ \Pr\left[\overline{\mathbf{a}}(m) \cdot (\mathbf{q} - \mathbf{X}) \geq s \text{ and } \overline{\mathbf{a}}(m) \cdot \mathbf{x} \leq v \right] \leq \exp\left(-\frac{s^2}{2(s+v)}\right) \]
\end{lemma}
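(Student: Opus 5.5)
We apply Freedman's inequality (\Cref{thm:freedman}) to the martingale difference sequence indexed by the time steps of the process. Define, for $t \in [n]$ with $e = \sigma(t)$,
\[ Y_t := \mathbf{1}[t \le m]\, a_e\,(q_e - X_e), \]
and let the filtration be $\mathcal{F}_t$, augmented so that $\mathcal{F}_{t-1}$ together with the identity of $\sigma(t)$ determines $q_{\sigma(t)}$. Then $\sum_{t=1}^n Y_t = \overline{\mathbf{a}}(m)\cdot(\mathbf{q}-\mathbf{X})$, since $\overline{a}_e(m)$ is nonzero exactly when $\sigma^{-1}(e)\le m$.

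The cleanest route is to refine the filtration into a half-step $\mathcal{G}_{t-1}$, which reveals $\mathcal{F}_{t-1}$ and $\sigma(t)$ but not $X_{\sigma(t)}$. Conditioned on $\mathcal{G}_{t-1}$, the quantities $e=\sigma(t)$, $\mathbf{1}[t\le m]$ and $q_e$ are all determined, and by definition of $q_e$ we have $\Pr[X_e=1\mid\mathcal{G}_{t-1}]=q_e$. Hence $\mathbb{E}[Y_t\mid\mathcal{G}_{t-1}]=0$, and $Y_t$ is measurable with respect to $\mathcal{F}_t \supseteq \mathcal{G}_{t-1}$. The interleaved sequence $\mathcal{G}_0\subseteq\mathcal{F}_1\subseteq\mathcal{G}_1\subseteq\cdots$ therefore makes $(Y_t)$ a martingale difference sequence adapted to $(\mathcal{F}_t)$ with predictable conditioning $\mathcal{G}_{t-1}$. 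I would remark that Freedman's inequality applies verbatim to this filtration, which is the standard form.

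Next we verify the boundedness and variance conditions. Since $a_e\in[0,1]$ and $q_e,X_e\in[0,1]$, we have $|Y_t|\le 1$. For the conditional variance, $X_e$ is Bernoulli$(q_e)$ given $\mathcal{G}_{t-1}$, so
\[ V_t=\mathbf{1}[t\le m]\,a_e^2\,q_e(1-q_e)\le \mathbf{1}[t\le m]\,a_e\,q_e . \]
The key point is $q_e\le x_e$: the element is accepted only if $e\in R$, which happens independently with probability $x_e$. Hence $V_t\le \overline{a}_e(m)x_e$, and summing gives $T_n\le \overline{\mathbf{a}}(m)\cdot\mathbf{x}$. This is the same Bhatia--Davis-style bound recorded after \Cref{thm:bhatia_davis}.

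Finally, on the event $\{\overline{\mathbf{a}}(m)\cdot(\mathbf{q}-\mathbf{X})\ge s\}\cap\{\overline{\mathbf{a}}(m)\cdot\mathbf{x}\le v\}$ we have $S_n\ge s$ and $T_n\le v$. Freedman's inequality with $a=s$ and $b=v$ then gives the bound $\exp(-s^2/(2(s+v)))$.

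The main obstacle is the measurability bookkeeping. Both $\overline{\mathbf{a}}(m)$ and $q_e$ depend on $\sigma$, so we need the predictable filtration that reveals $\sigma(t)$ before $X_{\sigma(t)}$; with that in place the rest is routine. Note also that $v$ is a random-event threshold rather than a deterministic bound, which Freedman's ``$\exists n$'' form handles directly.
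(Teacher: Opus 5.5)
Your proposal is correct and takes essentially the same approach as the paper. The paper also applies Freedman's inequality to $Y_t = a_{\sigma(t)}(q_{\sigma(t)} - X_{\sigma(t)})$ under a half-step filtration $\mathcal{F}'_{t-1}$ that reveals $\sigma(t)$ but not whether $\sigma(t)\in R$, which your $\mathcal{G}_{t-1}$ must likewise hide so that $\Pr[X_e=1\mid\mathcal{G}_{t-1}]=q_e$; it bounds each conditional variance by $a_e q_e \le a_e x_e$ via $q_e\le x_e$, so $T_m \le \overline{\mathbf{a}}(m)\cdot\mathbf{x}$, with the only cosmetic differences being that you pad the sequence to length $n$ with an indicator and compute the Bernoulli variance directly instead of citing Bhatia--Davis.
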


\begin{proof}
Define random variable $Y_t = a_e(q_e-X_e)$ for each $t \in [m]$ where $e = \sigma(t)$. Let $\mathcal{F}_{t-1}'$ represent the state of the random process after completion of time step $t-1$ and after $\sigma(t)$ has been revealed, but before $\sigma(t)$'s membership in $R$ has been revealed for each $t = 1, 2, \dots, m$. Note $|Y_t| \leq 1$ and $\mathbb{E}[Y_t|\mathcal{F}_{t-1}'] = a_e(q_e-\mathbb{E}[X_e|\mathcal{F}_{t-1}']) = 0$ for all $t = 1, 2, \dots, m$, so $Y_t$ satisfies the conditions of Freedman's Inequality. Further, $Y_t \in [a_e (q_e-1), a_eq_e]$, so via the Bhatia-Davis Inequality $\text{Var}[Y_t | \mathcal{F}_{t-1}'] \leq (a_eq_e)(a_e(1-q_e)) \leq a_e q_e \leq a_e x_e$ where we used $q_e = \Pr[X_e = 1 | \mathcal{F}_{t-1}'] = x_e \cdot \Pr[X_e = 1 | \mathcal{F}_{t-1}', e \in R] \leq x_e$. Thus the total quadratic variation $T_m = \sum_{t=1}^m \text{Var}[Y_t | \mathcal{F}_{t-1}'] \leq \sum_{t=1}^m a_{\sigma(t)} x_{\sigma(t)} = \overline{\mathbf{a}}(m) \cdot \mathbf{x}$. Applying Freedman's Inequality to the sequence $Y_1, Y_2, \dots, Y_m$, we obtain

\begin{align*}
\Pr\left[\overline{\mathbf{a}}(m) \cdot (\mathbf{q} - \mathbf{X}) \geq s \text{ and } \overline{\mathbf{a}}(m) \cdot \mathbf{x} \leq v\right] &\leq \Pr\left[\sum_{t=1}^m Y_t \geq s \text{ and } T_m \leq v\right] \\
&\leq \exp\left(-\frac{s^2}{2(s+v)}\right) 
\end{align*}
\end{proof}

\subsubsection{Step 2}
\label{subsubsec:step2} 

The main result of this step is \Cref{lemma:xq}, where we show that $\overline{\mathbf{a}}(m) \cdot \mathbf{q}$ is at least a
constant fraction of $\overline{\mathbf{a}}(m) \cdot \mathbf{x}$ with high probability (when $m = O(n/\lambda)$).
\Cref{lemma:xq} is the key technical lemma where we use the two properties of strong $\lambda$-boundedness. 

First, in \Cref{lemma:zt}, we define the random variable $Z_t$, which is
the key underlying random variable for relating
$\overline{\mathbf{a}}(m) \cdot \mathbf{q}$ to a constant fraction of
$\overline{\mathbf{a}}(m) \cdot \mathbf{x}$. Conceptually, $Z_t$ encodes the
per-step change in the normalized potential
$\overline{\mathbf{a}}(m) \cdot \mathbf{p}(t)/(n-t+1)$ introduced in
\Cref{subsec:technicaloverview}, shifted by a small deterministic
compensator; strong $\lambda$-boundedness guarantees that this change is
small with high probability. We then shift and rescale $Z_t$ to a random
variable $W_t$ satisfying the preconditions of Freedman's inequality
(\Cref{lemma:wt}). Finally, we apply Freedman's inequality to $W_t$ to prove
\Cref{lemma:xq}.

\begin{lemma}\label{lemma:zt}
For a vector $\mathbf{a} \in [0, 1]^n$ and positive integer $m \leq \frac{n}{2+1/\lambda}$, define random variable

\[ Z_t := (n-m) \cdot \overline{\mathbf{a}}(m) \cdot \left(\frac{\mathbf{p}(t)}{n-t+1} - \frac{\mathbf{p}(t+1)}{n-t}\right) - \frac{\lambda}{n-m} \cdot \overline{\mathbf{a}}(m) \cdot \mathbf{x} \]

for all $t = 1, 2, \dots, m$. Then $\mathbb{E}[Z_t | \mathcal{F}_{t-1}] \leq 0$ and $Z_t \in [-\lambda, \lambda]$ for all $t = 1, 2, \dots, m$. Further,

\[ \sum_{t=1}^m Z_t \geq \overline{\mathbf{a}}(m) \cdot \left(\left(1-\frac{(\lambda+1)m}{n-m}\right)\mathbf{x} - \mathbf{q}\right)\]

and

\[ \sum_{t=1}^m \text{Var}[Z_t | \mathcal{F}_{t-1}] \leq \lambda^2 \cdot \overline{\mathbf{a}}(m) \cdot \mathbf{x} \]
\end{lemma}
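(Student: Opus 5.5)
The proof has four parts: an algebraic decomposition of the bracketed difference, then the conditional mean bound, the range bound, and the telescoping and variance bounds. Write $N_t := n-t+1$. For each element $e$,
\[
\frac{p_e(t)}{N_t} - \frac{p_e(t+1)}{N_t-1} \;=\; \frac{\Delta_e(t)}{n-t} \;-\; \frac{p_e(t)}{(n-t)(n-t+1)} .
\]
Summing against $\overline{\mathbf{a}}(m)$ and multiplying by $(n-m)$ splits $Z_t$ into three pieces:
\begin{itemize}
\item a nonnegative \emph{loss} term $A_t := \frac{n-m}{n-t}\,\overline{\mathbf{a}}(m)\cdot \mathbf{\Delta}(t)$;
\item a nonpositive \emph{gain} term $-B_t := -\frac{n-m}{(n-t)(n-t+1)}\,\overline{\mathbf{a}}(m)\cdot\mathbf{p}(t)$;
\item the compensator $-\frac{\lambda}{n-m}\,\overline{\mathbf{a}}(m)\cdot\mathbf{x}$.
\end{itemize}
I will bound each piece separately.

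\emph{Mean and range.} By $\lambda$-boundedness, $\mathbb{E}[\Delta_e(t)\mid\mathcal{F}_{t-1}]\le \lambda x_e/(n-t+1)$. Also $(n-m)^2\le (n-t)(n-t+1)$ for $t\le m$. Together these give $\mathbb{E}[A_t\mid \mathcal{F}_{t-1}] \le \frac{\lambda}{n-m}\,\overline{\mathbf{a}}(m)\cdot\mathbf{x}$, so the compensator absorbs the loss and $\mathbb{E}[Z_t\mid\mathcal{F}_{t-1}]\le 0$, since $B_t\ge 0$.

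There is one subtlety: $\overline{\mathbf{a}}(m)$ is not $\mathcal{F}_{t-1}$-measurable, because it depends on which still-unprocessed elements will land in positions $t,\dots,m$. I would handle this by conditioning additionally on $\sigma(t)$. Given this, the set of unprocessed elements that will occupy positions $t+1,\dots,m$ is a uniform random subset, independent of the step-$t$ randomness that generates $\mathbf{\Delta}(t)$. Hence $\mathbb{E}[\overline a_e(m)\Delta_e(t)]$ factors into the membership probability of $e$ times $\mathbb{E}[\Delta_e(t)]$. The same factor multiplies $x_e$ in the compensator, so the inequality holds coordinatewise.

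For the range, $\lambda$-limitedness and $a_e\le 1$ give $0\le A_t\le \frac{n-m}{n-t}\|\mathbf{\Delta}(t)\|_1\le\lambda$, so $Z_t\le\lambda$. For the lower end, use $\overline{\mathbf{a}}(m)\cdot\mathbf{p}(t)\le \overline{\mathbf{a}}(m)\cdot\mathbf{x}\le m$, which gives $B_t\le m/(n-m)$ and a compensator of at most $\lambda m/(n-m)$. Thus $Z_t\ge -(\lambda+1)m/(n-m)\ge-\lambda$, and the last inequality is exactly equivalent to the hypothesis $m(2\lambda+1)\le\lambda n$.

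\emph{Telescoping.} The sum $\sum_{t=1}^m Z_t$ telescopes to
\[
\frac{n-m}{n}\,\overline{\mathbf{a}}(m)\cdot\mathbf{x} \;-\; \overline{\mathbf{a}}(m)\cdot\mathbf{p}(m+1) \;-\; \frac{\lambda m}{n-m}\,\overline{\mathbf{a}}(m)\cdot\mathbf{x},
\]
using $\mathbf{p}(1)=\mathbf{x}$. Every $e$ in the support of $\overline{\mathbf{a}}(m)$ has been processed by time $m+1$, so $p_e(m+1)=p_e(\sigma^{-1}(e))\le q_e$. Combined with $\frac{n-m}{n}\ge 1-\frac{m}{n-m}$, this yields the stated lower bound $\overline{\mathbf{a}}(m)\cdot\big((1-\frac{(\lambda+1)m}{n-m})\mathbf{x}-\mathbf{q}\big)$.

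\emph{Variance (expected main obstacle).} Ideally the variance of $Z_t$ comes only from $A_t$. Then $\mathrm{Var}\le \mathbb{E}[A_t^2]\le \lambda\,\mathbb{E}[A_t]\le \frac{\lambda^2}{n-m}\,\overline{\mathbf{a}}(m)\cdot\mathbf{x}$, and summing over $m\le n-m$ steps gives $\lambda^2\,\overline{\mathbf{a}}(m)\cdot\mathbf{x}$. The difficulty is again that $B_t$ and the compensator carry randomness through $\overline{\mathbf{a}}(m)$ via $\sigma(t)$. My plan is to use $\mathrm{Var}(Z_t)\le\mathbb{E}[(Z_t-c)^2]$ with $c$ equal to the $\mathcal{F}_{t-1}$-conditional mean of $-B_t$ minus the compensator. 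The fluctuation of these terms is of order $m/(n-m)^2$ per step, so their contribution can be absorbed alongside $\mathbb{E}[A_t^2]$, or bounded via Bhatia--Davis using the range $[-\lambda,\lambda]$ and the mean bound above. This bookkeeping is where I expect the care to be needed.
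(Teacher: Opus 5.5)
Your computations up to the last claim coincide with the paper's. You use the same split of $Z_t$ into loss, gain and compensator, the same use of $\lambda$-boundedness with $(n-m)^2\le(n-t)(n-t+1)$, the same range bounds from $\lambda$-limitedness and $\overline{\mathbf a}(m)\cdot\mathbf x\le m$, and the same telescoping.

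\textbf{Variance bound.} This is the main gap. You defer it to ``bookkeeping'', but it cannot be closed in the framework you adopt. If $\overline{\mathbf a}(m)$ is random given $\mathcal F_{t-1}$, the claim compares $\mathcal F_{t-1}$-measurable conditional variances pointwise with the random quantity $\lambda^2\,\overline{\mathbf a}(m)\cdot\mathbf x$, and that comparison is false. Take AW with $\mathbf a=\mathbf 1_{\{f\}}$ and $x_f>0$. Then $Z_1=0$ whenever $\sigma^{-1}(f)>m$, but $Z_1=-x_f\bigl(\frac{n-m}{n(n-1)}+\frac{\lambda}{n-m}\bigr)\neq 0$ when $\sigma(1)=f$. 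So $\mathrm{Var}[Z_1\mid\mathcal F_0]>0$, while the right-hand side vanishes on the positive-probability event $\sigma^{-1}(f)>m$. In this framework $Z_t$ is also not $\mathcal F_t$-measurable, which the Freedman step in \Cref{lemma:xq} needs.

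\textbf{Mean bound.} Your argument has a hole here too. After conditioning on $\sigma(t)$ you may not invoke $\lambda$-boundedness, because it averages over $\sigma(t)$. Averaging back, an unprocessed $e\neq\sigma(t)$ gets membership factor $\frac{m-t}{n-t}$. This is at most the compensator's factor $\frac{m-t+1}{n-t+1}$ (not equal to it), and that suffices. But the coordinate $e=\sigma(t)$ gets factor $1$, so you need $\Delta_{\sigma(t)}(t)=0$. That holds for AW, where $p_e$ freezes once $e$ is processed, but not for a general sequential selection process. Consider the process that accepts every $e\in R$, sets $\Delta_{\sigma(t)}(t)=p_{\sigma(t)}(t)$, and makes no other drops. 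It is strongly $1$-bounded, yet for $m=1$, $n\ge 3$ it gives $Z_1=a_{\sigma(1)}x_{\sigma(1)}\bigl(\frac{n-1}{n}-\frac{1}{n-1}\bigr)>0$ whenever $a_{\sigma(1)}x_{\sigma(1)}>0$.

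\textbf{Comparison with the paper.} The paper sidesteps measurability by conditioning on $\sigma$ throughout. Then $\overline{\mathbf a}(m)$ is a fixed vector, $B_t$ and the compensator are $\mathcal F_{t-1}$-measurable, and only the loss term fluctuates. In that setting, granting the mean bound, your estimate
$\mathrm{Var}[Z_t\mid\mathcal F_{t-1}]\le\mathbb E[A_t^2\mid\mathcal F_{t-1}]\le\lambda\,\mathbb E[A_t\mid\mathcal F_{t-1}]\le\frac{\lambda^2}{n-m}\,\overline{\mathbf a}(m)\cdot\mathbf x$
is a complete and slightly simpler replacement for the paper's route. The paper applies Bhatia--Davis on the range $[-(\lambda+1)\overline{\mathbf a}(m)\cdot\mathbf x/(n-m),\lambda]$, maximizes the quadratic over $\mu_t\le 0$, and then uses $\frac{m}{n-m}\le\frac{\lambda}{\lambda+1}$.

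\textbf{Your concern was justified.} $\lambda$-boundedness rests on $\sigma(t)$ being uniform over the unprocessed elements, so it does not survive conditioning on $\sigma$, and line~(\ref{eq:ezt_3}) is then unsupported. For example, run AW ($\lambda=1$) on a partition matroid with a capacity-one part $\{e,f\}$, $x_e=x_f=\tfrac12$ (as $\tfrac12\mathbf 1_{\{e\}}+\tfrac12\mathbf 1_{\{f\}}$), and $x=0$ elsewhere. Take $\sigma(1)=e$, $\sigma(2)=f$, $\mathbf a=\mathbf 1_{\{f\}}$ and $m=2$. The expected drop of $f$ at step $1$ is $\tfrac14$, not at most $x_f/n$. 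Moreover $\mathbb E[Z_1\mid\sigma]=\frac{(n-2)^2}{4n(n-1)}-\frac{1}{2(n-2)}>0$ for all $n\ge 6$.

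So neither reading of $\mathcal F_{t-1}$ delivers all four claims. A correct Step~2 needs increments adapted to the filtration in which $\lambda$-boundedness holds, that is, a restated lemma rather than more careful bookkeeping.
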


\begin{proof}
We condition on the random order $\sigma$ throughout, so $\overline{\mathbf{a}}(m)$ 
is a fixed vector. We will simply work through the computations to verify each property of the $Z_t$ random variables. $\lambda$-boundedness is applied in Line (\ref{eq:ezt_3}) in the bound for $\mathbb{E}[Z_t | \mathcal{F}_{t-1}]$, and $\lambda$-limitedness is applied in Line (\ref{eq:ubzt_4}) in the upper bound for $Z_t$.

First, we bound $\mathbb{E}[Z_t | \mathcal{F}_{t-1}]$ for all $t = 1, 2, \dots, m$. We have

\begin{align}
\mathbb{E}[Z_t | \mathcal{F}_{t-1}] &= (n-m) \cdot \overline{\mathbf{a}}(m) \cdot \mathbb{E}\left[\left(\frac{\mathbf{p}(t)}{n-t+1} - \frac{\mathbf{p}(t+1)}{n-t}\right) | \mathcal{F}_{t-1}\right] - \frac{\lambda}{n-m} \cdot \overline{\mathbf{a}}(m) \cdot \mathbf{x}  \label{eq:ezt_1} \\
&= (n-m) \cdot\overline{\mathbf{a}}(m) \cdot \left(\frac{\mathbf{p}(t)}{n-t+1} - \frac{1}{n-t}\mathbb{E}\left[\mathbf{p}(t+1)| \mathcal{F}_{t-1}\right]\right) - \frac{\lambda}{n-m} \cdot \overline{\mathbf{a}}(m) \cdot \mathbf{x} \label{eq:ezt_2} \\
&\leq (n-m) \cdot\overline{\mathbf{a}}(m) \cdot \left(\frac{\mathbf{p}(t)}{n-t+1} - \frac{1}{n-t}\left(\mathbf{p}(t)-\frac{\lambda \mathbf{x}}{n-t+1}\right)\right) - \frac{\lambda}{n-m} \cdot \overline{\mathbf{a}}(m) \cdot \mathbf{x} \label{eq:ezt_3} \\
&= (n-m) \cdot\overline{\mathbf{a}}(m) \cdot \left(\left(\frac{\mathbf{p}(t)}{n-t+1} - \frac{\mathbf{p}(t)}{n-t}\right) + \left(\frac{\lambda \mathbf{x}}{(n-t)(n-t+1)} - \frac{\lambda \mathbf{x}}{(n-m)^2}\right)\right) \label{eq:ezt_4} \\
&\leq 0 \label{eq:ezt_5}
\end{align}

Line (\ref{eq:ezt_1}) follows by definition. Line (\ref{eq:ezt_2}) follows because $\mathbf{p}(t)$ is constant with respect to $\mathcal{F}_{t-1}$. Line (\ref{eq:ezt_3}) follows by $\lambda$-boundedness. Line (\ref{eq:ezt_4}) follows by simplification. Line (\ref{eq:ezt_5}) follows by $t \leq m$.

Next, we lower bound $Z_t$ for all $t = 1, 2, \dots, m$. We have

\begin{align}
Z_t &= (n-m) \cdot \overline{\mathbf{a}}(m) \cdot \left(\frac{\mathbf{p}(t)}{n-t+1} - \frac{\mathbf{p}(t+1)}{n-t}\right) - \frac{\lambda}{n-m} \cdot \overline{\mathbf{a}}(m) \cdot \mathbf{x}  \label{eq:lbzt_1}  \\
&\geq (n-m) \cdot \overline{\mathbf{a}}(m) \cdot \left(\frac{\mathbf{p}(t)}{n-t+1} - \frac{\mathbf{p}(t)}{n-t}\right) - \frac{\lambda}{n-m} \cdot \overline{\mathbf{a}}(m) \cdot \mathbf{x}  \label{eq:lbzt_2} \\
&= -(n-m) \cdot \overline{\mathbf{a}}(m) \cdot \left(\frac{\mathbf{p}(t)}{(n-t)(n-t+1)} + \frac{\lambda \mathbf{x}}{(n-m)^2} \right) \label{eq:lbzt_3} \\
&\geq -(n-m) \cdot \overline{\mathbf{a}}(m) \cdot \left(\frac{\mathbf{x}}{(n-m)^2} + \frac{\lambda \mathbf{x}}{(n-m)^2} \right) \label{eq:lbzt_4} \\
&= -\frac{(\lambda+1)(\overline{\mathbf{a}}(m) \cdot \mathbf{x})}{n-m} \label{eq:lbzt_5} \\
&\geq -\frac{(\lambda+1)m}{n-m} \label{eq:lbzt_6} \\
&\geq -\lambda \label{eq:lbzt_7}
\end{align}

Line (\ref{eq:lbzt_1}) follows by definition. Line (\ref{eq:lbzt_2}) follows by $\mathbf{p}(t+1) \leq \mathbf{p}(t)$. Line (\ref{eq:lbzt_3}) follows by simplification. Line (\ref{eq:lbzt_4}) follows by $\mathbf{p}(t) \leq \mathbf{p}(1) = \mathbf{x}$  and $n-t \geq n-m$ for $t \leq m$. Line (\ref{eq:lbzt_5}) follows by simplification. Line (\ref{eq:lbzt_6}) follows by the fact that $\overline{\mathbf{a}}(m)$ has at most $m$ nonzero coordinates, each of which is in $[0, 1]$, so $\overline{\mathbf{a}}(m) \cdot \mathbf{x} \leq m$. Line (\ref{eq:lbzt_7}) follows by $\frac{m}{n-m} \leq \frac{\lambda}{\lambda+1}$, because $m \leq \frac{n}{2+1/\lambda} = \frac{\lambda n}{2\lambda+1}$ and $n-m \geq n - \frac{\lambda n}{2\lambda+1} = \frac{2\lambda n + n - \lambda n}{2\lambda+1} = \frac{(\lambda+1)n}{2\lambda+1}$.

Next, we upper bound $Z_t$ for all $t = 1, 2, \dots, m$. We have

\begin{align}
Z_t &= (n-m) \cdot \overline{\mathbf{a}}(m) \cdot \left(\frac{\mathbf{p}(t)}{n-t+1} - \frac{\mathbf{p}(t+1)}{n-t}\right) - \frac{\lambda}{n-m} \cdot \overline{\mathbf{a}}(m) \cdot \mathbf{x} \label{eq:ubzt_1} \\
&\leq (n-m) \cdot \overline{\mathbf{a}}(m) \cdot \left(\frac{\mathbf{p}(t)}{n-t} - \frac{\mathbf{p}(t+1)}{n-t}\right) \label{eq:ubzt_2} \\
&= (n-m) \cdot \overline{\mathbf{a}}(m) \cdot \left( \frac{\mathbf{\Delta}(t)}{n-t}\right) \label{eq:ubzt_3} \\
&\leq \frac{n-m}{n-t} \cdot \lambda \label{eq:ubzt_4} \\
&\leq \lambda \label{eq:ubzt_5}
\end{align}

Line (\ref{eq:ubzt_1}) follows by definition. Line (\ref{eq:ubzt_2}) follows by $1/(n-t+1) \leq 1/(n-t)$. Line (\ref{eq:ubzt_3}) follows by definition of $\mathbf{\Delta}(t)$. Line (\ref{eq:ubzt_4}) follows by $\overline{\mathbf{a}}(m) \in [0, 1]^n$ and $||\mathbf{\Delta}(t)||_1 \leq \lambda$ via $\lambda$-limitedness, so $\overline{\mathbf{a}}(m) \cdot \mathbf{\Delta}(t) \leq \lambda$. Line (\ref{eq:ubzt_5}) follows by $t \leq m$.

Next, we bound $\sum_{t=1}^m Z_t$. We have

\begin{align}
\sum_{t=1}^m Z_t &= \sum_{t=1}^m \left[(n-m) \cdot \overline{\mathbf{a}}(m) \cdot \left(\frac{\mathbf{p}(t)}{n-t+1} - \frac{\mathbf{p}(t+1)}{n-t}\right) - \frac{\lambda}{n-m} \cdot \overline{\mathbf{a}}(m) \cdot \mathbf{x}\right] \label{eq:zt_1} \\
&= (n-m) \cdot \overline{\mathbf{a}}(m) \cdot \left(\frac{\mathbf{p}(1)}{n} - \frac{\mathbf{p}(m+1)}{n-m}\right) - \frac{\lambda m}{n-m} \overline{\mathbf{a}}(m) \cdot \mathbf{x} \label{eq:zt_2} \\
&\geq (n-m) \cdot \overline{\mathbf{a}}(m) \cdot \left(\frac{\mathbf{x}}{n} - \frac{\mathbf{q}}{n-m}\right) - \frac{\lambda m}{n-m} \overline{\mathbf{a}}(m) \cdot \mathbf{x} \label{eq:zt_3} \\
&= \overline{\mathbf{a}}(m) \cdot \left(\left(1 - \frac{m}{n} - \frac{\lambda m}{n-m}\right) \mathbf{x} - \mathbf{q}\right) \label{eq:zt_4} \\
&\geq \overline{\mathbf{a}}(m) \cdot \left(\left(1 - \frac{m}{n-m} - \frac{\lambda m}{n-m}\right) \mathbf{x} - \mathbf{q}\right) \label{eq:zt_5} \\
&= \overline{\mathbf{a}}(m) \cdot \left(\left(1 - \frac{(\lambda+1)m}{n-m}\right) \mathbf{x} - \mathbf{q}\right) \label{eq:zt_6}
\end{align}

Line (\ref{eq:zt_1}) follows by definition. Line (\ref{eq:zt_2}) follows by a telescoping sum. Line (\ref{eq:zt_3}) follows because $\mathbf{p}(t)$ is non-increasing in $t$, so $\mathbf{p}(m+1) \leq \mathbf{q}$. Line (\ref{eq:zt_4}) follows by grouping $\mathbf{x}$ and $\mathbf{q}$ coefficients. Line (\ref{eq:zt_5}) follows by $1/n \leq 1/(n-m)$. Line (\ref{eq:zt_6}) follows by simplification.

Lastly, we bound $\sum_{t=1}^m \text{Var}[Z_t | \mathcal{F}_{t-1}]$. Let $\mu_t = \mathbb{E}[Z_t | \mathcal{F}_{t-1}] \leq 0$ for all $t = 1, 2, \dots, m$. We have

\begin{align}
\sum_{t=1}^m \text{Var}[Z_t | \mathcal{F}_{t-1}] &\leq \sum_{t=1}^m (\lambda-\mu_t)\left(\mu_t+\frac{(\lambda+1)(\overline{\mathbf{a}}(m) \cdot \mathbf{x})}{n-m}\right) \label{eq:varzt_1} \\
&\leq \sum_{t=1}^m \frac{\lambda(\lambda+1)(\overline{\mathbf{a}}(m) \cdot \mathbf{x})}{n-m} \label{eq:varzt_2} \\
&= \frac{m}{n-m} \cdot \lambda(\lambda+1)(\overline{\mathbf{a}}(m) \cdot \mathbf{x}) \label{eq:varzt_3} \\
&\leq \frac{\lambda}{\lambda+1} \cdot \lambda(\lambda+1)(\overline{\mathbf{a}}(m) \cdot \mathbf{x}) \label{eq:varzt_4} \\
&= \lambda^2 \cdot \overline{\mathbf{a}}(m) \cdot \mathbf{x} \label{eq:varzt_5}
\end{align}

Line (\ref{eq:varzt_1}) follows by the stronger lower bound $Z_t \geq -\frac{(\lambda+1)(\overline{\mathbf{a}}(m) \cdot \mathbf{x})}{n-m}$ given in Line (\ref{eq:lbzt_5}) and the Bhatia-Davis inequality. Line (\ref{eq:varzt_2}) follows because the expression is a quadratic in $\mu_t$ maximized at $\mu_t = \frac{1}{2}\left(\lambda -\frac{(\lambda+1)(\overline{\mathbf{a}}(m) \cdot \mathbf{x})}{n-m}\right) \geq 0$. Thus under the additional constraint $\mu_t \leq 0$, the value $\mu_t = 0$ maximizes the expression, producing the inequality. Line (\ref{eq:varzt_3}) follows by simplification. Line (\ref{eq:varzt_4}) follows by the bound $\frac{m}{n-m} \leq \frac{\lambda}{\lambda+1}$. Line (\ref{eq:varzt_5}) follows by simplification.
\end{proof}

Next, we define the random variable $W_t$, which is the shifted and rescaled
version of $Z_t$ that satisfies the preconditions of Freedman's inequality.

\begin{lemma}\label{lemma:wt}
For a vector $\mathbf{a} \in [0, 1]^n$ and positive integer $m \leq \frac{n}{2+1/\lambda}$, define random variable

\[ W_t := \frac{Z_t-\mu_t}{\lambda-\mu_t} \]

where $Z_t$ is the same as in \Cref{lemma:zt} and $\mu_t = \mathbb{E}[Z_t | \mathcal{F}_{t-1}] \in [-\lambda, 0]$ for all $t = 1, 2, \dots, m$. Then $\mathbb{E}[W_t | \mathcal{F}_{t-1}] = 0$ and $W_t \in [-1, 1]$ and $W_t \geq \frac{1}{\lambda}Z_t$ for all $t = 1, 2, \dots, m$. Further,

\[ \sum_{t=1}^m \text{Var}[W_t | \mathcal{F}_{t-1}] \leq \overline{\mathbf{a}}(m) \cdot \mathbf{x} \]
\end{lemma}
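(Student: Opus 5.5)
The argument is a direct computation from the properties of $Z_t$ in \Cref{lemma:zt}. Throughout we condition on $\mathcal{F}_{t-1}$. Under this conditioning $\mu_t$ is a constant, and \Cref{lemma:zt} gives $Z_t \in [-\lambda,\lambda]$, hence $\mu_t \in [-\lambda,0]$.

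\textbf{Denominator.} First check that $W_t$ is well defined. Since $\mu_t \le 0$, we have $\lambda-\mu_t \ge \lambda > 0$.

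\textbf{Mean zero.} We have $\mathbb{E}[W_t \mid \mathcal{F}_{t-1}] = (\mathbb{E}[Z_t\mid\mathcal{F}_{t-1}]-\mu_t)/(\lambda-\mu_t)=0$.

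\textbf{Range.} For the upper end, $Z_t\le\lambda$ gives $W_t \le (\lambda-\mu_t)/(\lambda-\mu_t)=1$. For the lower end, $Z_t \ge -\lambda$ gives $W_t \ge (-\lambda-\mu_t)/(\lambda-\mu_t)$. This is at least $-1$ because $-\lambda-\mu_t \ge -(\lambda-\mu_t)$, which is equivalent to $\mu_t\le 0$.

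\textbf{Comparison $W_t \ge Z_t/\lambda$.} Cross-multiply by the positive quantity $\lambda(\lambda-\mu_t)$. The claim becomes $\lambda Z_t - \lambda\mu_t \ge \lambda Z_t - \mu_t Z_t$, i.e.\ $\mu_t(Z_t-\lambda)\ge 0$. Both factors are nonpositive, since $\mu_t\le 0$ and $Z_t\le \lambda$, so the product is nonnegative.

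\textbf{Variance.} Since $\mu_t$ is $\mathcal{F}_{t-1}$-measurable,
\[ \mathrm{Var}[W_t\mid\mathcal{F}_{t-1}] = \frac{\mathrm{Var}[Z_t\mid\mathcal{F}_{t-1}]}{(\lambda-\mu_t)^2} \le \frac{\mathrm{Var}[Z_t\mid\mathcal{F}_{t-1}]}{\lambda^2}, \]
where the inequality uses $\lambda-\mu_t\ge\lambda$. Summing over $t$ and invoking the bound $\sum_t \mathrm{Var}[Z_t\mid\mathcal{F}_{t-1}] \le \lambda^2\,\overline{\mathbf{a}}(m)\cdot\mathbf{x}$ from \Cref{lemma:zt} yields the claimed bound.

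One subtlety is that \Cref{lemma:zt} conditions on $\sigma$ so that $\overline{\mathbf{a}}(m)$ is fixed. We keep that same conditioning here, so the variance bound transfers verbatim.

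There is no real obstacle. The only step needing care is the sign bookkeeping in the lower bound $W_t\ge -1$ and in $W_t\ge Z_t/\lambda$, both of which hinge on $\mu_t\le 0$ and $Z_t \le \lambda$.
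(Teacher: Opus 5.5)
Your proposal is correct and matches the paper's proof essentially line for line. You use the same ingredients: well-definedness from $\lambda-\mu_t\ge\lambda$, the mean-zero and range checks, the sign argument $\mu_t(Z_t-\lambda)\ge 0$ for $W_t\ge Z_t/\lambda$ (your cross-multiplication is the paper's identity $\lambda W_t - Z_t = \mu_t(Z_t-\lambda)/(\lambda-\mu_t)$), and the variance bound via $(\lambda-\mu_t)^{-2}\le\lambda^{-2}$ combined with \Cref{lemma:zt}. One small precision: $\mu_t\le 0$ comes from the conditional-expectation bound in \Cref{lemma:zt}, not from the range $Z_t\in[-\lambda,\lambda]$, which only gives $\mu_t\ge-\lambda$.
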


\begin{proof}
We will again simply work through the computations to verify each property of the $W_t$ random variables, frequently invoking \Cref{lemma:zt}. Note that $\lambda - \mu_t \geq \lambda > 0$ since $\mu_t \leq 0$, so $W_t$ is well-defined. We have

\[ \mathbb{E}[W_t | \mathcal{F}_{t-1}] = \frac{\mathbb{E}[Z_t | \mathcal{F}_{t-1}] - \mu_t}{\lambda-\mu_t} = \frac{\mu_t-\mu_t}{\lambda-\mu_t} = 0 \]

and

\[ W_t \geq \frac{-\lambda-\mu_t}{\lambda-\mu_t} = -1 - \frac{2\mu_t}{\lambda-\mu_t} \geq -1 \]

since $\mu_t \leq 0$ and 

\[ W_t \leq \frac{\lambda-\mu_t}{\lambda-\mu_t} = 1 \]

and

\[ \lambda W_t - Z_t = \frac{\lambda Z_t- \lambda \mu_t}{\lambda-\mu_t} - Z_t = \frac{\lambda Z_t - \lambda \mu_t - \lambda Z_t + \mu_t Z_t}{\lambda-\mu_t} = \frac{\mu_t(Z_t-\lambda)}{\lambda-\mu_t} \geq 0 \]

because $\mu_t \leq 0$ and $Z_t \leq \lambda$, giving $W_t \geq \frac{1}{\lambda} Z_t$ for all $t = 1, 2, \dots, m$. Lastly, because $W_t$ is a linear transformation of $Z_t$, we have

\[ \text{Var}[W_t | \mathcal{F}_{t-1}] = \left(\frac{1}{\lambda-\mu_t}\right)^2 \text{Var}[Z_t | \mathcal{F}_{t-1}] \leq \frac{1}{\lambda^2} \text{Var}[Z_t | \mathcal{F}_{t-1}] \]

for all $t = 1, 2, \dots, m$, so 

\[ \sum_{t=1}^m \text{Var}[W_t | \mathcal{F}_{t-1}] \leq \frac{1}{\lambda^2} \sum_{t=1}^m \text{Var}[Z_t | \mathcal{F}_{t-1}] \leq \frac{1}{\lambda^2} \left(\lambda^2 \cdot \overline{\mathbf{a}}(m) \cdot \mathbf{x}\right) = \overline{\mathbf{a}}(m) \cdot \mathbf{x} \]
\end{proof}

Next, we apply Freedman's inequality to the $W_t$ random variables to prove
\Cref{lemma:xq}.

\begin{lemma}\label{lemma:xq}
For all vectors $\mathbf{a} \in [0, 1]^n$ and positive integers $m \leq \frac{n}{2+1/\lambda}$, and for all $s, v > 0$,

\[ \Pr\left[\overline{\mathbf{a}}(m) \cdot \left(\left(1-\frac{(\lambda+1)m}{n-m}\right)\mathbf{x} - \mathbf{q}\right) \geq s \text{ and } \overline{\mathbf{a}}(m) \cdot \mathbf{x} \leq v\right] \leq \exp\left(-\frac{s^2}{2\lambda(s+\lambda v)}\right) \]
\end{lemma}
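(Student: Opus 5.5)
The plan is to apply Freedman's Inequality (\Cref{thm:freedman}) to the sequence $W_1,\dots,W_m$ from \Cref{lemma:wt}, and then translate the resulting bound on $\sum_t W_t$ into the event in the statement. As in \Cref{lemma:zt}, I will condition on the random order $\sigma$ throughout, so that $\overline{\mathbf{a}}(m)$ and hence $\overline{\mathbf{a}}(m)\cdot\mathbf{x}$ are fixed. Removing the conditioning at the end only averages a uniform bound, so the same bound holds unconditionally.

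\textbf{Checking the hypotheses.} By \Cref{lemma:wt}, each $W_t$ satisfies $|W_t|\le 1$ and $\mathbb{E}[W_t\mid\mathcal{F}_{t-1}]=0$, and $W_t$ is adapted to the filtration $\mathcal{F}_t$. The same lemma gives the quadratic variation bound
\[
T_m=\sum_{t=1}^m \text{Var}[W_t\mid\mathcal{F}_{t-1}]\le \overline{\mathbf{a}}(m)\cdot\mathbf{x}.
\]
So on the event $\overline{\mathbf{a}}(m)\cdot\mathbf{x}\le v$ we have $T_m\le v$.

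\textbf{Reducing the target event.} Next I show the target event is contained in a large-deviation event for $\sum_t W_t$. From \Cref{lemma:zt},
\[
\sum_{t=1}^m Z_t \ge \overline{\mathbf{a}}(m)\cdot\Bigl(\bigl(1-\tfrac{(\lambda+1)m}{n-m}\bigr)\mathbf{x}-\mathbf{q}\Bigr).
\]
From \Cref{lemma:wt}, $W_t\ge Z_t/\lambda$ for every $t$, so $\sum_t W_t\ge \frac1\lambda\sum_t Z_t$. Hence, if the left-hand quantity in the statement is at least $s$, then $\sum_{t=1}^m W_t\ge s/\lambda$.

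\textbf{Applying Freedman.} Freedman's Inequality with $a=s/\lambda$ and $b=v$ bounds the probability by
\[
\exp\!\left(-\frac{(s/\lambda)^2}{2(s/\lambda+v)}\right)=\exp\!\left(-\frac{s^2}{2\lambda(s+\lambda v)}\right),
\]
which is exactly the claimed bound.

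The only subtle point is the filtration bookkeeping. The variance bound on $T_m$ is stated in terms of $\overline{\mathbf{a}}(m)\cdot\mathbf{x}$, which depends only on $\sigma$, so it is legitimate only because we condition on $\sigma$ first. The potential in $Z_t$ also uses $\overline{\mathbf{a}}(m)$, which looks ahead to positions up to $m$. I would therefore check that the expectation bound in \Cref{lemma:zt} remains valid after conditioning on $\sigma$. In that bound the next element is no longer uniform, so $\lambda$-boundedness has to be interpreted in the sense \Cref{lemma:zt} already used; I take that lemma as given. With this, the argument is a direct substitution.
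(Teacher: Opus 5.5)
Relative to \Cref{lemma:zt} and \Cref{lemma:wt}, your proposal is correct and is exactly the paper's argument. The target event lies inside $\{\sum_{t\le m} Z_t\ge s\}$, hence inside $\{\sum_{t\le m} W_t\ge s/\lambda\}\cap\{\sum_{t\le m}\text{Var}[W_t\mid\mathcal F_{t-1}]\le v\}$, and Freedman's inequality with $a=s/\lambda$, $b=v$ gives the stated bound.

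The conditioning-on-$\sigma$ concern you raise is well founded, but it lives entirely in \Cref{lemma:zt}. That lemma's drift bound invokes $\lambda$-boundedness, which is an average over a uniformly random $\sigma(t)$, after all of $\sigma$ has already been fixed. The paper's proof of this lemma also simply takes \Cref{lemma:zt} as given.
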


\begin{proof}
We have

\begin{align}
&\Pr\left[\overline{\mathbf{a}}(m) \cdot \left(\left(1-\frac{(\lambda+1)m}{n-m}\right)\mathbf{x} - \mathbf{q}\right) \geq s \text{ and } \overline{\mathbf{a}}(m) \cdot \mathbf{x} \leq v\right] \\
&\leq \Pr\left[\sum_{t=1}^m Z_t \geq s \text{ and } \overline{\mathbf{a}}(m) \cdot \mathbf{x} \leq v\right] \label{eq:rq_2} \\
&\leq \Pr\left[\sum_{t=1}^m W_t \geq \frac{s}{\lambda} \text{ and } \overline{\mathbf{a}}(m) \cdot \mathbf{x} \leq v\right] \label{eq:rq_3} \\
&\leq \Pr\left[\sum_{t=1}^m W_t \geq \frac{s}{\lambda} \text{ and } \sum_{t=1}^m \text{Var}[W_t | \mathcal{F}_{t-1}] \leq v\right] \label{eq:rq_4} \\
&\leq \exp\left(-\frac{(s/\lambda)^2}{2(s/\lambda+v)}\right) \label{eq:rq_5} \\
&= \exp\left(-\frac{s^2}{2\lambda(s+\lambda v)}\right) \label{eq:rq_6}
\end{align}

Line (\ref{eq:rq_2}) follows by the lower bound on $\sum Z_t$ in \Cref{lemma:zt}. Line (\ref{eq:rq_3}) follows by $W_t \geq \frac{1}{\lambda} Z_t$ for all $t = 1, 2, \dots, m$ in \Cref{lemma:wt}. Line (\ref{eq:rq_4}) follows by the upper bound on $\sum \text{Var}[W_t | \mathcal{F}_{t-1}]$ in \Cref{lemma:wt}. Line (\ref{eq:rq_5}) follows by Freedman's Inequality where the preconditions are guaranteed by \Cref{lemma:wt}. Line (\ref{eq:rq_6}) follows by simplification.
\end{proof}

\subsubsection{Step 3}
\label{subsubsec:step3} 

In \Cref{lemma:aam}, we show that $\overline{\mathbf{a}}(m) \cdot \mathbf{x}$
is close to $\frac{m}{n} \, \mathbf{a} \cdot \mathbf{x}$ with high
probability.

\begin{lemma}\label{lemma:aam}
For all vectors $\mathbf{a} \in [0, 1]^n$ and integers $m \in [n]$, and for all $s > 0$,

\[ \Pr\left[\left|\left(\frac{m}{n} \cdot \mathbf{a} - \overline{\mathbf{a}}(m)\right) \cdot \mathbf{x}\right| \geq s\right] \leq 2\exp\left(-\frac{s^2}{2\left(\frac{m}{n} \mathbf{a} \cdot \mathbf{x}+\frac{s}{3}\right)}\right) \]
\end{lemma}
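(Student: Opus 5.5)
The plan is to apply \Cref{thm:bernstein_replacement} directly, since $\overline{\mathbf{a}}(m)\cdot\mathbf{x}$ only depends on which elements occupy the first $m$ positions of $\sigma$. Because $\sigma$ is a uniformly random permutation of $U$, the set $\sigma[1:m]$ is a uniformly random $m$-subset of $U$. Consider the collection of $n$ values $\{a_e x_e : e \in U\}$, which lie in $[0,1]$ because $a_e, x_e \in [0,1]$. Then
\[ \overline{\mathbf{a}}(m)\cdot\mathbf{x} = \sum_{e \in \sigma[1:m]} a_e x_e \]
is exactly the sum over a uniform random subcollection of $m$ of these values. Its expectation is $\frac{m}{n}\sum_e a_e x_e = \frac{m}{n}\,\mathbf{a}\cdot\mathbf{x}$. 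So the quantity in the lemma is precisely the deviation of this sum from its mean.

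Next we bound the variance parameter $\nu^2 = m\varsigma^2$. Let $Y$ be a single uniform sample from the values $a_e x_e$. Since $Y \in [0,1]$, we have $\varsigma^2 = \mathrm{Var}(Y) \le \mathbb{E}[Y^2] \le \mathbb{E}[Y] = \frac{1}{n}\mathbf{a}\cdot\mathbf{x}$. One could equivalently invoke Bhatia--Davis (\Cref{thm:bhatia_davis}) to get $(1-\mu)\mu \le \mu$. Hence $\nu^2 \le \frac{m}{n}\,\mathbf{a}\cdot\mathbf{x}$.

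Plugging this into both tails of \Cref{thm:bernstein_replacement} with deviation $s$ bounds each of $\Pr[\overline{\mathbf{a}}(m)\cdot\mathbf{x} - \frac{m}{n}\mathbf{a}\cdot\mathbf{x} \ge s]$ and $\Pr[\cdots \le -s]$ by
\[ \exp\!\left(-\frac{s^2}{2\left(\frac{m}{n}\mathbf{a}\cdot\mathbf{x} + s/3\right)}\right). \]
This uses the fact that the exponent is monotone decreasing in $\nu^2$, so the upper bound on $\nu^2$ only weakens the bound. A union bound over the two tails gives the factor $2$.

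There is no real obstacle here. The only points needing care are confirming that the values lie in $[0,1]$, that the prefix of a uniform permutation is a uniform subset, and that replacing $\nu^2$ by its upper bound is legitimate, which follows from the monotonicity just noted.
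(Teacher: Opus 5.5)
Your proposal is correct and follows essentially the same route as the paper: view $\overline{\mathbf{a}}(m)\cdot\mathbf{x}$ as a sum of $m$ values $a_ex_e\in[0,1]$ sampled without replacement, bound $\varsigma^2\le \frac{1}{n}\mathbf{a}\cdot\mathbf{x}$ (the paper does this via Bhatia--Davis), and apply both tails of \Cref{thm:bernstein_replacement}. Your explicit remarks that the bound is monotone in $\nu^2$ and that a union bound over the two tails gives the factor $2$ fill in the steps the paper leaves implicit.
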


\begin{proof}
Note that $\overline{\mathbf{a}}(m) \cdot \mathbf{x}$ is the sum of $m$ uniform random values from $\{a_1x_1, a_2x_2, \dots, a_n x_n\}$ sampled without replacement. Let $z$ be a uniform random sample from $\{a_1x_1, a_2x_2, \dots, a_nx_n\}$. Because $z \in [0, 1]$, we have

\[ \varsigma^2 := \text{Var}(z) \leq (1-\mathbb{E}[z])(\mathbb{E}[z]-0) \leq \mathbb{E}[z] = \frac{1}{n} \cdot \mathbf{a} \cdot \mathbf{x} \]

by Bhatia-Davis inequality (\Cref{thm:bhatia_davis}). Define $\nu^2 := m \varsigma^2$. Via Bernstein's inequality without replacement (\Cref{thm:bernstein_replacement}), we have

\begin{align*}
\Pr\left[\left|\left(\frac{m}{n} \cdot \mathbf{a} - \overline{\mathbf{a}}(m)\right) \cdot \mathbf{x}\right| \geq s\right] &= \Pr\left[\left|\overline{\mathbf{a}}(m) \cdot \mathbf{x} - \frac{m}{n} \cdot \mathbf{a} \cdot \mathbf{x}\right| \geq s\right] \\
&= \Pr\left[\left|\overline{\mathbf{a}}(m) \cdot \mathbf{x} - \mathbb{E}\left[\overline{\mathbf{a}}(m) \cdot \mathbf{x}\right]\right| \geq s\right] \\
&\leq 2\exp\left(-\frac{s^2}{2\left(\nu^2+\frac{s}{3}\right)}\right) \\
&\leq 2\exp\left(-\frac{s^2}{2\left(\frac{m}{n} \cdot \mathbf{a} \cdot \mathbf{x}+\frac{s}{3}\right)}\right)
\end{align*}
\end{proof}

\subsubsection{Combining the Steps}
\label{subsubsec:combining}

We combine the results of our three steps, given in \Cref{lemma:qX},
\Cref{lemma:xq}, and \Cref{lemma:aam}, to lower bound
$\overline{\mathbf{a}}(m) \cdot \mathbf{X}$ (and thus
$\mathbf{a} \cdot \mathbf{X}$) by a constant fraction of
$\mathbf{a} \cdot \mathbf{x}$ with high probability
(\Cref{lemma:comb_dirty}).
We then choose $m = \Theta(n/\lambda)$ to complete the proof of \Cref{thm:main}.

\begin{lemma}\label{lemma:comb_dirty}
Suppose $\lambda \geq 1$. For all vectors $\mathbf{a} \in [0, 1]^n$, positive integers $m \leq \frac{n}{2+\lambda}$, and $s > 0$, 

\[ \Pr\left[\mathbf{a} \cdot \mathbf{X} \leq \frac{m}{n}\left(1-\frac{(\lambda+1)m}{n-m}\right)\left( \mathbf{a} \cdot \mathbf{x}\right) - \left(\lambda+2-\frac{(\lambda+1)m}{n-m}\right)s \right] \leq 4\exp\left(-\frac{s^2}{2\left(\frac{m}{n} \mathbf{a} \cdot \mathbf{x} +2s\right)}\right) \]

\end{lemma}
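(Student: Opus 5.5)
The plan is to combine \Cref{lemma:qX}, \Cref{lemma:xq} and \Cref{lemma:aam} with a union bound over three bad events, all using the same deviation parameter $s$. Write $\mu := \frac{m}{n}\,\mathbf{a}\cdot\mathbf{x}$ and $c := 1-\frac{(\lambda+1)m}{n-m}$.

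\textbf{Checking the hypotheses.} Since $\lambda \ge 1$ we have $2+1/\lambda \le 2+\lambda$, so $m \le \frac{n}{2+\lambda} \le \frac{n}{2+1/\lambda}$ and \Cref{lemma:xq} applies. The same computation as in \Cref{lemma:zt} gives $\frac{m}{n-m} \le \frac{\lambda}{\lambda+1}$. In fact the stronger hypothesis $m \le n/(2+\lambda)$ gives $n-m \ge (\lambda+1)m$, so $c \ge 0$. This sign fact will be needed when multiplying an inequality by $c$.

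\textbf{The three bad events.}
\begin{itemize}
\item $\mathcal{E}_3$: $|(\frac{m}{n}\mathbf{a}-\overline{\mathbf{a}}(m))\cdot\mathbf{x}| \ge s$. By \Cref{lemma:aam} its probability is at most $2\exp(-s^2/(2(\mu+s/3)))$, which is at most $2\exp(-s^2/(2(\mu+2s)))$. Off $\mathcal{E}_3$ we have $\mu - s \le \overline{\mathbf{a}}(m)\cdot\mathbf{x} \le \mu+s =: v$.
\item $\mathcal{E}_2$: $\overline{\mathbf{a}}(m)\cdot(c\,\mathbf{x}-\mathbf{q}) \ge \lambda s$ and $\overline{\mathbf{a}}(m)\cdot\mathbf{x}\le v$. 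Applying \Cref{lemma:xq} with $s' = \lambda s$ and this $v$ bounds it by $\exp(-\lambda^2s^2/(2\lambda(\lambda s+\lambda v))) = \exp(-s^2/(2(\mu+2s)))$.
\item $\mathcal{E}_1$: $\overline{\mathbf{a}}(m)\cdot(\mathbf{q}-\mathbf{X})\ge s$ and $\overline{\mathbf{a}}(m)\cdot\mathbf{x}\le v$. By \Cref{lemma:qX} it has probability at most $\exp(-s^2/(2(s+v))) = \exp(-s^2/(2(\mu+2s)))$.
\end{itemize}
The union of the three events therefore has probability at most $4\exp(-s^2/(2(\mu+2s)))$.

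\textbf{Deterministic chain off the bad events.} Outside $\mathcal{E}_1\cup\mathcal{E}_2\cup\mathcal{E}_3$, the condition $\overline{\mathbf{a}}(m)\cdot\mathbf{x}\le v$ holds, so both Freedman-type conclusions are in force. Using $\mathbf{a}\cdot\mathbf{X} \ge \overline{\mathbf{a}}(m)\cdot\mathbf{X}$, which holds because $\mathbf{a},\mathbf{X}\ge 0$, and then $c\ge 0$ with $\overline{\mathbf{a}}(m)\cdot\mathbf{x} \ge \mu-s$, I get
\[
\mathbf{a}\cdot\mathbf{X} > \overline{\mathbf{a}}(m)\cdot\mathbf{q} - s > c\,\overline{\mathbf{a}}(m)\cdot\mathbf{x} - (\lambda+1)s \ge c\mu - (\lambda+1+c)s .
\]
Since $\lambda+1+c = \lambda+2-\frac{(\lambda+1)m}{n-m}$, the right-hand side is exactly the threshold in the statement, and the lemma follows.

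\textbf{Main obstacle.} The only delicate points are bookkeeping. First, the conditions ``$\overline{\mathbf{a}}(m)\cdot\mathbf{x}\le v$'' in \Cref{lemma:qX} and \Cref{lemma:xq} must be discharged by the complement of $\mathcal{E}_3$ rather than being assumed. Second, the sign of $c$ must be checked, which is exactly why the hypothesis is the stronger $m \le n/(2+\lambda)$ rather than $m \le n/(2+1/\lambda)$.
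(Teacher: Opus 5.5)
Your proposal is correct and follows essentially the same route as the paper: the same three bad events (with $v=\frac{m}{n}\mathbf{a}\cdot\mathbf{x}+s$ and deviation $\lambda s$ fed into \Cref{lemma:xq}), the same union bound, and the same deterministic chain on the complement. Your explicit checks that $m\le n/(2+\lambda)$ implies the hypothesis of \Cref{lemma:xq}, and that $1-\frac{(\lambda+1)m}{n-m}\ge 0$, are details the paper uses only implicitly, the latter when it multiplies the lower bound $\overline{\mathbf{a}}(m)\cdot\mathbf{x}>\frac{m}{n}\mathbf{a}\cdot\mathbf{x}-s$ by this coefficient.
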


\begin{proof}
Let $A_{s, v}, B_{s, v}$, and $C_s$ be the ``bad'' events corresponding to \Cref{lemma:qX}, \Cref{lemma:xq}, and \Cref{lemma:aam}, respectively, i.e. 

\begin{align*}
A_{s, v} &:= \left(\overline{\mathbf{a}}(m) \cdot (\mathbf{q} - \mathbf{X}) \geq s\right) &\text{ and } \left(\overline{\mathbf{a}}(m) \cdot \mathbf{x} \leq v\right) \\
B_{s, v} &:= \left(\overline{\mathbf{a}}(m) \cdot \left(\left(1-\frac{(\lambda+1)m}{n-m}\right)\mathbf{x} - \mathbf{q}\right) \geq s\right) &\text{ and } \left(\overline{\mathbf{a}}(m) \cdot \mathbf{x} \leq v\right) \\
C_s &:= \left(\left|\left(\frac{m}{n} \cdot \mathbf{a} - \overline{\mathbf{a}}(m)\right) \cdot \mathbf{x}\right| \geq s\right)
\end{align*}

Expanding the inequality in $C_s$ gives $C_s := \left(\overline{\mathbf{a}}(m) \cdot \mathbf{x} \geq \frac{m}{n} \cdot \mathbf{a} \cdot \mathbf{x} + s\right)$ or $\left(\overline{\mathbf{a}}(m) \cdot \mathbf{x} \leq \frac{m}{n} \cdot \mathbf{a} \cdot \mathbf{x} - s\right)$. Let $D_s := A_{s, v} \lor B_{\lambda s, v} \lor C_s$ where $v = \frac{m}{n} \cdot \mathbf{a} \cdot \mathbf{x} + s$ is the right hand side in the first inequality for event $C_s$. Let $E_s$ be the event described in the lemma statement. We now show $\neg D_s \implies \neg E_s$ and thus $ E_s \implies D_s$.

If $\neg D_s = \neg A_{s, v} \land \neg B_{\lambda s, v} \land \neg C_s$ occurs, then because $\neg C_s$ occurs, we have $\overline{\mathbf{a}}(m) \cdot \mathbf{x} < v$ and $\overline{\mathbf{a}}(m) \cdot \mathbf{x} > \frac{m}{n} \cdot \mathbf{a} \cdot \mathbf{x} - s$. Thus $\neg A_{s, v} = \left(\overline{\mathbf{a}}(m) \cdot (\mathbf{q} - \mathbf{X}) < s\right) \text{ or } \left(\overline{\mathbf{a}}(m) \cdot \mathbf{x} > v\right) \equiv \left(\overline{\mathbf{a}}(m) \cdot (\mathbf{q} - \mathbf{X}) < s\right)$ and similarly $\neg B_{\lambda s, v} \equiv \left(\overline{\mathbf{a}}(m) \cdot \left(\left(1-\frac{(\lambda+1)m}{n-m}\right)\mathbf{x} - \mathbf{q}\right) < \lambda s\right)$. Summing the inequalities given by $\neg A_{s, v}$ and $\neg B_{\lambda s, v}$ and using the second inequality given by $\neg C_s$, we have

\begin{align*}
\overline{\mathbf{a}}(m) \cdot \left(\left(1-\frac{(\lambda+1)m}{n-m}\right)\mathbf{x} - \mathbf{X}\right) &< s + \lambda s \\
\overline{\mathbf{a}}(m) \cdot \mathbf{X} &> 
\left(1-\frac{(\lambda+1)m}{n-m}\right)\left(\overline{\mathbf{a}}(m) \cdot\mathbf{x}\right) - s - \lambda s \\
&> \left(1-\frac{(\lambda+1)m}{n-m}\right)\left(\frac{m}{n} \cdot \mathbf{a} \cdot \mathbf{x} - s\right) - s - \lambda s \\
&= \frac{m}{n}\left(1-\frac{(\lambda+1)m}{n-m}\right)\left( \mathbf{a} \cdot \mathbf{x}\right) - \left(1-\frac{(\lambda+1)m}{n-m}\right)s - s-\lambda s \\
&= \frac{m}{n}\left(1-\frac{(\lambda+1)m}{n-m}\right)\left( \mathbf{a} \cdot \mathbf{x}\right) - \left(\lambda+2 -\frac{(\lambda+1)m}{n-m}\right)s
\end{align*}

Since $\mathbf{a} \cdot \mathbf{X} \geq \overline{\mathbf{a}}(m) \cdot \mathbf{X}$ 
(as $X_e \geq 0$ and $\overline{a}_e(m) \leq a_e$ coordinatewise), 
$\neg D_s \implies \neg E_s$, so $E_s \implies D_s$, and 
$\Pr[E_s] \leq \Pr[D_s] \leq \Pr[A_{s,v}] + \Pr[B_{\lambda s, v}] + \Pr[C_s]$ 
via a union bound. Thus

\begin{align*}
\Pr[E_s] &\leq \Pr[A_{s, v}] + \Pr[B_{\lambda s, v}] + \Pr[C_s] \\
&\leq \exp\left(-\frac{s^2}{2(s+v)}\right) + \exp\left(-\frac{(\lambda s)^2}{2\lambda(\lambda s+\lambda v)}\right) + 2\exp\left(-\frac{s^2}{2\left(\frac{m}{n} \mathbf{a} \cdot \mathbf{x}+\frac{s}{3}\right)}\right) \\
&= 2\exp\left(-\frac{s^2}{2\left(s + \left(\frac{m}{n} \cdot \mathbf{a} \cdot \mathbf{x} + s\right)\right)}\right) + 2\exp\left(-\frac{s^2}{2\left(\frac{m}{n} \mathbf{a} \cdot \mathbf{x}+\frac{s}{3}\right)}\right) \\
&\leq 4\exp\left(-\frac{s^2}{2\left(\frac{m}{n} \mathbf{a} \cdot \mathbf{x} + 2s\right)}\right)
\end{align*}

\end{proof}

Lastly, we approximately optimize and clean the statement in \Cref{lemma:comb_dirty} to prove our main technical theorem (\Cref{thm:main}).

\begin{proof}[Proof of \Cref{thm:main}]
Pick $m = \frac{n}{2\lambda+4}$, and note that $m \leq \frac{n}{2+\lambda}$ for $\lambda \geq 1$. Note that

\[ \frac{(\lambda+1)m}{n-m} = \frac{(\lambda+1)(n/(2\lambda+4))}{n-n/(2\lambda+4)} = \frac{(\lambda+1)n}{(2\lambda+4)n-n} = \frac{\lambda+1}{2\lambda+3} \]

Thus the coefficient of $\mathbf{a} \cdot \mathbf{x}$ in \Cref{lemma:comb_dirty} is

\[ \frac{m}{n}\left(1-\frac{(\lambda+1)m}{n-m}\right) = \frac{1}{2\lambda+4}\left(1-\frac{\lambda+1}{2\lambda+3}\right) = \frac{1}{2\lambda+4}\left(\frac{\lambda+2}{2\lambda+3}\right) = \frac{1}{4\lambda+6} \]

For ease of simplification, let $\bar{\mu} = \frac{\mathbf{a} \cdot \mathbf{x}}{\lambda+1}$. Then

\[ \frac{\mathbf{a} \cdot \mathbf{x}}{4\lambda+6} = \frac{\lambda+1}{4\lambda+6} \cdot \bar{\mu} = \left(\frac{1}{4} - \frac{1}{8\lambda+12}\right)\bar{\mu} \geq \frac{\bar{\mu}}{5} \]

Next, pick $s > 0$ s.t.

\[ \left(\lambda+2-\frac{(\lambda+1)m}{n-m}\right)s = \delta \bar{\mu} \]

Because $\frac{(\lambda+1)m}{n-m} = \frac{\lambda+1}{2\lambda+3} \in [0, 1]$, we know that $s \in \left[\frac{\delta \bar{\mu}}{\lambda+2}, \frac{\delta \bar{\mu}}{\lambda+1}\right]$. Thus

\begin{align}
4\exp\left(-\frac{s^2}{2\left(\frac{m}{n} \mathbf{a} \cdot \mathbf{x} + 2s\right)}\right) &= 4\exp\left(-\frac{s^2}{2\left(\frac{\lambda+1}{2\lambda+4} \cdot \bar{\mu} + 2s\right)}\right) \\
&\leq 4\exp\left(-\frac{(\delta \bar{\mu} / (\lambda+2))^2}{2\left(\frac{\lambda+1}{2\lambda+4} \cdot \bar{\mu} + \frac{2\delta \bar{\mu}}{\lambda+1}\right)}\right) \\
&= 4\exp\left(-\frac{\delta^2 \bar{\mu}}{2(\lambda+2)^2\left(\frac{\lambda+1}{2\lambda+4} + \frac{2\delta}{\lambda+1}\right)}\right)
\end{align}

Working separately with the denominator of the above expression, we have

\begin{align}
2(\lambda+2)^2\left(\frac{\lambda+1}{2\lambda+4} + \frac{2\delta}{\lambda+1}\right) &\leq 2(\lambda+2)^2\left(\frac{\lambda+1}{2\lambda+4} + \frac{2}{5\lambda+5}\right) \label{eq:denom_1} \\
&=(\lambda+1)\left(\lambda+2+\frac{4(\lambda+2)^2}{5(\lambda+1)^2}\right) \label{eq:denom_2} \\
&\leq (\lambda+1)\left(\lambda+2+\frac{4\cdot3^2}{5\cdot 2^2}\right) \label{eq:denom_3} \\
&\leq (\lambda+1)(\lambda+4) \label{eq:denom_4} \\
&\leq 3(\lambda+1)^2 \label{eq:denom_5}
\end{align}

Line (\ref{eq:denom_1}) follows by $\delta < 1/5$. Line (\ref{eq:denom_2}) follows by simplification. Line (\ref{eq:denom_3}) follows by the fact that $\frac{4(\lambda+2)^2}{5(\lambda+1)^2} = \frac{4}{5} \left(1+\frac{1}{\lambda+1}\right)^2$, and thus for $\lambda \geq 1$ the expression is maximized at $\lambda = 1$. Line (\ref{eq:denom_4}) follows by $\frac{4 \cdot 3^2}{5 \cdot 2^2} \leq 2$. Line (\ref{eq:denom_5}) follows by $\lambda+4 \leq 3(\lambda+1)$ for $\lambda \geq 1$.

Putting everything together, we obtain

\begin{align}
\Pr\left[\mathbf{a} \cdot \mathbf{X} \leq \left(\frac{1}{5} - \delta\right)\frac{\mathbf{a} \cdot \mathbf{x}}{\lambda+1}\right] &= \Pr\left[\mathbf{a} \cdot \mathbf{X} \leq \left(\frac{1}{5} - \delta\right)\bar{\mu}\right] \\
&\leq \Pr\left[\mathbf{a} \cdot \mathbf{X} \leq \frac{\mathbf{a} \cdot \mathbf{x}}{4\lambda+6} - \delta \bar{\mu} \right] \label{eq:clean_1} \\
&= \Pr\left[\mathbf{a} \cdot \mathbf{X} \leq \frac{m}{n}\left(1-\frac{(\lambda+1)m}{n-m}\right)\left( \mathbf{a} \cdot \mathbf{x}\right) - \left(\lambda+2-\frac{(\lambda+1)m}{n-m}\right)s \right] \label{eq:clean_2} \\
&\leq 4\exp\left(-\frac{s^2}{2\left(\frac{m}{n} \mathbf{a} \cdot \mathbf{x} + 2s\right)}\right) \label{eq:clean_3} \\
&\leq 4\exp\left(-\frac{\delta^2 \bar{\mu}}{2(\lambda+2)^2\left(\frac{\lambda+1}{2\lambda+4} + \frac{2\delta}{\lambda+1}\right)}\right) \label{eq:clean_4} \\
&\leq 4\exp\left(-\frac{\delta^2 \bar{\mu}}{3(\lambda+1)^2}\right) \label{eq:clean_5}
\end{align}

Line (\ref{eq:clean_1}) follows by the inequality $\frac{\mathbf{a} \cdot \mathbf{x}}{4\lambda+6} \geq \frac{\bar{\mu}}{5}$. Line (\ref{eq:clean_2}) follows by our choice of $m$ and $s$. Line (\ref{eq:clean_3}) follows by \Cref{lemma:comb_dirty}. Line (\ref{eq:clean_4}) follows by our denominator simplifications. Line (\ref{eq:clean_5}) follows by our denominator upper bound of $3(\lambda+1)^2$.
\end{proof}

Finally, \Cref{thm:main-intro} follows immediately from \Cref{thm:main} and \Cref{lem:k-bounded} and the fact that the output of the $k$-matroid AW algorithm is $S = \{e \in U : X_e = 1\}$.


\section{Application to Matroid Intersection Coloring} 
\label{sec:matroidcoloring}

In this section, we show to apply \Cref{thm:main-intro} to obtain an improved approximation algorithm for the simultaneous coloring of several matroids, proving \Cref{thm:coloring}. 

Suppose $M_1, \dots, M_k$ are matroids on a common ground set $U$ with $|U| = n$, each with chromatic number $\chi(M_j)$, let
$\chimax := \max_{j \in [k]} \chi(M_j)$, and let $\Mint := \bigcap_{j\in[k]} M_j$.
The \emph{matroid intersection coloring} problem asks for a partition of $U$ into 
the fewest color classes, each of which is a common independent set of 
$M_1, \dots, M_k$; the minimum number of colors required is $\chi(\Mint)$. 
Since every color class must be independent in each $M_j$ individually, we have 
$\chi(\Mint) \ge \chimax$, so $\chimax$ is a natural lower bound and the 
approximation ratio is frequently measured against it.

The best previously known approximation ratio for $k$ general matroids is $O(k^2)$, 
due to Arndt, Moseley, Pruhs, Swamy, and Zlatin~\cite{ArndtMPS26}. That work gives two 
algorithms. The first is an unconditional $k(k-1)$-approximation for all $k$, which 
we state formally here as we invoke it in our algorithm.

\begin{theorem}\cite{ArndtMPS26}\label{thm:constructive2}
There is a polynomial-time algorithm that, given matroids $M_1, \dots, M_k$ on a 
common ground set $U$, produces a feasible coloring of $\Mint$ using at most 
$k(k-1) \cdot \chimax$ colors.
\end{theorem}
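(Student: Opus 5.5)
This is a cited result (\cite{ArndtMPS26}), so the plan is to reconstruct a proof in its spirit rather than use the present paper's concentration machinery. The natural approach is an orientation/degeneracy argument on a conflict structure. For each matroid $M_j$, fix a partition of $U$ into $\chimax$ independent sets, which exist and can be found in polynomial time by matroid partitioning, since $\chi(M_j)\le\chimax$. Equivalently, by Edmonds' covering theorem, $|A|\le \chimax\, r_j(A)$ for every $A\subseteq U$ and every $j$.

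We then build the coloring greedily in a carefully chosen order, controlling conflicts through exchange properties. The key fact to establish is a sparsity bound. For every nonempty $A\subseteq U$, there is an element $e\in A$ that is ``blocked'' by at most $k(k-1)\chimax-1$ colors when we try to extend a partial coloring of $A-e$. With this in hand, we order elements by repeatedly peeling off such a low-conflict element, in the style of a degeneracy ordering. We then color in the reverse order, and a free color always remains.

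To define blocking, we use the following structure. Element $e$ cannot join a color class $C$ only if $C+e$ is dependent in some $M_j$. In that case $e\in \mathrm{span}_j(C)$, and $C+e$ contains a unique circuit of $M_j$. Counting the pairs (class, matroid) that block $e$ and averaging over $A$ is the central step. Using $|A|\le\chimax r_j(A)$, the average number of classes whose span in $M_j$ contains $e$ should be bounded by roughly $(k-1)\chimax$ per matroid. The factor $k-1$, rather than $k$, arises because one matroid can be handled directly: we color according to a fixed $M_1$-partition refined by the others.

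Concretely, I would first take the $\chimax$ classes of an $M_1$-coloring. Within each class the $M_1$ constraint is automatically satisfied. We then need to color each class so that the remaining $k-1$ matroids are respected, using about $k\chimax$ colors per class, via the averaging argument applied to the $k-1$ matroids restricted to that class. This gives the bound $\chimax\cdot k(k-1)$ only if the per-class count is $(k-1)\chimax$ divided appropriately. I therefore expect the main obstacle to be getting the constant exactly $k(k-1)$ instead of something like $k^2$ or $2k(k-1)$. The delicate step is the counting lemma that bounds the total blocking incidences by $\sum_{j\ge2}\chimax\,r_j(A)$-type quantities while sharing color classes across matroids.

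If this fails, I would fall back on the argument in \cite{ArndtMPS26} directly, using their list-coloring or orientation lemma as given.
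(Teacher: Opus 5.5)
The paper does not prove this statement; it imports it from \cite{ArndtMPS26} as a black box. Your reconstruction therefore has to stand on its own, and it has a genuine gap: all of the content is deferred to the ``sparsity lemma'', which is neither proved nor correctly posed.

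Whether $e$ can join a class $C$ depends on all of $C$ (is $e\in\mathrm{span}_j(C)$?). So there is no fixed conflict graph, and the number of blocked colours is a function of the colouring of $A-e$, not of $A$ and $e$. A peeling (degeneracy) order is fixed before the colouring exists, so you would need the bound for every proper colouring of $A-e$ by the palette. The condition $|X|\le\chimax r_j(X)$ gives nothing like that.

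What the density condition does give is an average for a \emph{fixed} colouring. Summing $|\mathrm{span}_j(C_i)|\le\chimax|C_i|$ over classes, a coloured set $B$ produces at most $\chimax|B|$ pairs $(e,C_i)$ with $e\in\mathrm{span}_j(C_i)$. That is $\chimax|B|/|U\setminus B|$ per uncoloured element on average, which degrades exactly as the colouring nears completion.

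Per-element control genuinely fails. Take the binary projective geometry $PG(r-1,2)$, which is point-transitive with $\chimax=\lceil(2^r-1)/r\rceil$. Every point $e$ lies in the span of each of the $2^{r-1}-1\approx\frac r2\chimax$ pairwise disjoint independent pairs $\{a,a+e\}$. So for large $r$, every element admits far more than $k(k-1)\chimax$ disjoint potential blockers, and by transitivity no element is structurally ``low-conflict''. Contrast graphic matroids: there an edge at a vertex of degree at most $2\chimax-1$ can be blocked only by classes containing another edge at that vertex. General matroids have no such colouring-independent local certificate, and supplying a substitute is the real work.

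Your concrete plan does not close either, and the problem is not the constant. Refining a fixed $M_1$-partition into $\chimax$ classes, each recoloured with its own palette of about $k\chimax$ colours, costs on the order of $k\chimax^2$ colours. Sharing palettes across $M_1$-classes would reintroduce the $M_1$ constraint you claimed to get for free.

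For an arbitrary $M_1$-partition this loss is real. On $U=[\chimax]\times[\chimax]$, let $M_1$ allow one element per column and $M_2$ one element per row. Then $\chi(M_1)=\chi(M_2)=\chi(\Mint)=\chimax$ (a Latin square). But the rows form a valid $M_1$-partition, each row is a single capacity-one part of $M_2$, and refining them needs $\chimax^2$ colours.

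So the proposal contains no mechanism that yields a bound linear in $\chimax$. The closing ``fall back on the argument in \cite{ArndtMPS26}'' is a citation, not a proof.
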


The second algorithm of~\cite{ArndtMPS26} achieves a $(1+\varepsilon)$-approximation 
when $\chimax$ is sufficiently large, but is restricted to $k = 2$ matroids using a concentration result on swap rounding. We follow a similar algorithmic framework, employing our concentration result on $k$ matroids to obtain the following, which is a formal version of~\Cref{thm:coloring}.

\begin{theorem}
Let $M_1, \ldots, M_k$ be matroids on common ground set $U$ with $|U| = n$. Let $\chimax = \max_{j \in [k]} \chi(M_j)$ and $OPT = \chi(M_{int})$. Assume $\chimax \geq Ck^3 \log n$ for some sufficiently large constant $C$. Then there is a randomized algorithm which runs in polynomial time and produces a feasible coloring of $M_{int}$ using at most $O(k\log k) \cdot OPT$ colors with probability at least $1- n^{-10}$. 
\end{theorem}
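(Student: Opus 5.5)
Our plan is an iterative ``peel off common independent sets'' scheme. We run it for $O(k\log k)$ rounds of $O(\chimax)$ colors each, and then clean up the leftovers with \Cref{thm:constructive2}. Set $\chi:=\chimax$. Since each $M_j$ has chromatic number at most $\chi$, the point $\mathbf{x}=\frac{1}{\chi}\mathbf{1}_U$ lies in every $P(M_j)$, hence in $\bigcap_j P(M_j)$. The same holds for any remaining ground set $U'\subseteq U$, because restriction does not increase chromatic number. We will run the $k$-matroid AW CRS many times on independent copies to produce common independent sets, and use \Cref{thm:main-intro} to argue that every ``large'' part of the ground set loses a constant fraction (order $1/k$) of its mass each round.

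First we fix the targets for covering. For each matroid $M_j$ and each set $T$ that certifies the chromatic number, e.g.\ $|T|/r_j(T)$ large, we cannot enumerate all such $T$. We therefore track something simpler: the potential $\max_j \chi(M_j|_{U'})$ of the remaining set $U'$. Concretely, in one phase we draw $N=\Theta(\chi)$ independent AW outputs $S_1,\dots,S_N$ on $\mathbf{x}=\frac1\chi\mathbf{1}_{U'}$ and give each a new color, assigning every element to the first $S_i$ containing it. For a fixed element $e$, applying \Cref{thm:main-intro} with $a=\mathbf{1}_{\{e\}}$ is useless since $Q$ is tiny. Instead we use linearity over rounds: each $S_i$ contains $e$ with probability at least $\frac{1}{(k+1)\chi}$ by the AW expectation bound, independently across $i$. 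So after $N=c(k+1)\chi\log(\cdot)$ draws, $e$ survives uncolored with small probability. This alone gives $O(k\log n)$, which is too weak. The concentration theorem is what lets us replace $\log n$ by $\log k$.

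The refined plan uses \Cref{thm:main-intro} on structured sets. Take a fixed optimal-for-$M_1$ partition $B_1,\dots,B_\chi$ into independent sets (computable by matroid partitioning), and more generally bases witnessing $\chi(M_j|_{U'})$ via matroid union, i.e.\ the densest sets $T$ with $|T|/r_j(T)=\chi(M_j|_{U'})$ rounded up. For such a dense set $T$, $Q=\sum_{e\in T}x_e=|T|/\chi\ge r_j(T)\cdot\chi(M_j|_{U'})/\chi$. While $\chi(M_j|_{U'})\ge\chi/2$ and $r_j(T)$ is not tiny, \Cref{thm:main-intro} with $\delta=1/10$ shows each $S_i$ takes $\Omega(Q/k)$ of $T$ with failure probability $4\exp(-\Omega(Q/k^3))$. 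Since $Q\ge\Omega(\chi)\ge\Omega(Ck^3\log n)$, this is $n^{-\Omega(C)}$. We union bound over the polynomially many ($O(n)$ per phase, by nestedness of densest-set chains) targets and the $\mathrm{poly}(n)$ draws. Then after $O(k\chi)$ colors in a phase, the density $|T\cap U'|/r_j(T)$ of every dense set halves, so $\chimax$ of the remainder halves. After $O(\log k)$ phases, costing $O(k\log k)\chi$ colors, the remainder has $\chimax'\le\chi/k^2$. Applying \Cref{thm:constructive2} to it then costs $k(k-1)\chi/k^2\le\chi$ colors. The total is $O(k\log k)\chimax\le O(k\log k)\OPT$, with failure probability at most $n^{-10}$ for $C$ large.

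The main obstacle is the middle step: showing that concentration on polynomially many fixed target sets actually forces $\chi(M_j|_{U'})$ to drop geometrically. The difficulty is that the densest set can change as $U'$ shrinks, and dense sets with small rank have small $Q$. My first attempt would be to use the matroid-union formula $\chi(M|_{U'})=\max_{T\subseteq U'}\lceil|T|/r(T)\rceil$. I would argue that a high-density $T$ must have $|T|\ge \chi/2\cdot r(T)\ge\chi/2$, so $Q\ge 1/2\cdot$const$\cdot r(T)$. I would then redo the bound per round, adaptively on the current $U'$, conditioning on the past and union bounding over rounds. Rather than fixing targets in advance, I would apply \Cref{thm:main-intro} to the current densest sets of each matroid, which are determined before the fresh AW randomness is drawn. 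The remaining worry is low-rank dense sets where $Q$ is only about $\chi$. That is exactly why the hypothesis $\chimax=\Omega(k^3\log n)$ is needed, and it suffices there.
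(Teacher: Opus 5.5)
There is a genuine gap in the middle step, and it is exactly the low-rank case you flag at the end. With $\mathbf{x}=\frac{1}{\chi}\mathbf{1}$, every set $T$ in the remainder has $|T|\le \chi\, r_j(T)$, so $Q=|T|/\chi\le r_j(T)$. Thus $Q$ scales with the rank of $T$, not with $\chi$, and your assertion ``$Q\ge\Omega(\chi)\ge\Omega(Ck^3\log n)$'' is false.

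For example, a parallel class of $\chi$ elements in $M_1$ has density $\chi$ but $Q=1$. Every AW output contains at most one of its elements, and the bound $4\exp(-\delta^2Q/(3(k+1)^3))$ of Theorem~\ref{thm:main-intro} is vacuous. More generally, a flat of rank $m$ that still has more than $m\chimax/k$ uncovered elements has $Q>m/k$. Theorem~\ref{thm:main-intro} then only bounds the per-draw failure probability by $4e^{-cm/k^4}$, which is small only when $m\ge C_Rk^4$.

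The hypothesis $\chimax\ge Ck^3\log n$ cannot rescue this through $Q$. In the paper it enters through the number of draws. For a fixed flat of rank $m\ge R=C_Rk^4$, failure requires more than half of the $L=\Theta(\chimax k\log k)$ draws to be bad. The adaptive Chernoff bound (\Cref{lem:bernoulli-half-adapted}) makes this happen with probability $\exp(-\Omega(m\chimax/k^3))\le n^{-20m}$. This decay, exponential in $m$, is what allows a union bound over all $\le n^m$ rank-$m$ flats (\Cref{lem:key-claim}).

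Your union bound over $O(n)$ ``densest-set chain'' targets, or over the current densest set in each round, is too weak in any case. $\chi(M_j|_{U'})$ is a maximum over all subsets, and a set that was not a target can become the densest one after thinning. So nothing forces your potential to halve.

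The missing idea is a separate, non-probabilistic treatment of the dense low-rank sets. The paper runs the random phase once. It then repeatedly deletes a maximum-density flat of $M_j|_W$ until $\chi(M_j|_W)\le\chimax/k$. The union $S^j$ of the deleted flats is itself denser than $\chimax/k$, so under the good event it must have $r_j(S^j)<R$ (\Cref{lem:exceptional-sets-low-rank}). Hence $r_{\Mint}(\bigcup_j S^j)<kR=O(k^5)$. Greedy set cover with a $k$-approximate maximum-coverage oracle then colors this part with $O(k\log(kR))\cdot\OPT=O(k\log k)\cdot\OPT$ colors. Finally, \Cref{thm:constructive2} colors the residual $W$ with at most $k^2\cdot\chimax/k$ colors. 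Without such a step, or some other argument that controls all flats of rank below $\mathrm{poly}(k)$, your geometric-halving phases have no valid guarantee.
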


\subsection{Overview of the Matroid Intersection Coloring Algorithm and its Analysis} 

The chromatic number of a single matroid is governed by its sets of high \emph{density}:
namely subsets $S$ for which the ratio $|S|/r(S)$ of cardinality to rank is large. In particular, Edmonds gave an explicit formula for the chromatic number of a single matroid $M$ on ground set $U$: $$\chi(M) = \max_{S \subseteq U} \lceil |S| / r_M(S)\rceil.$$ Thus, 
reducing the chromatic number of one matroid is therefore equivalent to breaking up all its
high-density sets simultaneously. This suggests a natural covering strategy:
sample a collection of independent sets common to all matroids, use them as color classes and remove
their elements. If in all $k$ matroids, the maximum density of the remaining elements drops below $\chimax / k$, then the unconditional $O(k^2)$-approximation
of~\cite{ArndtMPS26} for matroid intersection coloring can be applied to color the rest within budget. 


\paragraph{Random Thinning} The key question then, is how to sample color classes to achieve this goal? We do this in two distinct phases: in the first phase, \textbf{random thinning}, we sample a random collection of $L = O(k \log k \cdot \chimax)$ common independent sets utilizing the AW CRS~\cite{AdamczykWlodarczyk2018}. By definition of the chromatic number, the fractional point $\mathbf{1}/\chimax$, lies in the matroid polytope $P(M_j)$ for each $j \in [k]$, and so this point is a valid input to the AW CRS. Due to our concentration bound on the output of the scheme, we can show that all subsets of \textit{sufficiently high rank} (rank at least $R = C_R \cdot k^4$ for some constant $C_R$) are covered to the desired extent by this collection. There are up to $n^m$ flats
of rank $m$, so establishing this simultaneously via a union bound requires the per-flat failure probability to decay exponentially in $m$. This is exactly where our concentration theorem is used: for a fixed flat $F$ of rank $m$, the theorem gives a lower-tail bound on the number of elements of $F$ covered by a single CRS sample. Repeating this $L=O(k\log k \cdot \chimax)$ times and applying a sharp Chernoff bound to the number of bad samples yields a per-flat failure probability of $\exp\left(-\Omega\left(\frac{m \chimax}{k^3}\right)\right) \leq n^{-\Omega(m)}$, since $\chimax\ge Ck^3\log n$. The expectation bounds typically proven for CRS are, of course, insufficient here, as we need the high-probability guarantee on each flat in order to enact the union bound over all $n^m$ flats of rank $m$.


\paragraph{Deterministic Thinning} The random thinning phase only has the desired density reduction on flats of rank at least $R$. To deal with the low rank sets, we enact a phase of \textbf{deterministic thinning}. We repeatedly extract the highest density set in each iteration, until the overall  chromatic number drops to the desired threshold of at most $\chimax/k$ in each matroid. The key observation is that the high-rank sets all have low density after Phase 1, hence the collection of extracted sets (which are high density) have rank which is \textit{at most} $R = \Theta(k^4)$. Their union has a rank which is at most $kR = O(k^5)$ in $M_{int}$, hence we can easily color this collection using a standard greedy algorithm for Set Cover with a $k$-approximate maximum coverage oracle, yielding a coloring of at most $O(k \log k) \cdot $OPT sets. 

\paragraph{Coloring the Remainder} After the thinning of Phases 1 and 2, the final residual set has a maximum chromatic number at most $\chimax / k$ and we can apply the $O(k^2)$ approximation of~\Cref{thm:constructive2} to the residual ground set to obtain the desired final coloring.




\subsection{Algorithm For Matroid Intersection Coloring}

We now present the main algorithm.

\algrenewcommand\algorithmicrequire{\textbf{Input:}}

\newcommand{\Phase}[1]{%
  \Statex
  \State \textbf{#1}
}
\begin{algorithm}[H]
{\small
\caption{An $O(k\log k)$-approximation for $k$-Matroid Intersection Coloring for large $\chimax$}
\label{alg:klogk}

\begin{algorithmic}[1]
\Require Matroids $M_j=(U,\mathcal I_j)$ for each $j\in[k]$ with
$\chimax \geq C k^3 \log n$

\State $R\gets C_R k^4$
\State $L\gets 20 \chimax \left\lceil (k+1)\ln k \right\rceil$

\Phase{Phase 1: Random Thinning}
\State Sample common independent sets $I_1,\dots,I_L$ by running the AW CRS
$L$ times independently on $(M_1,\dots,M_k)$ with fractional point $x=\mathbf 1/\chimax$
\State $W \gets U \setminus \bigcup_{p=1}^L I_p$

\Phase{Phase 2: Deterministic Thinning}
\For{$j=1$ to $k$}
    \State $S^j\gets \emptyset$
    \While{$\chi(M_j|_W)> \chimax/k$}
        \State Let $F$ be a maximum-density flat of $M_j|_W$
        \State $S^j\gets S^j\cup F$
        \State $W\gets W\setminus F$
    \EndWhile
\EndFor

\Phase{Phase 3: Covering the Rest}
\State $S\gets \bigcup_{j\in[k]} S^j$
\State Let $\mathcal C_1$ be the covering of $\Mint|_S$ obtained by greedy set cover over common independent sets
\State Let $\mathcal C_2$ be the $k^2$-approximate covering of $\Mint|_W$ given by Theorem~\ref{thm:constructive2}

\Statex
\State \Return $I_1,\dots,I_L$, together with $\mathcal C_1$ and $\mathcal C_2$
\end{algorithmic}
}
\end{algorithm}

\Cref{alg:klogk} proceeds in three phases. The first phase uses the AW CRS (or any Contention Resolution scheme satisfying~\Cref{thm:main-intro}) to repeatedly sample common independent sets and remove the elements covered. This has the effect of reducing the density of all high-rank subsets with high probability. Then, in Phase 2, we deterministically extract subsets of highest density until \textit{all} subsets have a sufficiently low density. If the high probability event of Phase 1 was successful, then after Phase 2 is complete, the chromatic number of each matroid on the residual elements $W$ is at most $\chimax / k$, while the extracted high density flats form a collection $S$ of low rank. This allows us to in Phase 3 cover the elements of $S$ using an algorithm for Set Cover with an approximate maximum coverage oracle, and to color the elements of $W$ using the previously known $O(k^2)$ approximation. 

\begin{remark}\label{rem:max-density}
     A highest density flat can be computed efficiently in step 10 by leveraging any polynomial-time algorithm for submodular function minimization (see~\cite{schrijver_book}). We reduce to the decision problem $\max_S |S|/r(S)>\gamma$ and note that $\gamma r(S)-|S|$ is a submodular function of $S$.
\end{remark}

    \subsection{Analysis of \Cref{alg:klogk}}
    
We now proceed with the analysis. We begin with a relatively standard concentration bound that we will need in the coming analysis.

\begin{lemma}[Adaptive Chernoff Bound]
\label{lem:bernoulli-half-adapted}
Let $B_1,\dots,B_L$ be a sequence of $\{0,1\}$-valued random variables. Let $\mathcal{F}_p$ be the state of the sequence after $B_p$ has been revealed. Suppose that for every
$p=1,\dots,L$, we have $\Pr[B_p=1\mid \mathcal F_{p-1}]\le p_0.$

If $p_0\le 1/100$, then
\[
    \Pr\left[
        \sum_{p=1}^L B_p\ge \frac{L}{2}
    \right]
    \le
    \exp\left(
        -\Omega\left(L\log\frac{1}{p_0}\right)
    \right).
\]
\end{lemma}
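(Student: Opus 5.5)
The plan is to prove the bound with the exponential-moment (Chernoff) method, comparing the adapted sequence $B_1,\dots,B_L$ to independent Bernoulli$(p_0)$ variables. Fix $\theta>0$ and set $S_p=\sum_{q\le p}B_q$. The first step is to establish, by induction on $L$, the supermartingale-type bound
\[
\mathbb{E}\left[e^{\theta S_L}\right]\le \left(1-p_0+p_0e^{\theta}\right)^L .
\]
For the inductive step, condition on $\mathcal F_{L-1}$. Since $e^{\theta B_L}=1+(e^\theta-1)B_L$ and $e^\theta-1\ge 0$, the hypothesis $\Pr[B_L=1\mid\mathcal F_{L-1}]\le p_0$ gives
\[
\mathbb{E}[e^{\theta B_L}\mid \mathcal F_{L-1}]\le 1+p_0(e^\theta-1).
\]
Pulling out the $\mathcal F_{L-1}$-measurable factor $e^{\theta S_{L-1}}$ and applying the tower property closes the induction. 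This is the only point where adaptivity enters. It is harmless because the conditional bias bound holds almost surely, so no independence is needed.

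The second step is Markov's inequality:
\[
\Pr[S_L\ge L/2]\le e^{-\theta L/2}\left(1+p_0(e^\theta-1)\right)^L\le \exp\left(L\left(p_0e^\theta-\theta/2\right)\right),
\]
using $1+y\le e^y$. I would then pick $\theta=\ln\frac{1}{2p_0}$, which is the optimizer up to constants. This gives $p_0e^\theta=1/2$, so the exponent is
\[
L\left(\tfrac12-\tfrac12\ln\tfrac{1}{2p_0}\right).
\]

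The final step is the constant check. Since $p_0\le 1/100$, we have $\ln\frac{1}{2p_0}\ge \ln 50>3.9$. Also $\ln\frac{1}{2p_0}=\ln\frac1{p_0}-\ln 2\ge(1-\ln2/\ln100)\ln\frac1{p_0}\ge 0.84\ln\frac1{p_0}$. Moreover $1\le \ln\frac{1}{2p_0}/3.9$, so the $+\frac12$ term absorbs into a constant fraction of the logarithm, and the exponent is at most $-cL\ln\frac1{p_0}$ for an absolute $c>0$ (roughly $c\approx 0.3$). This yields the claimed $\exp(-\Omega(L\log\frac1{p_0}))$.

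I do not expect any step to be a real obstacle. The only subtlety is the inductive MGF bound under adaptivity, which is handled by the linearity of $e^{\theta b}$ in $b\in\{0,1\}$ together with the almost-sure conditional bound. An alternative route is to couple the $B_p$ with independent Bernoulli$(p_0)$ variables dominating them and apply the standard KL-form Chernoff bound $\exp(-L\,D(1/2\|p_0))$, noting $D(1/2\|p_0)\ge\frac12\ln\frac1{4p_0(1-p_0)}=\Omega(\log\frac1{p_0})$. The direct MGF argument avoids building that coupling.
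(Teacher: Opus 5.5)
Your proof is correct and is essentially the paper's argument: the paper likewise dominates the moment generating function of $\sum_p B_p$ by that of $\mathrm{Bin}(L,p_0)$ via the conditional bound $\mathbb{E}[e^{\lambda B_p}\mid\mathcal F_{p-1}]\le 1+p_0(e^\lambda-1)$, and then invokes the textbook Chernoff form $(e\,\mathbb{E}[S]/t)^t$ at $t=L/2$. Your explicit choice $\theta=\ln\frac{1}{2p_0}$ reproduces exactly the same bound $(2ep_0)^{L/2}=\exp\bigl(-\tfrac{L}{2}\ln\tfrac{1}{2ep_0}\bigr)$, and your constant check spells out the paper's final $\Omega(\log(1/p_0))$ step.
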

\begin{proof}
Let $S=\sum_{p=1}^L B_p$. If the variables $B_p$ were independent, we would be able to apply the standard Chernoff upper-tail bound (Mitzenmacher and Upfal~\cite[Theorem~4.4, part~(1)]{DBLP:books/daglib/0012859}), which
implies that for every $t\ge L\cdot p_0$, we have that 
$\Pr[S\ge t]\le (e\mathbb{E}[S]/t)^t = (eLp_0/t)^t$. Taking $t=L/2$, we obtain
\[
\Pr[S\ge L/2]\le (2ep_0)^{L/2}
=
\exp\left(-\frac L2\log\frac{1}{2ep_0}\right).\]
Since $p_0\le 1/100$, we have
$\log(1/(2ep_0))=\Omega(\log(1/p_0))$ as desired. 

To complete the proof, we observe that the moment generating function of $S$ is upper bounded by that of a binomial random variable with $L$ samples and success probability $p_0$. 
For
any $\lambda>0$,
\[
\mathbb E[e^{\lambda B_p}\mid \mathcal F_{p-1}]
=
1+(e^\lambda-1)\Pr[B_p=1\mid \mathcal F_{p-1}]
\le
1+p_0(e^\lambda-1).
\]

Since the variables $B_1,\dots,B_{p-1}$ are already determined by
$\mathcal F_{p-1}$, we can apply this bound successively, giving $\mathbb E\left[e^{\lambda S}\right]
\le
\left(1+p_0(e^\lambda-1)\right)^L$ which is the moment generating function of $\operatorname{Bin}(L,p_0)$.

\end{proof}

We now prove the first key result, which is that after Phase 1, the density of all sets with rank greater than $R$ is at most $\chimax/k$ with high probability. We assume $\chimax \geq Ck^3 \log n$. 

\begin{lemma}[The good event]\label{lem:key-claim}
At the end of Phase 1, with probability at least
$1-n^{-10}$, for every $j\in[k]$ and every subset $S\subseteq W$ with $r_j(S)\ge R$, we have $\frac{|S|}{r_j(S)}\le \frac{\chimax}{k}.$ We call this the \textbf{good event}.
\end{lemma}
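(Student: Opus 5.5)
Fix $j\in[k]$. The plan is to reduce the claim to flats of $M_j$, control each flat with \Cref{thm:main-intro} plus \Cref{lem:bernoulli-half-adapted}, and finish with a union bound.

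\emph{Reduction to flats.} Suppose some $S\subseteq W$ with $r_j(S)=m\ge R$ has $|S|>m\chimax/k$. Let $F=\mathrm{cl}_j(S)$. Then $F$ is a flat of $M_j$ of rank $m$ and $|F\cap W|\ge |S|>m\chimax/k$. Every flat of rank $m$ is the closure of a basis of size $m$, so there are at most $n^m$ such flats. Hence it suffices to show, for each fixed flat $F$ of rank $m\ge R$ in $M_j$, that
\[\Pr\bigl[|F\cap W|>m\chimax/k\bigr]\le n^{-12m}.\]
Summing over $m\ge R$, the at most $n^m$ flats per rank, and the $k\le n$ matroids then gives total failure probability at most $n^{-10}$.

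\emph{Per-sample covering.} Fix $F$ and consider the $p$-th sample $I_p$, conditioned on the past. Let $W_{p-1}$ be the uncovered part after $p-1$ samples, and let $T=F\cap W_{p-1}$. As long as $|T|>m\chimax/k$, apply \Cref{thm:main-intro} with $a=\mathbf 1_T$ and $x=\mathbf 1/\chimax$. This gives $Q=|T|/\chimax>m/k$. Take $\delta=1/10$. With probability at least $1-4\exp(-Q/(300(k+1)^3))$, the sample covers at least $\frac{|T|}{10(k+1)\chimax}$ elements of $T$, so $|T|$ shrinks by a factor at least $1-\frac{1}{10(k+1)\chimax}$. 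The failure probability is at most
\[4\exp\bigl(-\Omega(m/k^4)\bigr)\le 4\exp\bigl(-\Omega(m\chimax/k^3\cdot 1/(k\chimax))\bigr).\]
I would use it in the form $p_0=\exp(-\Omega(m/k^4))$, which is at most $1/100$ once $m\ge R=C_Rk^4$ with $C_R$ large.

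\emph{Counting good samples.} Let $B_p$ indicate that sample $p$ fails while $|F\cap W_{p-1}|>m\chimax/k$. By the above, $\Pr[B_p=1\mid\mathcal F_{p-1}]\le p_0$, so \Cref{lem:bernoulli-half-adapted} applies.

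\emph{Main obstacle.} The hardest step is reconciling the exponents. That lemma gives failure probability $\exp(-\Omega(L\log(1/p_0)))=\exp(-\Omega(L m/k^4))$, and we need this to be $n^{-\Omega(m)}$. Since $L=\Theta(\chimax k\log k)$ and $\chimax\ge Ck^3\log n$, we get $Lm/k^4\ge \Omega(Cm\log n\cdot\log k)$, which suffices for large $C$. On the complementary event, at least $L/2$ samples succeed. Each success shrinks $|F\cap W|$ by the factor above, so starting from $|F|\le \chimax m$ (because $\chi(M_j)\le\chimax$ gives density at most $\chimax$), after $L/2=10\chimax\lceil(k+1)\ln k\rceil$ successes we have
\[|F\cap W|\le \chimax m\,\exp\bigl(-\ln k\bigr)=m\chimax/k,\]
unless it already dropped below the threshold earlier, in which case it stays there because $W$ only shrinks. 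This gives the per-flat bound and, via the union bound, the lemma. The constants $C,C_R$ need care but pose no conceptual difficulty.
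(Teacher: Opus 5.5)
Your proposal is correct and follows the paper's proof essentially step for step: you reduce to flats via closure, bound the number of rank-$m$ flats by $n^m$, get the per-flat bound from the main concentration theorem with $\delta=1/10$ (shrink factor $1-\frac{1}{10(k+1)\chimax}$) together with the adaptive Chernoff bound, and finish with a union bound. The only differences are that you inline the paper's fixed-flat coverage claim and target $n^{-12m}$ per flat instead of the paper's $n^{-20m}$.
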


In order to prove this key claim, we first need the following technical lemma, which follows from \Cref{thm:main-intro}.

\begin{claim}[Fixed flat coverage]\label{claim:fixed-flat-coverage}
Let $I_1,\dots,I_L$ be the common independent sets sampled in Phase~1, and their union be
$T=\bigcup_{p=1}^L I_p$. Fix a matroid $M_j$ and a flat $F$ of $M_j$ of rank $m \geq R$.
Then
\[
    \Pr\left[
        |F\setminus T|>\frac{m\cdot \chimax}{k}
    \right]
    \le
    \exp\left(
        -\Omega\left(\frac{m \cdot \chimax }{k^3}\right)
    \right).
\]
\end{claim}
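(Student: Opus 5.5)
We plan to track the uncovered part of $F$ sample by sample and show that, as long as many elements of $F$ remain uncovered, each AW sample removes a $\Theta(1/((k+1)\chimax))$ fraction of them except with small probability. Let $T_p=\bigcup_{p'\le p} I_{p'}$ and $F_p=F\setminus T_p$, with $F_0=F$.

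Since $\mathbf 1/\chimax\in P(M_j)$, every subset of $U$ has density at most $\chimax$ in $M_j$, so $|F|\le m\chimax$. Call sample $p$ \emph{good} if one of the following holds:
\begin{itemize}
\item $|F_{p-1}|\le m\chimax/k$;
\item $|I_p\cap F_{p-1}|\ge \frac{|F_{p-1}|}{10(k+1)\chimax}$.
\end{itemize}
Otherwise call it \emph{bad}, and let $B_p$ be the indicator that sample $p$ is bad.

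The first step bounds $\Pr[B_p=1\mid\mathcal F_{p-1}]$. Condition on the first $p-1$ samples, which fix $F_{p-1}$. If $|F_{p-1}|\le m\chimax/k$, then $B_p=0$. Otherwise, apply \Cref{thm:main-intro} to the fresh, independent $p$-th run of AW with $a_e=\mathbf 1[e\in F_{p-1}]$ and $\delta=1/10$. Here
\[
Q=\frac{|F_{p-1}|}{\chimax}>\frac{m}{k}\ge \frac{R}{k}=C_R k^3,
\]
so
\[
\Pr[B_p=1\mid \mathcal F_{p-1}]\le p_0:=4\exp\!\left(-\frac{m}{300\,k(k+1)^3}\right)=\exp\!\left(-\Omega\!\left(\frac{m}{k^4}\right)\right).
\]
Because $m\ge R=C_Rk^4$ with $C_R$ a sufficiently large constant, we get $p_0\le 1/100$.

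The second step is a deterministic counting argument. Each good sample with $|F_{p-1}|>m\chimax/k$ satisfies
\[
|F_p|\le \Bigl(1-\frac{1}{10(k+1)\chimax}\Bigr)|F_{p-1}|.
\]
Starting from $|F_0|\le m\chimax$, after $g$ such samples we have $|F|\le m\chimax\,e^{-g/(10(k+1)\chimax)}$. This drops to at most $m\chimax/k$ once $g\ge 10(k+1)\chimax\ln k$. Since $L=20\chimax\lceil (k+1)\ln k\rceil$, the bad event $|F\setminus T|>m\chimax/k$ forces every sample to have $|F_{p-1}|>m\chimax/k$. In that situation every good sample shrinks $F_p$ by the factor above, so at most $10(k+1)\chimax\ln k$ samples can be good, and hence at least $L/2$ samples must be bad, i.e.\ $\sum_p B_p\ge L/2$.

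The third step applies the adaptive Chernoff bound, \Cref{lem:bernoulli-half-adapted}, which is valid since $p_0\le 1/100$:
\[
\Pr\!\left[\sum_p B_p\ge L/2\right]\le \exp\!\big(-\Omega(L\log(1/p_0))\big)=\exp\!\left(-\Omega\!\left(k\log k\,\chimax\cdot \frac{m}{k^4}\right)\right).
\]
Since $\log k\ge \Omega(1)$, this is at most $\exp(-\Omega(m\chimax/k^3))$, which is the claimed bound.

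The main obstacle is the first step. The conditional failure bound must be uniform over all histories, which is why we conditionally define $a$ as the indicator of the current residual set and use the independence of the $p$-th AW run. The bound must also be below $1/100$, and this is exactly what the requirement $m\ge R=C_Rk^4$ (together with $Q>m/k$) secures.
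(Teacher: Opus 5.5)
Your proposal is correct and follows essentially the same argument as the paper. It tracks the uncovered residual set, applies \Cref{thm:main-intro} conditionally with $a_e=\mathbf 1[e\in F_{p-1}]$ and $\delta=1/10$, uses the deterministic count showing that failure forces at least $L/2$ active-and-bad samples, and concludes via the adaptive Chernoff bound of \Cref{lem:bernoulli-half-adapted}, with the ceiling in $L$ handled cleanly by the inequality $L/2\ge 10(k+1)\chimax\ln k$.
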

\begin{proof}
For $p=1,\dots,L$, define
$Y_p=\left|F\setminus \bigcup_{h<p} I_h\right|$, the number of elements of
$F$ not yet covered before the $p$th sample. Thus $Y_1=|F|$ and
$Y_{L+1}=|F\setminus T|$. We say that step $p$ is \emph{active} if
$Y_p>m \cdot \chimax /k$.

Condition on the history $\mathcal F_{p-1}$ before step $p$, and suppose step
$p$ is active. Let $R_p=F\setminus \bigcup_{h<p} I_h$, so $|R_p|=Y_p$. Apply
Theorem~\ref{thm:main-intro} to the next CR-scheme output $I_p$ with
coefficients $a_e=\mathbf 1[e\in R_p]$ and fractional point $x=\mathbf 1/\chimax$.
For this choice of
coefficients $\mathbf{a}$, we have $X=|I_p\cap R_p|$ and the mass parameter in
Theorem~\ref{thm:main-intro} is
\[
    Z_p:=\sum_e a_ex_e=\frac{|R_p|}{\chimax}=\frac{Y_p}{\chimax}.
\]
Taking $\delta=1/10$, Theorem~\ref{thm:main-intro} gives
\[
    \Pr\left[
        |I_p\cap R_p|\le \frac{1}{10}\cdot\frac{Y_p}{(k+1) \chimax }
        \,\middle|\,\mathcal F_{p-1}
    \right]
    \le 4\exp\left(-\frac{Y_p}{300(k+1)^3\chimax}\right).
\]
Because step $p$ is active, we have $Y_p>m \cdot \chimax/k$, and therefore the right-hand side is at
most $4\exp(-c m/k^4)$ for an absolute constant $c>0$.

Call an active step $p$ \emph{good} if
$|I_p\cap R_p|>\frac{1}{10}\cdot\frac{Y_p}{(k+1)\chimax}$, and call it \emph{bad}
otherwise. On a good active step,
\[
    Y_{p+1}
    \le
    Y_p\left(1-\frac{1}{10(k+1)\chimax}\right).
\]
Therefore, after $L/2 = 10\chimax (k+1)\ln k$ good active steps, we have
$Y_{L+1}\le Y_1/k$. Since $F$ has rank $m$ in $M_j$ and
$\chimax \ge \chi(M_j)$, we have $Y_1=|F|\le m \cdot \chimax$. Hence $L/2$ good active steps imply
$Y_{L+1}\le m \chimax /k$.

It follows that if $|F\setminus T|=Y_{L+1}>m \cdot \chimax/k$, then fewer than $L/2$ active
steps were good. On this failure event, the process is active throughout all
$L$ steps, since the sequence $Y_p$ is nonincreasing. Thus more than $L/2$
steps are bad.

Define $B_p=\mathbf 1[\text{step $p$ is active and bad}]$. The preceding
one-step estimate implies that, for every history,
$\Pr[B_p=1\mid \mathcal F_{p-1}]\le p_m$, where
$p_m:=4\exp(-c m/k^4)$. By choosing $C_R$ sufficiently large in the assumption
$m\ge C_Rk^4$, we may assume $p_m\le 1/100$ and
$\log(1/p_m)=\Omega(m/k^4)$. The Adaptive Chernoff bound \Cref{lem:bernoulli-half-adapted}
therefore gives
\[
    \Pr\left[\sum_{p=1}^L B_p\ge \frac{L}{2}\right]
    \le
    \exp\left(-\Omega\left(L\log\frac{1}{p_m}\right)\right)
    =
    \exp\left(-\Omega\left(\frac{Lm}{k^4}\right)\right).
\]
Since failure implies $\sum_{p=1}^L B_p\ge L/2$, and since
$L=20Q(k+1)\ln k$, we conclude that
\[
    \Pr\left[|F\setminus T|>\frac{m \cdot \chimax}{k}\right]
    \le
    \exp\left(
        -\Omega\left(\frac{m \cdot \chimax}{k^3}\right)
    \right).
\]
\end{proof}

\begin{proof}[Proof of \Cref{lem:key-claim}]
Fix the contents of $W$ at the end of Phase 1, and consider a particular matroid
$M_j$. Suppose that there
exists a subset $S\subseteq W$ with $r_j(S)\ge R$
and $|S|>\frac{\chimax}{k}r_j(S).$
Let $F:=\operatorname{span}_{M_j}(S),$
and let $m:=r_j(F)=r_j(S).$
Then $F$ is a flat of $M_j$ of rank $m\ge R$. Moreover, since $S\subseteq W$,
we have $S\cap T=\emptyset$, and since $S\subseteq F$, we have $S\subseteq F\setminus T.$
Therefore
\[
    |F\setminus T|
    \ge |S|
    >
    \frac{\chimax}{k}r_j(S)
    =
    \frac{m \cdot \chimax}{k}.
\]

We now union bound over such flats. For a fixed matroid $M_j$ and a fixed rank
$m$, the number of rank-$m$ flats is at most $\binom{n}{m}\le n^m,$
since every rank-$m$ flat is the span of some independent set of size $m$.

By the fixed-flat coverage lemma, for every fixed rank-$m$ flat $F$,
\[
    \Pr\left[
        |F\setminus T|>\frac{m \cdot \chimax}{k}
    \right]
    \le
    \exp\left(
        -\Omega\left(\frac{m\cdot \chimax}{k^3}\right)
    \right).
\]
Since $\chimax\ge Ck^3\log n$
we can choose $C$ sufficiently large so that
\[
    \exp\left(
        -\Omega\left(\frac{m \cdot \chimax}{k^3}\right)
    \right)
    \le
    n^{-20m}.
\]
Therefore, for a fixed matroid $M_j$, the probability that there exists any bad
flat of rank at least $R$ is at most
\[
    \sum_{m=R}^{n} n^m\cdot n^{-20m}
    =
    \sum_{m=R}^{n} n^{-19m}
    \le
    n^{-11}.
\]
Finally, union bounding over all $j\in[k]$, and using $k\le n$ in the
nontrivial regime, the total failure probability is at most $k\cdot n^{-11}\le n^{-10}.$
Hence, with probability at least $1-n^{-10}$, the desired conclusion holds for every matroid $M_j$ and every subset $S\subseteq W$.
\end{proof}

We now turn to establishing the desired properties of Phase 2, namely that (i) the chromatic number of every matroid is reduced to $\chimax / k$, and (ii) the overall rank of the extracted flats is upper bounded by a polynomial in $k$. The first property is proved in~\Cref{lem:final-residual-low-chi}, which holds deterministically. Indeed, it is essentially true by construction as it defines the termination condition of Phase 2. The second property (\Cref{lem:exceptional-sets-low-rank,lem:exceptional-set-low-intersection-rank}) is satisfied whenever the good event of Phase 1 holds.

\begin{lemma}
\label{lem:final-residual-low-chi}
Let $W$ denote the final residual set after Phase~2 of
Algorithm~\ref{alg:klogk}. Then, for every $j\in[k]$,
\[
    \chi(M_j|_W)\le \frac{\chimax}{k}.
\]
\end{lemma}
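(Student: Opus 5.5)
Since the statement is essentially the termination condition of Phase~2, the plan is to argue by following the loop structure of \Cref{alg:klogk} and using monotonicity of the chromatic number under restriction. Two facts are needed. First, each while-loop terminates, so that its exit condition $\chi(M_j|_W)\le \chimax/k$ actually holds when the loop ends. Second, later loops (for $j'>j$) only delete elements from $W$, and deletion cannot raise $\chi(M_j|_W)$.

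\textbf{Step 1 (termination).} Whenever the loop body executes, $\chi(M_j|_W)>\chimax/k\ge 0$, so $W\neq\emptyset$. I will also check that $M_j|_W$ is not entirely loops. If $W$ contained a loop of $M_j$, that loop could never be colored; but $\chi(M_j)$ is finite, so $M_j$ has no loops and neither does any restriction. Hence the maximum density, computed via Edmonds' formula $\chi(M)=\max_S\lceil |S|/r(S)\rceil$ over sets with $r(S)\ge1$, is attained by some set $S$ with $r_j(S)\ge 1$. Its closure within $W$ is a flat $F\supseteq S$ of the same rank, so $F$ is nonempty and has density at least that of $S$. Therefore each iteration removes at least one element of $W$, and since $W$ is finite, each loop terminates after at most $n$ iterations. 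When the loop for index $j$ exits, $\chi(M_j|_{W_j})\le\chimax/k$, where $W_j$ denotes the value of $W$ at that moment.

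\textbf{Step 2 (monotonicity).} The final residual set satisfies $W\subseteq W_j$, because subsequent loops only execute $W\gets W\setminus F$. Restriction is monotone for the chromatic number: a partition of $W_j$ into independent sets of $M_j$ restricts to a partition of $W$ into independent sets, after discarding empty parts. Equivalently, the maximum in Edmonds' formula ranges over fewer sets. So $\chi(M_j|_W)\le\chi(M_j|_{W_j})\le\chimax/k$ holds for every $j$.

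The only mildly delicate point is Step 1's guarantee that the maximum-density flat is nonempty, which guarantees progress. I will dispatch it via the loopless observation above. If one wants to be safe, one can instead note that any set achieving density above $\chimax/k>0$ must contain a non-loop element, so its span in $M_j|_W$ is nonempty. No probabilistic input is needed, which is consistent with the remark that this lemma holds deterministically.
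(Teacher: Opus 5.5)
Your proof is correct and follows essentially the same route as the paper: the exit condition of the $j$-th while-loop gives $\chi(M_j|_{W^{(j)}})\le \chimax/k$, and since later loops only delete elements, monotonicity of the chromatic number under restriction carries this to the final $W$. Your explicit termination check (each iteration removes a nonempty max-density flat, using looplessness implied by finite $\chimax$) is something the paper does not put in this lemma's proof; it appears only in the running-time argument in the proof of the main coloring theorem, where the paper notes that at least one element is removed per iteration.
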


\begin{proof}
For $i=0,1,\dots,k$, let $W^{(i)}$ denote the value of the residual set $W$
after the algorithm has finished processing matroids $M_1,\dots,M_i$ in
Phase~2. Thus $W^{(0)}$ is the residual set after Phase~1, and the final
residual is $W^{(k)}$.

Fix $j\in[k]$. When the algorithm finishes the while-loop for matroid $M_j$,
the current residual is $W^{(j)}$, and the stopping condition gives
\[
    \chi(M_j|_{W^{(j)}})\le \frac{\chimax}{k}.
\]
After this point, the algorithm only deletes more elements from the residual
set. Hence $W^{(k)}\subseteq W^{(j)}$. Since chromatic number is monotone under
restriction, we have
\[
    \chi(M_j|_{W^{(k)}})
    \le
    \chi(M_j|_{W^{(j)}})
    \le
    \frac{\chimax}{k}.
\]
Because $W=W^{(k)}$, this proves the claim for every $j\in[k]$.
\end{proof}

\begin{lemma}
\label{lem:exceptional-sets-low-rank}
Assume the good event from Lemma~\ref{lem:key-claim} holds. Then, at the end of Phase 2, for every $j\in[k]$, the set $S^j$ satisfies $r_j(S^j)<R$.
\end{lemma}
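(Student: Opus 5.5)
The plan is to show that $S^j$ is itself a dense subset of the post-Phase-1 residual set, and then invoke the good event of \Cref{lem:key-claim} in contrapositive form: any subset of that residual with density exceeding $\chimax/k$ must have $M_j$-rank below $R$.

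\emph{Setup.} Fix $j\in[k]$ and let $W^{(0)}$ denote the residual set at the end of Phase~1. Let $F_1,\dots,F_t$ be the flats extracted in the while-loop for $M_j$, in order, and let $W_i$ be the residual set just before $F_i$ is extracted. Then $S^j=F_1\cup\cdots\cup F_t$. The sets $F_i$ are pairwise disjoint, since each is removed from $W$ once chosen. Each $F_i\subseteq W_{i}\subseteq W^{(0)}$, so $S^j\subseteq W^{(0)}$; extractions made for earlier matroids $M_{j'}$ with $j'<j$ only shrink $W$ further and do not affect this inclusion. If $t=0$ then $r_j(S^j)=0<R$, so assume $t\ge 1$. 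The matroids may be assumed loopless, since otherwise $\chi$ is undefined; hence every nonempty $F_i$ has positive rank.

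\emph{Step 1: each extracted flat is dense.} $F_i$ is a maximum-density flat of $M_j|_{W_i}$, chosen while $\chi(M_j|_{W_i})>\chimax/k$. By Edmonds' formula $\chi(M)=\max_S\lceil |S|/r(S)\rceil$, this gives $|F_i|/r_j(F_i)>\chimax/k$.

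This is the one delicate point: the ceiling causes a slight mismatch when $\chimax/k$ is not an integer. I would resolve it by reading the loop condition as ``maximum density exceeds $\chimax/k$''. This is the natural reading, it is computable by \Cref{rem:max-density}, and on termination it still yields $\chi\le\lceil\chimax/k\rceil$, which changes nothing asymptotically. Alternatively, one can run the good event with threshold $\lfloor\chimax/k\rfloor$, which \Cref{claim:fixed-flat-coverage} supports up to constants.

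\emph{Step 2: the union is dense.} Since the $F_i$ are disjoint, $|S^j|=\sum_i|F_i|$. By subadditivity of the rank function, $r_j(S^j)\le\sum_i r_j(F_i)$. Therefore
\[
\frac{|S^j|}{r_j(S^j)}\;\ge\;\frac{\sum_i |F_i|}{\sum_i r_j(F_i)}\;\ge\;\min_i\frac{|F_i|}{r_j(F_i)}\;>\;\frac{\chimax}{k}.
\]

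\emph{Step 3: conclude.} $S^j$ is a subset of $W^{(0)}$ with density strictly greater than $\chimax/k$. If $r_j(S^j)\ge R$, this would contradict the good event of \Cref{lem:key-claim}. Hence $r_j(S^j)<R$.

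The main obstacle is only the bookkeeping in Step~1, namely matching the loop's chromatic-number test to a strict density bound. The rest is the mediant inequality combined with subadditivity.
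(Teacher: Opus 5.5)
Your proof is correct and follows essentially the same route as the paper: each extracted flat has density exceeding $\chimax/k$, disjointness together with subadditivity of $r_j$ makes $S^j$ itself dense, and since $S^j$ lies in the post-Phase-1 residual, the good event forces $r_j(S^j)<R$. Your point about the ceiling in Edmonds' formula is a genuine subtlety that the paper passes over (it asserts the strict density bound directly from the loop condition), and either of your two fixes handles it without changing the asymptotics.
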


\begin{proof}
Fix $j\in[k]$, and let $F_1,\dots,F_t$ be the flats selected while processing
matroid $M_j$ in Phase~2. These sets are pairwise disjoint, since each selected
flat is removed before the next flat is chosen.
Clearly $S^j$ is the union of such flats.

At each iteration of the while-loop, the density of the current set is larger
than $\chimax/k$ in $M_j$. Since each $F_a$ is chosen to be a maximum-density flat of the
current restriction, we know that $|F_a|>\frac{\chimax}{k}r_j(F_a)$ for every $a \in [t]$.

Using disjointness and subadditivity of matroid rank,
\[
    |S^j|
    =
    \sum_{a=1}^t |F_a|
    >
    \frac{\chimax}{k}\sum_{a=1}^t r_j(F_a)
    \ge
    \frac{\chimax}{k}r_j(S^j).
\]
Thus $S^j$ has density strictly larger than $\chimax/k$ in $M_j$. Now suppose for contradiction that $r_j(S^j)\ge R$. Then the good event from
Lemma~\ref{lem:key-claim} says that it should have low-density. In particular, it implies that 
\[
    |S^j|\le \frac{\chimax}{k}r_j(S^j),
\]
which contradicts the strict inequality we derived above. Hence $r_j(S^j)<R$.
\end{proof}

In the following lemma the notation $r_{M_{int}}$ denotes rank function of the matroid intersection set system. So in particular, $r_{\Mint}(A):=\max\{|I|: I\subseteq A,\ I\in \bigcap_{j=1}^k \mathcal I_j\}$.

\begin{lemma}
\label{lem:exceptional-set-low-intersection-rank}
Let $S=\bigcup_{j=1}^k S^j$
be the set constructed in Phase~3. Assume the good event from~\Cref{lem:key-claim} holds.
Then $r_{\Mint}(S)<kR$.
\end{lemma}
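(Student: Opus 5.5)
The plan is to bound the rank of $S$ in the intersection system by the rank of $S$ in any single matroid, and then use subadditivity over the $k$ pieces $S^j$. The starting observation is that for any $A\subseteq U$ and any $j\in[k]$, every common independent set $I\subseteq A$ is in particular independent in $M_j$. Hence
\[
r_{\Mint}(A)\le r_j(A)\quad\text{for every } j\in[k].
\]

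Next I will show that $r_{\Mint}$ is subadditive over unions, as an ordinary matroid rank is. Let $I\subseteq S$ be a maximum common independent set, and split it as $I=\bigcup_{j}(I\cap S^j)$. Each $I\cap S^j$ is a subset of a common independent set, so it is common independent because each $\mathcal I_j$ is down-closed. It lies inside $S^j$, so its size is at most $r_{\Mint}(S^j)$. This gives
\[
r_{\Mint}(S)=|I|\le \sum_{j=1}^k |I\cap S^j|\le \sum_{j=1}^k r_{\Mint}(S^j).
\]

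Combining the two displays with the first observation applied to $A=S^j$ and the index $j$ itself, and then invoking Lemma~\ref{lem:exceptional-sets-low-rank} under the good event, we get
\[
r_{\Mint}(S)\le\sum_{j=1}^k r_j(S^j)<\sum_{j=1}^k R=kR.
\]
This is the desired bound.

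I do not expect a real obstacle. The only subtle point is that $r_{\Mint}$ is not a matroid rank function, so subadditivity cannot be quoted from matroid theory. It follows directly from down-closedness by the decomposition of the optimal set $I$ given above. The argument also works even if the sets $S^j$ overlap, since the pieces $I\cap S^j$ are only required to cover $I$.
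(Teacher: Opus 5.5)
Your proposal is correct and matches the paper's proof: bound $r_{\Mint}(S^j)\le r_j(S^j)<R$ using Lemma~\ref{lem:exceptional-sets-low-rank}, then apply subadditivity of $r_{\Mint}$ over the union. The paper only asserts that subadditivity, while you derive it from down-closedness by splitting a maximum common independent subset of $S$ across the pieces $S^j$; this is a correct and welcome filling-in of that step.
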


\begin{proof}
Since the good event holds, we have that for every $j\in[k]$, $r_j(S^j)<R$ by \Cref{lem:exceptional-sets-low-rank}. Since every common independent set is independent in $M_j$, we have $r_{\Mint}(S^j)\le r_j(S^j)<R$ for every $j\in[k]$. Since $r_{\Mint}$ is subadditive, we have
\[
    r_{\Mint}(S)
    \le
    \sum_{j=1}^k r_{M_{int}}(S^j)
    <
    kR,
\] proving the claim.
\end{proof}

\Cref{lem:exceptional-set-low-intersection-rank} allows us to cover $S$ using a refined approximation for Set Cover whose approximation ratio which depends on the maximum set size. The classical greedy approximation algorithm for Set Cover achieves an approximation ratio of $O(\log d)$ where $d$ is the maximum cardinality of any set in the system. However, implementing the greedy algorithm for a $k$-matroid set system is NP-hard, so we use an approximate maximum coverage oracle. We recall the folklore result on Set Cover with an approximate maximum coverage oracle here.  For example, this follows via the analysis in ~\cite{CalinescuChekuriPalVondrak2011}.

\begin{theorem}
\label{lem:approx-greedy-set-cover}
Let $(X, \mathcal{F})$ be an instance of Set Cover with ground set $X$ and $\mathcal F\subseteq 2^X$. Suppose every set in $\mathcal F$ has size at most
$d$, and let $\tau$ be the optimum set cover value. If the greedy algorithm is implemented
with an $\alpha$-approximate maximum-coverage oracle, then it returns a cover of
size at most $\alpha \cdot H_d\cdot \tau,$ where $H_d = \sum_{i=1}^d 1/i$. Since $H_d \leq 1 + \ln d$, this is an $O(\alpha \log d)$-approximation. 
\end{theorem}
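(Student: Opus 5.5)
The plan is the standard analysis of greedy set cover via a potential argument, adapted to an approximate oracle. Fix an optimal cover $\mathcal{O}\subseteq\mathcal F$ with $|\mathcal O|=\tau$. We interpret the $\alpha$-approximate maximum-coverage oracle as follows: given the current uncovered set $X'$, it returns a set $A\in\mathcal F$ with $|A\cap X'|\ge \frac1\alpha\max_{B\in\mathcal F}|B\cap X'|$. The greedy algorithm repeatedly calls the oracle and removes the returned set's newly covered elements until $X'=\emptyset$. It always makes progress as long as some set meets $X'$, which holds whenever a cover exists.

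The charging step is the core of the proof. When the oracle returns $A$ covering $c=|A\cap X'|\ge 1$ new elements, we assign each newly covered element the price $1/c$. The total price over all elements then equals the number of sets chosen. We bound the price paid for the elements of a fixed set $B\in\mathcal O$. Order the elements of $B$ by the time they are covered, and let $u_1,\dots,u_{|B|}$ be this order, with ties broken arbitrarily. Just before $u_i$ is covered, at least $|B|-i+1$ elements of $B$ are still uncovered, so $\max_{B'}|B'\cap X'|\ge |B|-i+1$. The oracle guarantee then gives $c\ge (|B|-i+1)/\alpha$, so $u_i$ receives price at most $\alpha/(|B|-i+1)$. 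Summing over $i$ shows that the elements of $B$ receive total price at most $\alpha H_{|B|}\le \alpha H_d$.

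Since $\mathcal O$ covers $X$, the number of sets chosen is
\[
\sum_{u\in X}\mathrm{price}(u)\le\sum_{B\in\mathcal O}\sum_{u\in B}\mathrm{price}(u)\le \alpha H_d\,\tau .
\]
The final claim follows from the elementary bound $H_d\le 1+\int_1^d dx/x=1+\ln d$.

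The only delicate point is the tie-breaking when several elements of $B$ are covered in the same step. Ordering them arbitrarily within that step still works, because at that moment all of them were uncovered, so the lower bound on $c$ uses a count at least as large as the one needed. With that handled, the argument is routine.
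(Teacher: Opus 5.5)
Your proof is correct and complete. The pricing scheme, where each newly covered element pays $1/c$, combines with the observation that just before $u_i$ is covered the set $B$ alone still has at least $|B|-i+1$ uncovered elements. So the oracle's set covers at least $(|B|-i+1)/\alpha$ new elements, giving a per-set charge of at most $\alpha H_{|B|}\le\alpha H_d$ and exactly the stated bound $\alpha H_d\tau$. Your handling of ties and of progress (integrality forces $c\ge 1$) is also sound.

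The paper gives no proof of this statement. It records it as folklore and defers to the analysis in \cite{CalinescuChekuriPalVondrak2011}, so your Chv\'atal-style charging argument is a self-contained justification rather than an alternative to a written proof.

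It is worth noting what the per-set charging buys. The naive potential argument says each oracle call covers at least a $1/(\alpha\tau)$ fraction of the remaining uncovered elements, and this yields only $O(\alpha\tau\log|X|)$. Recovering a $\log d$ dependence that way needs an extra step: use $|X|\le\tau d$ to reach at most $\tau$ uncovered elements after about $\alpha\tau\ln d$ calls, then finish one element at a time. That gives a bound of the form $\alpha\tau\ln d+\tau+1$, not the stated $\alpha H_d\tau$. Either way, the dependence on $\log d$ rather than $\log|X|$ is what the paper needs to bound $|\mathcal C_1|$ by $O(k\log k)\cdot\OPT$ with $d\le kR=O(k^5)$, independently of $n$.
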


To implement the maximum coverage oracle in our setting, a simple greedy algorithm will yield a $k$-approximate maximum cardinality set in the intersection of $k$ matroids.
One could also use the improved $(k/2+\varepsilon)$-approximation of ~\cite{DBLP:journals/siamcomp/LeeSV13} but this constant factor does not matter for our purposes. We are now ready to prove the overall approximation bound on the performance of~\Cref{alg:klogk}. 
\begin{lemma}[Cost of Algorithm~\ref{alg:klogk}]
\label{lem:klogk-cost}
Let $\OPT=\chi(\Mint)$ be the optimum number of common independent sets needed
to cover $U$. Assume that the good event of Lemma~\ref{lem:key-claim} holds.
Then Algorithm~\ref{alg:klogk} outputs a feasible coloring of $U$ using at most $O(k\log k)\cdot \OPT$ colors.
\end{lemma}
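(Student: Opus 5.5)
The plan is to bound separately the three groups of colors output by Algorithm~\ref{alg:klogk}, and then check that together they form a feasible partition of $U$. First, feasibility. Phase~1 removes $\bigcup_p I_p$, Phase~2 moves elements of $W$ into $S$, and the final $W$ is disjoint from $S$. Hence $U \subseteq \bigcup_p I_p \cup S \cup W$. Each $I_p$ is a common independent set, because the AW CRS outputs a set independent in every $M_j$. The classes of $\mathcal C_1$ and $\mathcal C_2$ are common independent sets by construction. Overlaps can be resolved by assigning each element to the first class that contains it; this preserves independence because independence is down-closed. So the output is a proper coloring of $\Mint$.

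Next, the count for Phase~1. It uses $L = 20\chimax\lceil (k+1)\ln k\rceil = O(k\log k)\cdot\chimax$ colors, and $\chimax \le \OPT$ as noted at the start of Section~\ref{sec:matroidcoloring}. For $\mathcal C_2$, Lemma~\ref{lem:final-residual-low-chi} gives $\chi(M_j|_W)\le \chimax/k$ for every $j$. Theorem~\ref{thm:constructive2} applied to the restrictions $M_j|_W$ then yields at most $k(k-1)\cdot\chimax/k \le k\,\chimax \le k\cdot\OPT$ colors. Handling the $\lceil\cdot\rceil$ rounding only adds lower-order terms. This part does not need the good event.

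The remaining and main step is $\mathcal C_1$, the greedy set cover of $S$. Here I would invoke Theorem~\ref{lem:approx-greedy-set-cover} with the ground set $X=S$ and $\mathcal F$ the family of common independent subsets of $S$.
\begin{itemize}
\item \textbf{Set size.} Under the good event, Lemma~\ref{lem:exceptional-set-low-intersection-rank} gives $r_{\Mint}(S) < kR = C_R k^5$, so every set in $\mathcal F$ has size $d < C_R k^5$.
\item \textbf{Oracle.} The greedy algorithm for maximum-cardinality common independent sets is a $k$-approximate maximum-coverage oracle: maximizing coverage of the uncovered elements is a max common independent set on the restriction to those elements. So $\alpha = k$.
\item \textbf{Optimum.} The optimal set cover value satisfies $\tau \le \chi(\Mint|_S) \le \OPT$, since restricting an optimal coloring to $S$ gives a cover.
\end{itemize}
Theorem~\ref{lem:approx-greedy-set-cover} then gives $|\mathcal C_1| \le k\,H_d\,\OPT = O(k\log(k^5))\,\OPT = O(k\log k)\,\OPT$.

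Summing the three groups gives $O(k\log k)\OPT + O(k\log k)\OPT + k\,\OPT = O(k\log k)\cdot\OPT$. The main obstacle I expect is the oracle in the set-cover step. Theorem~\ref{lem:approx-greedy-set-cover} requires an approximate oracle for maximum coverage of the \emph{currently uncovered} elements. I would implement this by running the greedy common-independent-set algorithm on $\Mint$ restricted to the uncovered part of $S$. That restriction is again a $k$-matroid intersection, so the usual $1/k$ guarantee of greedy for $k$-matroid intersection applies there.
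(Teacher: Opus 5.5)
Your proposal is correct and follows essentially the same route as the paper's proof. Both split $U$ into $T\cup S\cup W$ and bound Phase~1 by $L=O(k\log k)\,\chimax$, the cover $\mathcal C_1$ by greedy set cover with a $k$-approximate oracle using $r_{\Mint}(S)<kR$, and the coloring $\mathcal C_2$ by combining Lemma~\ref{lem:final-residual-low-chi} with Theorem~\ref{thm:constructive2}. Your extra remarks are harmless refinements that the paper leaves implicit: converting the cover into a partition, and noting that the oracle is greedy run on the uncovered part of $S$.
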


\begin{proof}
Let $T=\bigcup_{p=1}^L I_p$ be the set covered by the sampled common independent
sets in Phase~1. Observe that
\[
    U = T \cup S \cup W.
\]
The algorithm produces common independent sets $I_1,\dots,I_L$ which cover $T$, the coloring $\mathcal C_1$ covers $S$, and
the coloring $\mathcal C_2$ covers $W$. Thus, the output is a feasible coloring of $U$.

It remains to bound the number of colors. Phase~1 uses exactly $L$ colors where $L = \left\lceil 20 \chimax (k+1)\ln k\right\rceil = \chimax \cdot O( k\log k) \leq OPT \cdot O(k \log k).$

Next consider the set $S$. By Lemmas~\ref{lem:exceptional-sets-low-rank}
and~\ref{lem:exceptional-set-low-intersection-rank}, if the good event occurs, then $r_{\Mint}(S)<kR.$ Therefore every common independent subset of $S$ has size at most $kR$. We apply the
standard greedy set cover bound, which gives 
\[
    |\mathcal C_1|
    \le k \cdot O(\log (kR))\cdot \chi(\Mint|_S) \leq O(k \log k) \cdot OPT,
\]
where we have used $R=C_Rk^4$ and $\chi(\Mint|_S)\le \chi(\Mint)=\OPT$.

Finally, by Lemma~\ref{lem:final-residual-low-chi}, at the end of Phase 2, the set $W$ satisfies $\chi(M_j|_W)\le \frac{\chimax}{k}$
for every $j\in[k]$. Therefore, applying Theorem~\ref{thm:constructive2} to
$\Mint|_W$ gives
\[
    |\mathcal C_2|
    \le
    k^2\max_{j\in[k]}\chi(M_j|_W)
    \le
    k^2\cdot \frac{\chimax}{k}
    \leq 
    k \cdot OPT.
\]

Combining the three bounds, the total number of colors used to cover $U$ is at most $O(k \log k) \cdot OPT$.
\end{proof}

We can now conclude by combining the above claims to give a proof of the main result of this section. 
\begin{proof}[Proof of \Cref{thm:coloring}]
By \Cref{lem:key-claim}, the good event holds with probability at least $1-n^{-10}$. By \Cref{lem:klogk-cost}, conditional on the good event, \Cref{alg:klogk} feasibly colors $M_{int}$ with at most $O(k \log k) \cdot OPT$ colors. Clearly Phase 1 of \Cref{alg:klogk} runs in polynomial time, as the AW CRS does. The while loop of Phase 2 iterates at most $n$ times, as at least one element is removed in each step. Each iteration is polynomial time by \Cref{rem:max-density}. The approximate Set Cover algorithm and the $O(k^2)$-approximate coloring algorithm of Phase 3 are both polynomial-time algorithms. Hence \Cref{alg:klogk} runs in polynomial time. This completes the proof.  

\end{proof}


\section{Application to Monotone Submodular Maximization}\label{sec:submodular}
In this section, we apply \Cref{thm:main-intro} to obtain new guarantees for constrained monotone submodular maximization (see \Cref{thm:submax_main_temp}). In particular we give a polynomial-time bicriteria approximation algorithm for maximizing a monotone submodular function $f$ subject to $k$ matroid
constraints, and $p$ packing and $c$ covering
constraints, extending~\cite{MizrachiSSU19} which handles a single matroid and constantly many packing and covering constraints. We require the constraints to be \emph{loose}: each packing constraint has a right-hand value at least $\Omega(\log p)$ and each covering constraint has right-hand value at least $\Omega(k^3 \log
c)$. Under these conditions, we show that the continuous
greedy algorithm~\cite{CalinescuChekuriPalVondrak2011} followed by the AW CRS, outputs a set $S$ satisfying all packing
constraints without any violation, and all covering constraints up to an $O(k)$
factor, simultaneously with high probability in $p$ and $c$. Furthermore, the value of $f(S)$ is a $\left(\frac{1-1/e-\varepsilon}{k+1}\right)$-approximation to the optimal objective. This is formally stated in \Cref{thm:submax_main}.

\paragraph{Technical Overview}


To obtain our result, we prove high probability bounds on the likelihood that our solution is feasible for all constraints. The packing and covering constraints are structurally different in this respect. For packing,
since the AW CRS first draws a random set $R$ by independently including
each element $e$ with probability $\bar{x}_e$ (where $\bar{x}$ is the
fractional points returned by continuous greedy), and $S \subseteq R$, a
standard Bernstein inequality applied to $R$ shows that each packing constraint $\sum_{e \in S} A_{ie} \leq b_i$ is satisfied with failure
probability at most $1/(p+2)^5$, allowing a union bound over the $p$ packing constraints to give feasibility with high probability.

For covering, each constraint $\sum_{e \in S}
C_{ie} \geq d_i$ is a lower bound on a linear function $\mathbf{C}_i
\cdot \mathbf{1}_S$ of the output, where $\mathbf{C}_i \in [0,1]^U$ is the
$i$-th row of covering constraint matrix $C$. This is where our main concentration theorem
is needed: \Cref{thm:main-intro} bounds the failure probability of each covering constraint by
$4\exp\!\bigl(-\Omega(\delta^2 d_i / k^3)\bigr)$, and by the looseness condition, we have $d_i = \Omega(k^3 \log (c+2))$. Hence, the failure probability per covering constraint is at most $4/(c+2)^5$, allowing a union bound over the $c$ constraints as before.

The $\Omega(k)$ violation of covering constraints is tight. One can encode a
$k$-dimensional matching instance via $k$ partition matroids and a single
covering constraint on matching size; any violation below $o(k)$ would yield
an $\frac{1}{o(k)}$-approximation for $k$-dimensional matching which
does not exist unless $\text{NP} \subseteq \text{BPP}$~\cite{LeeST2025}.

\paragraph{Organization of the Section.} In \Cref{subsec:submax_prelim}, we discuss preliminaries for our result. In \Cref{subsec:submax_thm_alg}, we state our main theorem and algorithm. In \Cref{subsec:submax_analysis}, we analyze the algorithm and proved the stated guarantees. In \Cref{subsec:submax_context}, we provide context for our result, by showing the problem becomes NP-hard upon removal of any main assumption. 

\subsection{Preliminaries}\label{subsec:submax_prelim}
Our algorithm combines two existing algorithmic tools: the continuous greedy method
for optimizing the multilinear extension of a submodular function, and the AW CRS for rounding fractional points in the intersection of $k$ matroid polytopes. We recall the relevant guarantees below.

\begin{definition}[Submodular Function]
A set function $f : 2^U \rightarrow \mathbb{R}_{\geq 0}$ is \textbf{submodular} if $f(A) + f(B) \geq f(A \cup B) + f(A \cap B)$ for all $A, B \subseteq U$. $f$ is \textbf{monotone} if $f(A) \leq f(B)$ for all $A, B \subseteq U$ and \textbf{normalized} if $f(\emptyset) = 0$.
\end{definition}

\begin{definition}[Multilinear Extension \cite{CalinescuChekuriPalVondrak2011}]
The \textbf{multilinear extension} $F : [0, 1]^U \rightarrow \mathbb{R}_{\geq 0}$ of submodular function $f : 2^U \rightarrow \mathbb{R}_{\geq 0}$ is

\[ F(\mathbf{x}) = \sum_{S \subseteq U} f(S)\prod_{e \in S} x_e \prod_{e \notin S} (1-x_e) \]
\end{definition}

The multilinear extension $F$ enables continuous relaxation of submodular maximization. The following theorem guarantees that it can be approximately maximized over any solvable polytope in polynomial time.

\begin{theorem}\cite{CalinescuChekuriPalVondrak2011}\label{thm:cont_greedy}
For all fixed $\eps > 0$, given a monotone submodular set function $f$ and a general solvable polytope $P$, there is a polynomial-time algorithm to compute a solution $\mathbf{x}$ to the problem

\[ \max\{F(x): x \in P\}\]

such that $F(x) \geq (1-1/e-\eps)F(x^*)$, where $x^*$ is the optimal solution to the above problem.
\end{theorem}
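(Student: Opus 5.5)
This is the continuous greedy theorem of \cite{CalinescuChekuriPalVondrak2011}. Here is the plan I would follow to reprove it, taking $P$ down-closed and solvable, meaning we can optimize linear functions over $P$ in polynomial time. Write $\OPT = F(\mathbf{x}^*)$.

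\textbf{Algorithm and feasibility.} Run a discretized continuous process on $[0,1]$ with $N=\mathrm{poly}(n,1/\eps)$ steps. Start from $\mathbf{x}(0)=\mathbf{0}$. At step $\tau$:
\begin{itemize}
\item estimate the gradient $w_e \approx \partial F/\partial x_e\,(\mathbf{x}(\tau)) = \mathbb{E}[f(R+e)-f(R-e)]$, where $R$ is a random set drawn independently from $\mathbf{x}(\tau)$, using polynomially many samples;
\item compute $\mathbf{y}(\tau)\in\arg\max_{\mathbf{y}\in P}\mathbf{w}\cdot\mathbf{y}$ with the solvability oracle;
\item set $\mathbf{x}(\tau+1/N)=\mathbf{x}(\tau)+\mathbf{y}(\tau)/N$.
\end{itemize}
The output $\mathbf{x}(1)$ is an average of $N$ points of $P$, so it lies in $P$ by convexity.

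\textbf{Continuous analysis.} I would first establish two structural facts about $F$.
\begin{enumerate}
\item Monotonicity gives $\nabla F\ge 0$.
\item Submodularity gives $\partial^2F/\partial x_e\partial x_{e'}\le 0$, so $F$ is concave along every nonnegative direction.
\end{enumerate}
From these I derive the key inequality: for every $\mathbf{x}$, $\max_{\mathbf{y}\in P}\mathbf{y}\cdot\nabla F(\mathbf{x})\ge \OPT-F(\mathbf{x})$. The argument is as follows. Set $\mathbf{v}=(\mathbf{x}\vee\mathbf{x}^*)-\mathbf{x}$. Then $\mathbf{0}\le\mathbf{v}\le\mathbf{x}^*$. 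Concavity along $\mathbf{v}$ gives $F(\mathbf{x}+\mathbf{v})-F(\mathbf{x})\le\mathbf{v}\cdot\nabla F(\mathbf{x})$. Since $\nabla F\ge0$, we also have $\mathbf{v}\cdot\nabla F(\mathbf{x})\le\mathbf{x}^*\cdot\nabla F(\mathbf{x})$. Finally, monotonicity gives $F(\mathbf{x}\vee\mathbf{x}^*)\ge\OPT$. In the idealized continuous process this yields $\frac{d}{dt}F(\mathbf{x}(t))\ge\OPT-F(\mathbf{x}(t))$, and hence $F(\mathbf{x}(1))\ge(1-1/e)\OPT$.

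\textbf{Discretization and sampling errors.} I would control these per step. Write $\mathbf{x}=\mathbf{x}(\tau)$ and $\mathbf{y}=\mathbf{y}(\tau)$. A Taylor bound using the second partials gives
\[
F(\mathbf{x}+\mathbf{y}/N)-F(\mathbf{x})\ge \tfrac1N\,\mathbf{y}\cdot\nabla F(\mathbf{x})-O\!\left(\tfrac{n^2}{N^2}\right)\max_e f(\{e\}),
\]
using $|\partial^2 F|\le \max_e f(\{e\})$. The gradient estimates are accurate to within $\pm\frac{\eps}{n}\max_e f(\{e\})$ with high probability, by Chernoff bounds and a union bound over the $nN$ estimates. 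Therefore the LP step achieves $\mathbf{y}\cdot\nabla F(\mathbf{x})\ge \OPT-F(\mathbf{x})-\eps\max_e f(\{e\})$. The recurrence then becomes
\[
\OPT-F\big(\mathbf{x}(\tau+1/N)\big)\le(1-1/N)\big(\OPT-F(\mathbf{x}(\tau))\big)+\tfrac{\eps'}{N}\max_e f(\{e\}),
\]
and unrolling it gives $F(\mathbf{x}(1))\ge(1-1/e-O(\eps'))\OPT$, provided $\max_e f(\{e\})=O(\OPT)$.

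\textbf{Main obstacle.} The hard step is exactly this normalization, so that the additive errors are relative to $\OPT$. The error terms scale with $\max_e f(\{e\})$, but $\OPT$ could in principle be much smaller than that. I would handle it in the standard way:
\begin{itemize}
\item restrict to the elements $e$ whose scaled indicator $\alpha\mathbf{1}_e$ lies in $P$ for some non-negligible $\alpha$;
\item guess the element $e^*$ of $\mathbf{x}^*$ with the largest marginal contribution, whose value is within a factor $n$ of $\OPT$ by subadditivity;
\item use that guess to set the sampling accuracy, and take the best run over all guesses.
\end{itemize}
If this clean reduction fails for a general solvable polytope, the fallback is to require relative rather than absolute sampling accuracy. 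Concretely, I would estimate each marginal against $F(\mathbf{x}(\tau))+\OPT_{\text{guess}}$ with a geometric grid of guesses, and verify that the accumulated loss stays $O(\eps)\OPT$.
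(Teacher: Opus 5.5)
The paper does not prove this statement; it is quoted from \cite{CalinescuChekuriPalVondrak2011}. Your reconstruction of continuous greedy is the standard one and is sound up to the step you flag. The inequality $\max_{\mathbf{y}\in P}\mathbf{y}\cdot\nabla F(\mathbf{x})\ge \OPT-F(\mathbf{x})$ via $\mathbf{x}\vee\mathbf{x}^*$, the differential inequality, and the Taylor/Chernoff bookkeeping correctly give $F(\mathbf{x}(1))\ge(1-1/e)\OPT-O(\eps)\max_e f(\{e\})$.

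The normalization, however, is a genuine gap, and it is exactly what makes the guarantee multiplicative with polynomial running time. Your sketched fix does not close it. Subadditivity, $\OPT\le\sum_e x^*_e f(\{e\})$, only shows that some \emph{weighted} term $x^*_e f(\{e\})$ is at least $\OPT/n$. It says nothing about the unweighted $f(\{e\})$, which is what your error terms scale with.

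For example, take $P=\{\mathbf{x}\ge\mathbf{0}:\sum_e x_e\le 2^{-n}\}$ and any monotone submodular $f$ with $f(\{e\})=1$ for all $e$. Then $\OPT\le 2^{-n}$, while your error bound is of order $\eps\max_e f(\{e\})=\eps$. Every element has the same $\alpha_e=2^{-n}$, so filtering on $\alpha$ cannot help. Guessing the scale of $\OPT$ does not shrink the range $[0,f(\{e\})]$ of the sampled marginals either. Pushing your bound down to $\eps\OPT$ therefore requires $\eps'\le \eps 2^{-n}$, i.e.\ exponentially many samples and steps. The grid-of-guesses fallback has the same defect.

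The missing idea is to normalize per coordinate. The algorithm needs no change, only the accounting.
\begin{itemize}
\item Let $\alpha_e:=\max\{x_e:\mathbf{x}\in P\}$, computable with one oracle call. Every $\mathbf{y}\in P$ has $y_e\le\alpha_e$. A point $\mathbf{x}\in P$ attaining $\alpha_e$ satisfies $\mathbf{x}\ge\alpha_e\mathbf{1}_e$, so monotonicity gives $\alpha_e f(\{e\})\le F(\mathbf{x})\le\OPT$.
\item The sampled marginal lies in $[0,f(\{e\})]$, so $O(n^2\eps^{-2}\log(nN))$ samples estimate $\partial F/\partial x_e$ to within $\pm\frac{\eps}{n}f(\{e\})$, whatever the scale. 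Then $|\mathbf{y}\cdot(\mathbf{w}-\nabla F)|\le\frac{\eps}{n}\sum_e\alpha_e f(\{e\})\le\eps\OPT$ for every $\mathbf{y}\in P$; apply this to both $\mathbf{y}(\tau)$ and $\mathbf{x}^*$.
\item Similarly, $|\partial^2F/\partial x_e\partial x_{e'}|\le\min(f(\{e\}),f(\{e'\}))$ gives $y_ey_{e'}|\partial^2F/\partial x_e\partial x_{e'}|\le\alpha_e f(\{e\})\le\OPT$. So the second-order loss per step is at most $\frac{n^2}{2N^2}\OPT$.
\end{itemize}
With $N=\Theta(n^2/\eps)$, your recurrence then yields $(1-1/e-O(\eps))\OPT$ with no guessing.

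This also removes your down-closedness assumption, which matters here. The paper applies the theorem to $(1-\eps/2)P$ with covering constraints $C\mathbf{x}\ge d$, which is not down-closed. Your key inequality never used down-closedness, but your proposed reduction through $\alpha\mathbf{1}_e\in P$ did.
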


Once a near-optimal fractional point is found, the AW CRS rounds it to a feasible integral solution while preserving a $\tfrac{1}{k+1}$
fraction of the objective. For the monotone case, every element greedily improves the
objective, so the output equals the rounded set directly.

\begin{theorem}\cite{AdamczykWlodarczyk2018}
Let $f : 2^U \rightarrow \mathbb{R}_{\geq 0}$ be a non-negative submodular function with $f(\emptyset) = 0$. Let $M_i = (U, \mathcal{I}_i)$ for $i = 1, 2, \dots, k$ be $k$ matroids on common ground set $U$. Let $x \in P(M_i)$ for all $i \in [k]$ be arbitrary. Initialize $X = \emptyset$. Run the AW algorithm on $x$ and $M_i, i \in [k]$, and for each element $e$ added to the output set $S$, add element $e$ to $X$ if and only if $f(X \cup \{e\}) \geq f(X)$.\footnote{In \cite{AdamczykWlodarczyk2018} this condition is $f(X \cup \{e\}) > f(X)$, but it is easy to see that this can be relaxed to a non-strict inequality.} Then

\[ \mathbb{E}[f(X)] \geq \frac{1}{k+1}F(\bar{x}) \]
\end{theorem}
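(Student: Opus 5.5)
This is the AW guarantee for submodular functions: running the $k$-matroid AW scheme and then greedily keeping elements with nonnegative marginal gain gives $\mathbb{E}[f(X)] \geq \frac{1}{k+1}F(\mathbf{x})$ (we read $\bar{x}$ in the statement as $\mathbf{x}$). The plan is the standard CRS-to-submodular argument, using only per-step survival probabilities, which the sequential selection viewpoint of \Cref{lem:k-bounded} supplies.

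Fix the random order $\sigma$ and write $X_t$ for the greedy set after $t$ steps. Since $X_0=\emptyset$ and $f(\emptyset)=0$, telescoping gives $f(X)=\sum_t \big(f(X_t)-f(X_{t-1})\big)$. The step for $e=\sigma(t)$ contributes $\mathbf 1[e\in S]\cdot \max\{0, f(X_{t-1}+e)-f(X_{t-1})\}$, because the greedy filter discards exactly the negative marginals.

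Condition on $\mathcal F_{t-1}$ and on $\sigma(t)=e$. By \Cref{lemma:combine_ssp}, $e$ is accepted with probability at least $p_e(t)$, where $p_e(t)=\max\{0,\sum_i p^i_e(t)-(k-1)x_e\}$. The marginal is determined by $\mathcal F_{t-1}$, so the expected gain is at least $p_e(t)\,(f(X_{t-1}+e)-f(X_{t-1}))^+$. Submodularity lets us compare this marginal with $f(X'+e)-f(X')$ for any superset $X'$, in particular $X'=X\cup R$.

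The main step is to show, summing over $t$ and $e$ and averaging over $\sigma$, that
\[
\sum_e \mathbb{E}\Big[p_e(\sigma^{-1}(e))\cdot \big(f(X\cup R + e)-f(X\cup R)\big)^+\Big] \;\geq\; \mathbb{E}\big[f(R\cup X)\big]-(\text{loss}).
\]
Here I would follow AW's argument. Each element's expected residual satisfies $\mathbb{E}[p_e(\sigma^{-1}(e))]\geq x_e\big(1-k\cdot\frac{\sigma^{-1}(e)-1}{n}\big)$ roughly, by $k$-boundedness. Integrating this over a uniform random position is too lossy, since it can go negative. So, as AW do, I would use the time-continuous view: arrival times are uniform in $[0,1]$, and an element arriving at time $\tau$ survives with probability at least $e^{-k\tau}$ or an analogous bound.

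Combining this with monotonicity, $\mathbb{E}[f(R\cup X)]\ge \mathbb{E} f(R)=F(\mathbf x)$, the accepted marginals collectively recover a $1/(k+1)$ fraction. The clean route is the AW potential argument: bound $\mathbb{E}[f(X\cup R)]\le \mathbb{E}[f(X)]+\sum_e \Pr[e\in R\setminus X]\cdot(\text{marginal})$, and charge each rejected $e\in R$ to the killing mass. The killing mass is at most $k x$ by $k$-boundedness, so rejections cost at most $k\,\mathbb{E}[f(X)]$. This yields $F(\mathbf x)\le \mathbb{E}f(X\cup R)\le (k+1)\mathbb{E} f(X)$.

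The hard step is this charging. Each $e\in R\setminus X$ has to be attributed to the accepted elements that killed it, with total weight at most $k$ per accepted element, and the marginal of $e$ has to be bounded by the killer's contribution via submodularity. If the charging becomes unwieldy, I would fall back on citing Theorem~II.x of \cite{AdamczykWlodarczyk2018} directly, noting that their strict inequality $f(X\cup\{e\})>f(X)$ can be relaxed: an element with zero marginal contributes nothing to $f$ and only makes $X$ larger, and the proof uses $X$ only through marginals, which remain nonnegative.
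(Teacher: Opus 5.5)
The paper gives no proof of this statement. It is imported from \cite{AdamczykWlodarczyk2018}, with only a footnote asserting that the strict filter may be made non-strict, so your fallback of citing AW is exactly the paper's treatment. The self-contained route you put first does not go through, however, and the first problem is structural. The sequential-selection abstraction of \Cref{lemma:combine_ssp} and \Cref{lem:k-bounded} is provably too weak to yield $\frac{1}{k+1}$, even for $f(A)=|A|$.

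To see this, take the process that at each step $t$ lowers every unprocessed $p_f$ ($f\neq\sigma(t)$) by $\min\{p_f(t),\frac{kx_f}{n-t+1}\}$. It accepts $e=\sigma(t)\in R$ with probability exactly $p_e(t)/x_e$. This process is $k$-bounded, and it is $k$-limited since $\sum_f \frac{kx_f}{n-t+1}\le k$. Yet as $n\to\infty$, $\Pr[X_e=1]\to x_e(1-k+ke^{-1/k})<\frac{x_e}{2k}\le\frac{x_e}{k+1}$. Even with a multiplicative drift, the union-bound residual $\max\{0,\sum_i p^i_e-(k-1)x_e\}$ certifies only the profile $(1-k\tau)^+$. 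That averages to $\frac{1}{2k}<\frac{1}{k+1}$ once $k\ge 2$. So the survival bound you want cannot come from this abstraction (and $e^{-k\tau}$ is not the right profile anyway).

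You must use the AW scheme itself. Two facts give selectability.
\begin{itemize}
\item The controllers of $e$ in the $k$ runs are chosen independently and are not consulted before $e$ is processed. Hence the conditional acceptance probability is the product $\prod_i w^i_e(t)/x_e$, where $w^i_e$ is the residual weight in the run for $M_i$, not the union bound.
\item The single-matroid drift holds multiplicatively: $\mathbb{E}[\Delta^i_f(t)\mid\mathcal F_{t-1}]\le \frac{w^i_f(t)}{n-t+1}$. This is what line (\ref{eq:edrop_3}) gives before it is weakened to $x_f$.
\end{itemize}
Combine these with $\prod_i w_i-\prod_i(w_i-\Delta_i)\le\sum_i\Delta_i\prod_{j\ne i}w_j$ and condition on $\sigma^{-1}(e)=t$. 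This yields $\Pr[e\in S\mid e\in R,\sigma^{-1}(e)=t]\ge\binom{n-t}{k}/\binom{n-1}{k}$, i.e.\ the profile $(1-\tau)^k$, whose average over $t$ is exactly $\frac{1}{k+1}$.

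The second gap is the passage to submodular $f$. Your charging (each rejected $e$ billed to its killers, so rejections cost at most $k\,\mathbb{E}[f(X)]$) is the greedy argument of \cite{FisherNW1978}. That argument works because a blocked element's marginal is at most its blocker's, the blocker having been chosen by maximum marginal. AW processes elements in a uniformly random order oblivious to $f$, so no such comparison exists. Already for modular $f(A)=\sum_{e\in A}w_e$ with $w\ge 0$, requiring $\mathbb{E}[f(R)-f(X)]\le k\,\mathbb{E}[f(X)]$ for all $w$ is equivalent to $\Pr[e\in S]\ge\frac{x_e}{k+1}$ for every $e$. In other words, the charging is the survival analysis itself, not a substitute for it. Also, bounding the mass a single accepted element kills is $k$-limitedness, not $k$-boundedness.

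For general non-negative $f$, the real difficulty is that the step-$t$ gain is $\mathbb{E}[q_e\,(f(X_{t-1}+e)-f(X_{t-1}))^+]$, with both factors history-dependent. You control only the mean of $q_e$, so you cannot replace $q_e$ by its mean, and your sketch supplies no decoupling. Monotonicity is not assumed here, though you do not need it where you invoke it: $X\subseteq S\subseteq R$ gives $f(X\cup R)=f(R)$.

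Finally, your reason for relaxing the strict filter is not the right one. Enlarging $X$ lowers later marginals and can hurt pathwise. Take $f(\{a\})=0$, $f(\{b\})=1$, $f(\{a,b\})=0$ with $a$ processed first: the non-strict rule ends at value $0$, the strict one at $1$. What makes the relaxation harmless is an analysis that uses $X$ only through two facts, both of which hold under either rule:
\begin{itemize}
\item the per-step gain $(f(X_{t-1}+e)-f(X_{t-1}))^+$;
\item the containment $X_{t-1}\subseteq R\cap\sigma[1:t-1]$.
\end{itemize}
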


As an immediate corollary, if $f$ is monotone submodular, then the condition $f(X \cup \{e\}) \geq f(X)$ is always satisfied and we obtain $X = S$.

\begin{corollary}\cite{AdamczykWlodarczyk2018}\label{cor:AW_multi}
Let $f : 2^U \rightarrow \mathbb{R}_{\geq 0}$ be a \textbf{monotone} non-negative submodular function with $f(\emptyset) = 0$. Let $M_i = (U, \mathcal{I}_i)$ for $i = 1, 2, \dots, k$ be $k$ matroids on common ground set $U$. Let $x \in P(M_i)$ for all $i \in [k]$ be arbitrary. Let $S$ be the output of the AW algorithm on $x$ and $M_i, i \in [k]$. Then

\[ \mathbb{E}[f(S)] \geq \frac{1}{k+1}F(\bar{x}) \]
\end{corollary}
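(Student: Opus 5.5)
This follows directly from the preceding theorem of \cite{AdamczykWlodarczyk2018}, specialized to monotone $f$. We run AW on $x$ and $M_1,\dots,M_k$, letting $S$ be its output. In parallel we form the auxiliary set $X$ exactly as in that theorem: we initialize $X=\emptyset$, and each time an element $e$ is added to $S$, we add $e$ to $X$ if and only if $f(X\cup\{e\})\ge f(X)$. The theorem then gives $\mathbb{E}[f(X)]\ge \frac{1}{k+1}F(\bar x)$. It remains to show that $X=S$ on every run.

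We prove $X=S$ by induction over the order in which elements are added to $S$. We maintain the invariant that, just before the $t$-th element $e$ is added to $S$, the current $X$ equals the current $S$. The invariant holds trivially at the start, when both are empty. Suppose it holds when $e$ arrives. Monotonicity means $A\subseteq B$ implies $f(A)\le f(B)$. Applying this with $A=X$ and $B=X\cup\{e\}$ gives $f(X)\le f(X\cup\{e\})$, so the acceptance test passes and $e$ joins $X$. The invariant is therefore preserved. At termination $X=S$ pointwise, hence $f(S)=f(X)$ always, and taking expectations gives $\mathbb{E}[f(S)]=\mathbb{E}[f(X)]\ge \frac{1}{k+1}F(\bar x)$.

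One point needs care: the filtering rule must not feed back into the AW process. It does not, since $X$ is only a passive post-processing of the elements accepted into $S$ and never changes which elements AW accepts. So the process generating $S$ is literally the same in both statements, and the coupling is exact. There is no real obstacle beyond checking this, together with reading $\bar x$ as the input point $x$. The non-strict inequality in the footnoted version of the theorem is precisely what allows the test to always pass. Under the original strict rule, elements with zero marginal gain would be dropped, and we would only get $f(X)=f(S)$ rather than $X=S$; that still suffices for the bound.
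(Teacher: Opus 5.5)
Your proposal is correct and matches the paper's argument. Monotonicity makes the test $f(X\cup\{e\})\ge f(X)$ always pass, so $X=S$ and the preceding theorem's bound transfers directly to $\mathbb{E}[f(S)]$. Your aside about the strict rule, where submodularity still gives $f(X)=f(S)$, is also correct but not needed.
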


\subsection{Main Theorem and Algorithm}\label{subsec:submax_thm_alg}

\begin{theorem}\label{thm:submax_main}
Let $\eps \in (0, 1/5)$ be arbitrary. Let $f : 2^U \rightarrow \mathbb{R}_{\geq 0}$ be a monotone non-negative submodular function with $f(\emptyset) = 0$. Let $M_i = (U, \mathcal{I}_i)$ for $i = 1, 2, \dots, k$ be $k$ matroids on common ground set $U$. Let $A \in [0, 1]^{p \times U}$ be a nonnegative matrix and $b \in \mathbb{R}^p$ be a nonnegative vector s.t. $b_i \geq 40\eps^{-2}\log (p+2)$ for all $i \in [p]$. Let $C \in [0, 1]^{c \times U}$ be a non-negative matrix and $d \in \mathbb{R}^c$ be a non-negative vector s.t. $d_i \geq 15(k+1)^3\eps^{-2}\log (c+2)$ for all $i \in [c]$. If 

\[ P = \{x \in \mathbb{R}^U : x \in P(M_i) \forall i \in [k], Ax \leq b, Cx \geq d\} \]

is feasible, then there exists a randomized polynomial-time $\left(\frac{1-1/e-\eps}{k+1}\right)$-approximation algorithm for the problem

\[ \max\{f(X) : X \in \mathcal{I}_i \forall i \in [k], A\mathbf{1}_X \leq b, C\mathbf{1}_X \geq d\} \]

which outputs a solution $S$ such that $C\mathbf{1}_S \geq \frac{c_\eps}{k+1} d$ for constant $c_{\eps} = \frac{(1-5\eps)(1-\eps/2)}{5} > 0$, and succeeding with probability at least $1-1/(p+2)^4-4/(c+2)^4 = \Omega(1)$, where $\mathbf{1}_S$ is the indicator vector of $S$.
\end{theorem}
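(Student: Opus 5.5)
The algorithm first scales the polytope, then applies continuous greedy, then rounds with AW; the analysis handles the objective, the packing constraints and the covering constraints separately and finishes with a union bound.

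\textbf{Algorithm.} Form the shrunk polytope
\[
P' = \{x : x \in P(M_i)\ \forall i,\ Ax \leq (1-\eps/2)b,\ Cx \geq (1-\eps/2)d\}.
\]
The shrinkage buys slack for packing, while the covering side keeps enough of $d$ to lose only a constant factor. Since $P$ is feasible, $P'$ contains the point $(1-\eps/2)x^*$ restricted to an optimal integral solution, so $\max_{P'} F \ge (1-\eps/2) f(\OPT)$ by concavity of $F$ along nonnegative directions.

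If scaling breaks the covering constraint $Cx\ge d$, simply define $P'$ with $Ax\le(1-\eps/2)b$ and $Cx\ge(1-\eps/2)d$. Because $P'$ contains $(1-\eps/2)\mathbf 1_{\OPT}$, which satisfies $(1-\eps/2)d$ covering, this works. $P'$ is solvable, since matroid polytopes have separation oracles. Run continuous greedy (\Cref{thm:cont_greedy}) to get $\bar x\in P'$ with
\[
F(\bar x)\ge(1-1/e-\eps/2)(1-\eps/2)f(\OPT)\ge(1-1/e-\eps)f(\OPT).
\]
Then run the $k$-matroid AW CRS on $\bar x$ to get $S$. By \Cref{cor:AW_multi}, $\mathbb E[f(S)]\ge F(\bar x)/(k+1)$, which is the approximation guarantee. $S$ is independent in every $M_i$ deterministically.

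\textbf{Packing.} $S\subseteq R$, where $R$ includes each $e$ independently with probability $\bar x_e$. So $A_i\mathbf 1_S\le A_i\mathbf 1_R$, whose mean is at most $(1-\eps/2)b_i$. Apply Bernstein (\Cref{thm:bernstein}) with $t=\eps b_i/2$ and variance at most $b_i$. This gives failure probability at most
\[
\exp\!\left(-\frac{\eps^2b_i^2/4}{2(b_i+\eps b_i/6)}\right)\le\exp(-\eps^2 b_i/10),
\]
which is at most $(p+2)^{-4}$ using $b_i\ge 40\eps^{-2}\log(p+2)$. A union bound over the $p$ rows yields total failure probability at most $p(p+2)^{-4}$. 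To match the stated bound $1/(p+2)^4$ exactly, I would tighten the constant in the exponent so each row fails with probability at most $(p+2)^{-5}$; this is the constant-bookkeeping step that needs care.

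\textbf{Covering.} This is the main step; it is where \Cref{thm:main-intro} is essential. For row $i$, set $a=C_i\in[0,1]^U$, so that $Q=C_i\cdot\bar x\ge(1-\eps/2)d_i$. Take $\delta=\eps$. Then with probability at least $1-4\exp(-\eps^2Q/(3(k+1)^3))$,
\[
C_i\mathbf 1_S\ge(\tfrac15-\eps)\frac{Q}{k+1}\ge\frac{(1-5\eps)(1-\eps/2)}{5}\cdot\frac{d_i}{k+1}=\frac{c_\eps}{k+1}d_i.
\]
From $d_i\ge15(k+1)^3\eps^{-2}\log(c+2)$ and $Q\ge(1-\eps/2)d_i\ge\tfrac9{10}d_i$, the exponent is at least $4.5\log(c+2)$. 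So each covering row fails with probability at most $4(c+2)^{-4.5}$.

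\textbf{Union bound and the expectation.} A union bound over all $p$ packing rows and $c$ covering rows gives overall success probability at least $1-1/(p+2)^4-4/(c+2)^4$, which is $\Omega(1)$. The objective guarantee holds in expectation. If one wants it conditionally on success, use $f\ge0$ and note that the success probability is a constant, or rerun the procedure.
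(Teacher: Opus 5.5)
Your route is the paper's. You shrink the packing and covering right-hand sides by $(1-\eps/2)$, run continuous greedy, and round with AW. Packing is controlled by Bernstein on the independently sampled $R\supseteq S$, covering by \Cref{thm:main-intro} with $\delta=\eps$, and a union bound finishes. Keeping the matroid constraints unscaled in $P'$ and comparing directly with $(1-\eps/2)\mathbf 1_{\OPT}$ (instead of the paper's $(1-\eps/2)P$ and $(1-\eps/2)x^*$) is harmless. AW and \Cref{thm:main-intro} only need $\bar x\in\bigcap_i P(M_i)$, and $P'\supseteq(1-\eps/2)P$.

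The gap is the final probability bookkeeping: neither of your per-row bounds survives the union bound to give $1-1/(p+2)^4-4/(c+2)^4$.

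For packing, a per-row bound of $(p+2)^{-4}$ only yields $p(p+2)^{-4}$. The missing step is to bound the variance by the mean rather than by $b_i$. By \Cref{thm:bhatia_davis}, $\mathrm{Var}(\sum_j Y_j)\le\sum_j A_{ij}\bar x_j\le(1-\eps/2)b_i$, so $\mathrm{Var}+t/3\le(1-\eps/3)b_i\le b_i$. The exponent is then at least $\eps^2 b_i/8\ge 5\log(p+2)$, and $p(p+2)^{-5}\le (p+2)^{-4}$. This is exactly the computation in \Cref{lemma:submax_packing}.

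For covering, your per-row bound $4(c+2)^{-4.5}$ is correctly derived. But summing over $c$ rows gives $4c(c+2)^{-4.5}$, which exceeds $4(c+2)^{-4}$ as soon as $c\ge 3$, so your last line does not follow. The paper's proof of \Cref{lemma:submax_covering} has the same slip. It claims $4(c+2)^{-5}$ per row, citing $6(1-\eps/2)>5$. However, with the constant $15$ and the factor $3$ of \Cref{thm:main-intro}, the guaranteed exponent is only $5(1-\eps/2)\log(c+2)$; a factor $6$ would require $18$ in place of $15$.

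A clean repair is to use the sharper constant that the proof of \Cref{thm:main} actually gives. Its denominator is at most $(\lambda+1)(\lambda+3.8)\le 2.4(\lambda+1)^2$ for $\lambda\ge 1$. The per-row exponent is then at least $\tfrac{0.9\cdot 15}{2.4}\log(c+2)>5\log(c+2)$, and $4c(c+2)^{-5}\le 4(c+2)^{-4}$. Alternatively, raise $15$ to $18$, or settle for $4(c+2)^{-3.5}$ in the conclusion, which is still $\Omega(1)$.

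Your closing aside about a guarantee conditional on success is not justified by $f\ge 0$ and constant success probability alone; you would need to control $\mathbb E[f(S)\mathbf 1_{\text{fail}}]$. Nothing in the statement requires it, though.
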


\textbf{Algorithm:} Compute a $(1-1/e-\eps/2)$-approximate solution $\bar{x}$ to $\max\{F(x) : x \in (1-\eps/2)P\}$ where $F(x)$ is the multilinear extension of $f(X)$ (\Cref{thm:cont_greedy}). Let $S$ be the output of the AW algorithm on matroids $M_1, M_2, \dots, M_k$ and fractional point $\bar{x}$. Output $S$.

\subsection{Analysis}\label{subsec:submax_analysis}

We verify the three components of \Cref{thm:submax_main} separately.
\Cref{lemma:submax_approx} establishes the approximation ratio.
\Cref{lemma:submax_packing} shows the packing constraints are satisfied
exactly with high probability.  \Cref{lemma:submax_covering} shows the
covering constraints are satisfied approximately with high probability, and is the step where \Cref{thm:main-intro} is invoked. The proof of \Cref{thm:submax_main} then follows by a union bound over the latter two.

\begin{lemma}\label{lemma:submax_approx}
The algorithm is a $\left(\frac{1-1/e-\eps}{k+1}\right)$-approximation to the optimal objective value for the given problem.
\end{lemma}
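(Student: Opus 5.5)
We need to prove $\mathbb{E}[f(S)] \geq \frac{1-1/e-\eps}{k+1}\OPT$. The plan is to chain three comparisons: relate $\OPT$ to the fractional optimum over the scaled polytope $(1-\eps/2)P$, apply the continuous greedy guarantee of \Cref{thm:cont_greedy}, and then apply \Cref{cor:AW_multi} to the rounding.

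\emph{Step 1: the fractional optimum dominates $(1-\eps/2)\OPT$.} Let $X^*$ be an optimal integral solution, so $\mathbf{1}_{X^*}\in P$ and hence $(1-\eps/2)\mathbf{1}_{X^*}\in(1-\eps/2)P$. I will use concavity of the multilinear extension of a monotone submodular function along nonnegative directions. The map $g(s)=F(s\mathbf{1}_{X^*})$ is concave on $[0,1]$, since its second derivative is a sum of nonpositive cross partials. We also have $g(0)=f(\emptyset)=0$. Therefore $F((1-\eps/2)\mathbf{1}_{X^*})\ge(1-\eps/2)f(X^*)=(1-\eps/2)\OPT$. Letting $x^*$ be the maximizer of $F$ over $(1-\eps/2)P$, we get $F(x^*)\ge(1-\eps/2)\OPT$.

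\emph{Step 2: continuous greedy.} The scaled polytope $(1-\eps/2)P$ is solvable, since it is given by explicit linear constraints plus matroid polytope separation. So \Cref{thm:cont_greedy}, run with parameter $\eps/2$, returns $\bar x$ with
\[
F(\bar x)\ge(1-1/e-\eps/2)\,F(x^*)\ge(1-1/e-\eps/2)(1-\eps/2)\,\OPT .
\]
This point lies in $(1-\eps/2)P\subseteq\bigcap_i P(M_i)$, because matroid polytopes are down-closed and contain the origin.

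\emph{Step 3: rounding.} Since $\bar x$ is a valid input, \Cref{cor:AW_multi} gives $\mathbb{E}[f(S)]\ge F(\bar x)/(k+1)$. It remains to check that
\[
(1-1/e-\eps/2)(1-\eps/2)\ge 1-1/e-\eps .
\]
Expanding, the left side is at least $1-1/e-\eps/2-(1-1/e)\eps/2\ge 1-1/e-\eps$, using $\eps^2/4\ge0$.

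The main obstacle is Step 1. The concavity claim for $F$ along the ray must be justified; if it is not wanted as a black box, I would use the direct argument instead. Sampling each element of $X^*$ independently with probability $s$ and applying submodularity gives $\mathbb{E}[f]\ge s f(X^*)$. A minor point: the approximation holds in expectation, which is the sense used in \Cref{thm:submax_main_temp}.
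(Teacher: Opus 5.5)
Your proof is correct and follows essentially the same route as the paper. Both run continuous greedy on $(1-\eps/2)P$, use the scaling property $F(\lambda x)\ge \lambda F(x)$ for monotone submodular $F$, and apply \Cref{cor:AW_multi}, with the same final algebra; the only differences are that you compare directly against $(1-\eps/2)\mathbf{1}_{X^*}$ instead of passing through the fractional optimum over $P$, and that you state the guarantee in expectation, which the paper's final display leaves implicit.
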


\begin{proof}
Let $\bar{x}^*$ be the optimal solution to $\max\{F(x) : x \in (1-\eps/2)P\}$, $x^*$ be the optimal solution to $\max\{F(x) : x \in P\}$, and $\text{OPT} \subseteq X$ be the optimal solution to the original problem. Then

\begin{align}
F(\bar{x}) &\geq (1-1/e-\eps/2)F(\bar{x}^*) \label{eq:multi_1} \\
&\geq (1-1/e-\eps/2)F((1-\eps/2)x^*) \label{eq:multi_2} \\
&\geq (1-1/e-\eps/2)(1-\eps/2)F(x^*) \label{eq:multi_3} \\
&\geq (1-1/e-\eps)f(\text{OPT}) \label{eq:multi_4}
\end{align}

Line (\ref{eq:multi_1}) follows by \Cref{thm:cont_greedy}. Line (\ref{eq:multi_2}) follows by optimality of $\bar{x}^*$ for $\max\{F(x) : x \in (1-\eps/2)P\}$ and the feasibility of $(1-\eps/2)x^*$ for this problem. Line (\ref{eq:multi_3}) follows by the well-known fact that the multilinear extension of a monotone submodular function has the property $F(\lambda x) \geq \lambda F(x)$ for all $x \in \mathbb{R}^n$ and $\lambda \in [0, 1]$. Line (\ref{eq:multi_4}) follows by $F(x^*) \geq f(\text{OPT})$ and simplification.

Thus by \Cref{cor:AW_multi}, we obtain

\[ f(S) \geq \frac{1}{k+1}F(\bar{x}) \geq\frac{1-1/e-\eps}{k+1} \cdot f(\text{OPT}) \]
\end{proof}

\begin{lemma}\label{lemma:submax_packing}
The packing constraints $A\mathbf{1}_S \leq b$ are satisfied with probability at least $1-1/(p+2)^4$.
\end{lemma}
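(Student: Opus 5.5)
The plan is to exploit the fact that the AW CRS only ever prunes the independently sampled set: $S \subseteq R$, where each $e$ lies in $R$ independently with probability $\bar{x}_e$. Since $A$ is entrywise nonnegative, $A\mathbf{1}_S \le A\mathbf{1}_R$ coordinatewise. It therefore suffices to show that, for each packing row $i$, $\sum_{e\in R} A_{ie} \le b_i$ fails with probability at most $1/(p+2)^5$. A union bound over the $p$ rows then gives total failure probability at most $p/(p+2)^5 \le 1/(p+2)^4$.

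Fix a row $i$ and set $Y_e = A_{ie}\mathbf{1}[e\in R]$. These are independent and lie in $[0,1]$. Their mean is $\mu_i = \sum_e A_{ie}\bar{x}_e \le (1-\eps/2)b_i$, because $\bar{x}\in(1-\eps/2)P$ and hence $A\bar{x}\le(1-\eps/2)b$. For the variance, Bhatia–Davis (or simply $Y_e^2\le Y_e$) gives $\nu^2 = \sum_e \mathrm{Var}(Y_e) \le \mu_i \le b_i$. Apply the upper tail of Bernstein's inequality (\Cref{thm:bernstein}) with deviation $t = \eps b_i/2$. This gives
\[
\Pr\Bigl[\textstyle\sum_e Y_e \ge b_i\Bigr] \le \Pr\Bigl[\textstyle\sum_e Y_e - \mu_i \ge \tfrac{\eps}{2} b_i\Bigr] \le \exp\!\left(-\frac{\eps^2 b_i^2/4}{2(b_i + \eps b_i/6)}\right).
\]
Since $\eps<1/5$, the denominator is at most about $2.07\, b_i$, so the exponent is at least roughly $\eps^2 b_i/8.3$.

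Next, substitute the looseness assumption $b_i \ge 40\eps^{-2}\log(p+2)$. The exponent becomes at least $40\log(p+2)/8.3 > 4.8\log(p+2)$. The one point needing care is whether this reaches $5\log(p+2)$. If it does not, the fallback is to settle for per-row failure probability $(p+2)^{-4.8}$; the union bound then gives $p(p+2)^{-4.8}\le (p+2)^{-3.8}$, which is slightly weaker than claimed. To get the stated bound, I would tighten the constant. One option is to use $\nu^2\le\mu_i\le(1-\eps/2)b_i$ in place of $b_i$. Another is to use the exact slack $b_i-\mu_i \ge \eps b_i/2$ together with the monotonicity of the Bernstein exponent in $t$. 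Either should push the exponent to at least $5\log(p+2)$. An event strictly exceeding $b_i$ is the only violation, so the boundary case $\sum_e Y_e = b_i$ causes no issue.

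The only genuine subtlety is this constant bookkeeping. The structural step, $S\subseteq R$ with $R$ sampled independently, reduces the whole lemma to a standard Bernstein bound plus a union bound over the rows.
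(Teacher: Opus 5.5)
Your proposal is correct and matches the paper's proof: use $S\subseteq R$ and $A\ge 0$, apply Bernstein to each row with $t=\eps b_i/2$, then take a union bound over the $p$ rows. Your first tightening option, $\nu^2\le(1-\eps/2)b_i$, is exactly what the paper uses, and it does close the constant gap. It gives $\nu^2+t/3\le(1-\eps/2)b_i+\eps b_i/6=(1-\eps/3)b_i\le b_i$, so the exponent is at least $\eps^2 b_i/8\ge 5\log(p+2)$, and then $p(p+2)^{-5}\le(p+2)^{-4}$.
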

\begin{proof}
We will show that $A\mathbf{1}_R \leq b$ with probability at least $1-1/p^4$ where 
$R = R(\bar{x})$ is the random set produced in the AW algorithm. This is sufficient 
because $S \subseteq R$. Consider an individual constraint $\sum_{j=1}^n A_{ij} x_j 
= \sum_{j \in R} A_{ij} \leq b_i$ for some $i \in [p]$. Point $\bar{x}$ satisfies 
$\sum_{j=1}^n A_{ij} \bar{x}_j \leq (1-\eps/2)b_i$. Let $X_j = 1$ 
if $j \in R$, $0$ otherwise. Let $Y_j = A_{ij}$ if $j \in R$, $0$ otherwise. Then 
$\sum_{j \in R} A_{ij} = \sum_{j=1}^n Y_j$ is a sum of independent random variables in $[0, 1]$. 
Via Bernstein's inequality (\Cref{thm:bernstein}),
\[ \Pr\left[\sum_{j=1}^n Y_j - \mathbb{E}\left[\sum_{j=1}^n Y_j\right] \geq t\right] 
\leq \exp\left(-\frac{t^2}{2(\text{Var}\left(\sum Y_j\right) + t/3)}\right)\]
We have
\[ \mathbb{E}\left[\sum_{j=1}^n Y_j\right] = \sum_{j=1}^n \mathbb{E}\left[Y_j\right] 
= \sum_{j=1}^n A_{ij} \bar{x}_j \leq (1-\eps/2)b_i \]
and
\[ \text{Var}\left(\sum_{j=1}^n Y_j\right) = \sum_{j=1}^n \text{Var}\left(Y_j\right) 
\leq \sum_{j=1}^n \left(A_{ij} - A_{ij}\bar{x}_j\right)
\left(A_{ij}\bar{x}_j\right) \leq \sum_{j=1}^n A_{ij}
\bar{x}_j \leq (1-\eps/2)b_i \]
via Bhatia-Davis inequality (\Cref{thm:bhatia_davis}). Thus by Bernstein's inequality 
(\Cref{thm:bernstein}) we have
\begin{align*}
\Pr\left[\sum_{j \in R} A_{ij} > b_i\right] 
&= \Pr\left[\sum_{j=1}^n Y_j - (1-\eps/2)b_i > 
   (\eps/2) b_i\right] \\
&\leq \Pr\left[\sum_{j=1}^n Y_j - \mathbb{E}\left[\sum_{j=1}^n Y_j\right] > 
   (\eps/2) b_i\right] \\
&\leq \exp\left(-\frac{(\eps/2)^2 b_i^2}
   {2\left(\text{Var}\left(\sum Y_j\right) + (\eps/2) b_i/3\right)}
   \right) \\
&\leq \exp\left(-\frac{(\eps/2)^2 b_i^2}{2 b_i}
   \right) \\
&= \exp\left(-\frac{\eps^2 b_i}{8}\right) \\
&\leq \exp\left(-\frac{40 \log (p+2)}{8}\right) \\
&= \exp\left(-5\log (p+2)\right) \\
&= \frac{1}{(p+2)^5}
\end{align*}
There are $p$ packing constraints, so a union bound gives a success on all 
constraints with probability at least $1-1/(p+2)^4$.
\end{proof}


The packing argument above required nothing beyond $S \subseteq R(\bar{x})$ and
independence of the elements of $R(\bar{x})$; no matroid structure was needed.
The covering argument, by contrast, invokes the concentration result of
\Cref{thm:main-intro} directly, as discussed above.

\begin{lemma}\label{lemma:submax_covering}
We have $C\mathbf{1}_S \geq \frac{c_\eps}{k+1} d$ for constant 
$c_{\eps} = \left(\frac{1}{5}-\eps\right)(1-\eps/2) > 0$
with probability at least $1-4/(c+
2)^4$.
\end{lemma}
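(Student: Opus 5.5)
Fix a single covering constraint $i \in [c]$ with row $\mathbf{C}_i \in [0,1]^U$, and apply \Cref{thm:main-intro} to the AW output $S$ on the point $\bar{x}$ with coefficients $a_e = C_{ie}$. This is legitimate because $\bar{x} \in (1-\eps/2)P$, so $\bar{x} \in \bigcap_j P(M_j)$ by down-closedness of matroid polytopes, and $a_e \in [0,1]$. The relevant mass is $Q_i = \mathbf{C}_i \cdot \bar{x}$. Since $\bar{x} = (1-\eps/2)y$ for some $y \in P$, we have $\mathbf{C}_i \cdot y \geq d_i$, and hence $Q_i \geq (1-\eps/2) d_i$.

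Take $\delta = \eps \in (0,1/5)$ in \Cref{thm:main-intro}. On the complement of the bad event,
\[ \mathbf{C}_i \cdot \mathbf{1}_S > \left(\tfrac{1}{5}-\eps\right)\frac{Q_i}{k+1} \geq \left(\tfrac{1}{5}-\eps\right)(1-\eps/2)\frac{d_i}{k+1} = \frac{c_\eps}{k+1} d_i. \]
The failure probability is at most
\[ 4\exp\!\left(-\frac{\eps^2 Q_i}{3(k+1)^3}\right) \leq 4\exp\!\left(-\frac{\eps^2 (1-\eps/2) d_i}{3(k+1)^3}\right). \]
Substituting $d_i \geq 15(k+1)^3\eps^{-2}\log(c+2)$ gives exponent at least $5(1-\eps/2)\log(c+2) \geq 4.5\log(c+2)$. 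This yields $4/(c+2)^{4.5}$, which is slightly short of $4/(c+2)^5$.

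To close the gap there are two options. The first is to bound loosely: $4(c+2)^{-4.5} \cdot c \leq 4(c+2)^{-3.5}$, which is still short of $(c+2)^{-4}$. The cleaner route is a union bound over the $c$ constraints:
\[ c \cdot 4(c+2)^{-4.5} \leq 4(c+2)^{-3.5}. \]
This also misses the target $4/(c+2)^4$ by a half power. I therefore expect the constant bookkeeping to be the only real obstacle. I would first try to absorb the factor $(1-\eps/2)$ by noting that $\eps < 1/5$ gives $5(1-\eps/2) \geq 4.5$, and then use $c \leq (c+2)^{1/2}\cdot(c+2)^{1/2}$. If this still falls short, I would use the slack $(1/5-\eps) \le 1/5$ in choosing $\delta$, for example by taking $\delta$ slightly smaller while keeping $c_\eps$ as stated. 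That is possible because the theorem's threshold $1/5-\delta$ only improves as $\delta$ decreases, provided the exponent still yields $(c+2)^{-5}$. Alternatively, I would read the hypothesis constant as giving exponent $\geq 5\log(c+2)$ directly when $Q_i \geq d_i$, which holds if the continuous greedy solution is taken in $P$ itself for the covering rows.

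Up to this constant tuning, the per-constraint failure is at most $4/(c+2)^5$. A union bound over the $c \leq c+2$ covering constraints then gives total failure at most $4/(c+2)^4$, which proves $C\mathbf{1}_S \geq \frac{c_\eps}{k+1} d$ with the claimed probability.
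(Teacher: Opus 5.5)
Your route is the paper's: apply \Cref{thm:main-intro} to the row $\mathbf{C}_i$ with $\delta=\eps$, use $\mathbf{C}_i\cdot\bar{x}\ge(1-\eps/2)d_i$, then union bound over the rows. Your arithmetic is also right. From the displayed bound $4\exp(-\delta^2Q/(3(k+1)^3))$ and $d_i\ge 15(k+1)^3\eps^{-2}\log(c+2)$, the exponent is only $5(1-\eps/2)\log(c+2)$. The paper's proof passes this step by citing $6(1-\eps/2)>5$, which does not follow from the displayed bound with the constant $15$.

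\textbf{The gap.} None of your repairs supplies the missing half power, so ``up to this constant tuning'' is an unproved assertion. As written, your argument only gives success probability $1-4c(c+2)^{-5(1-\eps/2)}\ge 1-4(c+2)^{-3.5}$.

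Writing $c\le (c+2)^{1/2}(c+2)^{1/2}$ just says $c\le c+2$ and gains nothing. Decreasing $\delta$ goes the wrong way: it shrinks the exponent $\delta^2Q_i/(3(k+1)^3)$ and so enlarges the failure bound. You cannot increase $\delta$ either, because $(1/5-\delta)(1-\eps/2)\ge c_\eps$ forces $\delta\le\eps$, so the threshold has no slack to trade.

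Arranging $Q_i\ge d_i$ modifies the algorithm, which optimizes over $(1-\eps/2)P$, so it proves a statement about a different procedure. Moreover, a point that satisfies the packing rows with the $(1-\eps/2)$ slack and the covering rows exactly need not exist even when $P\neq\emptyset$. For example, take a packing row and a covering row that coincide, with equal right-hand sides.

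\textbf{Where the slack is.} It lives inside the proof of \Cref{thm:main}, not in its statement. Before the final relaxation (\ref{eq:denom_5}), that proof shows the failure probability is at most $4\exp\bigl(-\delta^2\bar\mu/((\lambda+1)(\lambda+4))\bigr)$, where $\bar\mu=Q/(\lambda+1)$. This uses only $\delta<1/5$ and $\lambda\ge 1$.

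Take $\lambda=k\ge 1$. Since $3(k+1)/(k+4)\ge 6/5$, your exponent is multiplied by at least $6/5$. It becomes at least $6(1-\eps/2)\log(c+2)>5\log(c+2)$ for $\eps<1/5$, which is precisely the inequality the paper quotes. Each row then fails with probability at most $4(c+2)^{-5}$, and the union bound over $c\le c+2$ rows gives $4(c+2)^{-4}$, as claimed.

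If you want to use \Cref{thm:main-intro} purely as a black box, you have two alternatives. You can raise the hypothesis constant from $15$ to $18$ (anything $\ge 50/3$ suffices). Or you can weaken the success probability to the $1-4(c+2)^{-3.5}$ above.
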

\begin{proof}
Consider a constraint $\sum_j C_{ij} x_j \geq d_i$ for some $i \in [c]$. 
Because $\mathbf{C}_i = (C_{i1}, C_{i2}, \dots, C_{in}) \in [0, 1]^n$, 
by \Cref{thm:main-intro} applied with $\delta = \eps$ we have
\[
\Pr\left[\mathbf{C}_i(S) \leq \left(\frac{1}{5} - \eps\right)
\frac{\mathbf{C}_i \cdot \bar{x}}{k+1}\right] 
\leq 4\exp\left(-\frac{\eps^2 \cdot (\mathbf{C}_i \cdot \bar{x})}{3(k+1)^3}\right).
\]
Since $\bar{x} \in (1-\eps/2)P$ and $P$ requires $Cx \geq d$, 
we have $\mathbf{C}_i \cdot \bar{x} \geq (1-\eps/2)d_i$, 
so $c_\eps d_i = (1/5 - \eps)(1-\eps/2)d_i \leq (1/5-\eps)\,\mathbf{C}_i \cdot \bar{x}$.
Thus
\begin{align*}
\Pr\left[\sum_{j \in S} C_{ij} < \frac{c_{\eps}}{k+1} \cdot d_i\right] 
&= \Pr\left[\mathbf{C}_i(S) < \frac{c_\eps\, d_i}{k+1}\right] \\
&\leq \Pr\left[\mathbf{C}_i(S) < \left(\frac{1}{5}-\eps\right) 
    \cdot \frac{\mathbf{C}_i \cdot \bar{x}}{k+1} \right] \\
&\leq 4\exp\left(-\frac{\eps^2 \cdot (\mathbf{C}_i \cdot \bar{x})}{3(k+1)^3}\right) \\
&\leq 4\exp\left(-\frac{\eps^2 (1-\eps/2)\, d_i}{3(k+1)^3}\right) \\
&\leq 4\exp\left(-5 \log (c+2)\right) \\
&= \frac{4}{(c+2)^5}
\end{align*}
where the last inequality uses $d_i \geq 15(k+1)^3\eps^{-2}\log (c+2)$ and $6(1-\eps/2) > 5$ for $\eps \in (0,1/5)$.
There are $c$ covering constraints, so a union bound gives success on all 
constraints with probability at least $1-4/(c+2)^4$.
\end{proof}

\begin{proof}[Proof of \Cref{thm:submax_main}]
The proof follows by \Cref{lemma:submax_approx} and a union bound on \Cref{lemma:submax_packing} and \Cref{lemma:submax_covering}.
\end{proof}

\subsection{Context for Result}\label{subsec:submax_context}
We now show that each assumption in \Cref{thm:submax_main} is necessary in the sense that removing any single condition makes desired guarantees intractable to obtain.

\paragraph{Approximation Factor:} It is impossible to achieve an approximation factor of $\frac{1}{o(k)}$, even if $f(X) = |X|$, the matroids $M_i, i \in [k]$ are partition matroids, and there are no packing and covering constraints provided $NP \not \subseteq BPP$, as this captures the $k$-dimensional matching problem \cite{LeeST2025}.

\paragraph{Packing Constraints:} We must add some assumption on the packing constraints, as otherwise $f(X) = |X|$ and the packing constraints alone can encode the maximum independent set problem in graphs, which is NP-hard to approximate within $n^{1-\eps}$ for any $\eps > 0$ \cite{Zuckerman2007}.

\paragraph{Covering Constraints:} We must add some assumption on the covering constraints, because otherwise it is NP-hard to approximate the problem within a factor of $o(\log c)$ (where an $O(1)$ factor is guaranteed with $k=0$ matroids) via a reduction from Set Cover \cite{Feige1998}. For the reduction, $x_i$ represents whether we include set $S_i$ in the output for $i \in [n]$, we have a target objective value $\alpha \geq 0$, a single packing constraint $\sum x_i \leq \alpha$, and $c$ covering constraints encoding the set cover constraints. If we can determine feasibility of the resulting problem over $x \in \{0, 1\}^n$, we can determine whether the set cover instance can achieve target objective $\alpha$. Note that the added assumption $\alpha \geq 40\eps^{-2}\ln 3 = \Omega(1)$ and the ability to violate covering constraints by an $O(1)$ factor does not affect the reduction.

\paragraph{Violation of Covering Constraints:} Unless $\text{NP} \subseteq \text{BPP}$, it not possible to compute a feasible solution to

    \[ \mathcal{S}(v(k)) = \left\{X \subseteq U: X \in \mathcal{I}_i \quad \forall i \in [k],~ A\mathbf{1}_X \leq b,~ C\mathbf{1}_X \geq \frac{d}{v(k)} \right\} \]

    for a violation function $v(k) = o(k)$ given feasibility of $P$. We will show that if one can always compute a feasible solution to $\mathcal{S}(v(k))$, then it yields a $\frac{1}{v(k)}$-approximation for the $k$-dimensional matching problem. This claim is sufficient because it is not possible to approximate $k$-dimensional matching within a factor of $\frac{1}{o(k)}$ unless $\text{NP} \subseteq \text{BPP}$~\cite{LeeST2025}.
        
    Indeed, suppose we are given a $k$-uniform $k$-partite hypergraph $H = (V, F)$. Let OPT be the size of a maximum matching in $H$. The matchings in $H$ can be exactly represented as the independent sets in the intersection of $k$ partition matroids $M_i, i \in [k]$ on common ground set $F$. Compute the maximum integer value of $\alpha \geq 0$ such that the polytope

    \[ P = \left\{x \in \mathbb{R}^F : x \in P(M_i) \forall i \in [k], \sum_{i \in F} x_i \geq \alpha \right\} \]
    
    is feasible via binary search. Observe $\alpha \geq \text{OPT}$. Next, compute a feasible solution to

    \[ \mathcal{S}(v(k)) = \left\{X \subseteq F : X \in \mathcal{I}_i \forall i \in [k], 
    |X| \geq \frac{\alpha}{v(k)} \right\} \]

    A feasible solution $X$ corresponds to a matching in $H$ such that $|X| \geq \frac{\alpha}{v(k)} \geq \frac{\text{OPT}}{v(k)}$. Thus $X$ is a $\frac{1}{v(k)}$-approximation to the $k$-dimensional matching problem on $H$. Thus, feasibility testing of $\mathcal{S}(v(k))$ can be used to compute a $\frac{1}{v(k)}$-approximation of the maximum matching in $H$.

\bibliography{bib}
\bibliographystyle{alpha}

\appendix

\section{The Hypergraph $(g, f)$-Factor Problem}\label{subsec:submax_hyper}


In this section, we formally state the hypergraph $(g, f)$-factor problem, a generalization of the Representative Subhypergraph problem from the introduction, and show how \Cref{thm:main-intro} can be used to solve it. Recall that CRS are actually not the best algorithmic tool for this problem, and one can achieve better results using more elementary methods (independent rounding of the natural linear program, analyzed using standard Chernoff and union bounds).

In the hypergraph $(g, f)$-factor problem, the input is a hypergraph $H = (V, F)$ and functions $g, f : V \rightarrow \mathbb{Z}_{\geq 0}$ s.t. $g(v) \leq f(v)$ for all $v \in V$. The output is a subhypergraph $H' \subseteq H$ s.t. $g(v) \leq \deg_{H'}(v) \leq f(v)$ for all $v \in V$ if one exists.

The feasible region of subhypergraphs $H'$ satisfying the degree conditions is given by

\[ \left\{x \in \{0, 1\}^F : g(v) \leq \sum_{e \ni v} x_e \leq f(v) \hspace{1em} \forall v \in V\right\} \]

\begin{theorem}
Let $H = (V, F)$ be a $k$-uniform $k$-partite hypergraph for $k \geq 2$ with functions $g, f : V \rightarrow \mathbb{Z}_{\geq 0}$ such that $\alpha k^3\log |V| \leq g(v) \leq f(v)$ for some absolute constant $\alpha > 0$. Assume the hypergraph $(g, f)$-factor problem on $H, g, f$ is feasible. Then there exists a randomized polynomial-time algorithm to compute a subhypergraph $H' \subseteq H$ such that $\frac{g(v)}{20k} \leq \deg_{H'}(v) \leq f(v)$ for all $v \in V$ with probability at least $1-4/|V|^4$.
\end{theorem}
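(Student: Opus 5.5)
The plan is to reduce to \Cref{thm:main-intro} exactly as in the covering analysis of \Cref{sec:submodular}, but without the submodular objective.

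\textbf{Encoding the constraints.} Let the $k$ parts of $V$ be $V_1,\dots,V_k$. Every hyperedge contains exactly one vertex of each part. For each $i\in[k]$ let $M_i$ be the partition matroid on $F$ whose blocks are the stars $\delta(v)=\{e\ni v\}$, $v\in V_i$, with capacity $f(v)$. A set $X\subseteq F$ is then independent in all $M_i$ if and only if $\deg_X(v)\le f(v)$ for every $v$. Since the problem is feasible, fix a feasible $H'$ and let $x=\mathbf 1_{E(H')}$. Feasibility of the LP
\[
\{x\in[0,1]^F:\ g(v)\le x(\delta(v))\le f(v)\ \forall v\}
\]
already suffices; we compute a point $x$ in it in polynomial time. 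Then $x\in\bigcap_i P(M_i)$, because the partition matroid polytope is exactly described by the block constraints together with $0\le x\le 1$.

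\textbf{Rounding and the upper bound.} Run the AW CRS on $x$ and $M_1,\dots,M_k$, and output $H'=(V,S)$. Since $S$ is independent in every $M_i$, the upper bound $\deg_{H'}(v)\le f(v)$ holds deterministically for all $v$. No probabilistic packing argument is needed.

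\textbf{Lower bound.} Fix $v$ and apply \Cref{thm:main-intro} with $a_e=\mathbf 1[e\ni v]\in[0,1]$. Then $Q=x(\delta(v))\ge g(v)$ and $a(S)=\deg_{H'}(v)$. Choose $\delta=1/10$. The threshold is
\[
\left(\tfrac15-\tfrac1{10}\right)\frac{Q}{k+1}=\frac{Q}{10(k+1)}\ge\frac{g(v)}{20k},
\]
using $k+1\le 2k$. So the failure event $\deg_{H'}(v)<g(v)/(20k)$ is contained in the event bounded by the theorem. Its probability is at most
\[
4\exp\!\left(-\frac{Q}{300(k+1)^3}\right)\le 4\exp\!\left(-\frac{g(v)}{2400k^3}\right),
\]
using $(k+1)^3\le 8k^3$. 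With $g(v)\ge\alpha k^3\log|V|$ and $\alpha$ chosen large enough (for example $\alpha=2400\cdot 5$, with natural logarithm), this is at most $4|V|^{-5}$.

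A union bound over the $|V|$ vertices gives total failure probability at most $4|V|^{-4}$.

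\textbf{Main obstacle.} The only real subtlety is bookkeeping. First, one must confirm that the partition-matroid polytopes with capacities $f(v)$ are exactly the degree-upper-bound polytope, so that $x$ is a valid AW input. Second, the constants must be tracked: the factor $1/5-\delta$ of the theorem together with $k+1\le 2k$ should give the stated $1/(20k)$ guarantee, and $\alpha$ must be fixed to absorb the $(k+1)^3$ term and the factor $1/\delta^2$.
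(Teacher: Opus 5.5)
Your proof is correct, and it is essentially the paper's argument. The paper uses the same partition-matroid encoding of the upper bounds and then invokes \Cref{thm:submax_main} with a zero objective and $\eps = 0.1$; the covering part of that theorem is exactly \Cref{thm:main-intro} plus a union bound, so you have simply unwrapped that black box. Going direct lets you skip the continuous-greedy step and the $(1-\eps/2)$-scaling. It also gives the $4|V|^{-4}$ failure bound cleanly: the stated guarantee of \Cref{thm:submax_main} carries a $1/(p+2)^4$ packing term, which the paper's route must implicitly argue away because there are no packing constraints here.
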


\begin{proof}\label{thm:(g,f)_main}
We will use partition matroids to encode the upper bound constraints and then apply \Cref{thm:submax_main}. Let $V_1, V_2, \dots, V_k$ be the parts of $V$. Define partition matroids $M_1, M_2, \dots, M_k$ on common ground set $F$, where $M_i$ has a part for each vertex $v \in V_i$ with capacity $f(v)$ containing the hyperedges incident to $v$ in $H$ for all $i \in [k]$. Then the feasible region of subhypergraphs $H'$ satisfying the degree conditions is given by

\[ \left\{
x \in \{0,1\}^F :
x \in P(M_i) \quad \forall i \in [k],\;
\sum_{e \ni v} x_e \ge g(v) \quad \forall v \in V
\right\} \]

By assumption, this feasible region is non-empty, and the $c = |V|$ covering constraints are given by $Cx \geq d$ where $C \in [0, 1]^{c \times F}$ is a nonnegative matrix. Further, picking $\eps = 0.1$, $\alpha = 12000$, and observing $(k+1)^3 \leq 4k^3$ for $k \geq 2$ yields that $d \in \mathbb{R}^c$ is a nonnegative vector where $d_i \geq \alpha k^3 \log |V| \geq 15(k+1)^3 \eps^{-2} \log (c+2)$ for all $i \in [c]$. Thus we can apply \Cref{thm:submax_main} with an arbitrary objective (ex. $0^T x$), and doing so yields a subset of hyperedges $S \in \mathcal{I}_i$ for all $i \in [k]$ s.t. $C\mathbf{1}_S \geq \frac{c_{\eps}}{k+1} d = \frac{(1/5-\eps)(1-\eps/2)}{k+1}d 
= \frac{(0.1)(0.95)}{k+1}d \geq \frac{0.095}{1.5k}d \geq \frac{d}{20k}$ via $k+1 \leq 1.5k$ for $k \geq 2$. Taking the subhypergraph $H' = H[S]$ induced by $S$ completes the proof.
\end{proof}

\end{document}